\def\ps@headings{%
\def\@oddhead{\mbox{}\scriptsize\rightmark \hfil \thepage}%
\def\@evenhead{\scriptsize\thepage \hfil \leftmark\mbox{}}%
\def\@oddfoot{}%
\def\@evenfoot{}}
\newtheorem{theorem}{Theorem}
\newtheorem{defn}[theorem]{Definition}
\begin{document}
\begin{frontmatter}
\title{On Hardening Problems in Critical Infrastructure Systems}
\author{Joydeep Banerjee, Kaustav Basu, Arunabha Sen}
\address{School of Computing, Informatics, \& Decision Systems Engineering}
\address{Arizona State University}
\address{Email: jbanerje@asu.edu, kbasu2@asu.edu, asen@asu.edu}

\begin{abstract}
The power and communication networks are highly interdependent and form a part of the critical infrastructure of a country. Similarly, dependencies exist within the networks itself. It is essential to have a model which captures these dependencies precisely. Previous research has proposed certain models but these models have certain limitations. The limitations of the aforementioned models have been overcome by the Implicative Interdependency Model, which uses Boolean Logic to denote the dependencies. This paper formulates the Entity Hardening problem and the Targeted Entity Hardening problem using the Implicative Interdependency Model. The Entity Hardening problem describes a situation where an operator, with a limited budget, must decide which entities to harden, which in turn would minimize the damage, provided a set of entities fail initially. The Targeted Entity Hardening problem is a restricted version of the Entity Hardening problem. This problem presents a scenario where, the protection of certain entities is of higher priority. If these entities were to be nonfunctional, the economic and societal damage would be higher when compared to other entities being nonfunctional. It has been shown that both problems are NP-Complete. An Integer Linear Program (ILP) has been devised to find the optimal solution. A heuristic has been described whose accuracy is found by comparing its performance with the optimal solution using real-world and simulated data.

\end{abstract}

\begin{keyword}
Critical Infrastructure, Entity Hardening, Targeted Entity Hardening, Power Network, Communication Network, Dependency, Interdependency.
\end{keyword}

\end{frontmatter}

\section{Introduction}
\label{Intro}
Critical Infrastructures of a nation like Power, Communication, Transportation Networks etc. exhibit strong intra-network and inter-network dependencies to drive their functionality. The symbiotic relationship that exists between Power and Communication Network provides an example of the inter-network dependency. To elaborate this further, consider entities in either network. The Supervisory Control and Data Acquisition System (SCADA) is an integral entity in the power network controlling the electricity generation and power flow in the power grid. These controls are essentially carried out by signals generated from the communication network. Similarly, every entity in the communication network requires power to be operational. These dependencies cause failure in any of these two networks to have its impact on the other which may eventually lead to a cascade of failures. 

Additionally, intra-network dependencies exist as well in a critical infrastructure. In an abstract level, a power network is composed of the following entities --- Generation Bus, Load Bus, Neutral Bus and Transmission Lines. When a transmission line trips, the power flowing through the transmission lines needs to be redirected to satisfy load demand of the load buses. This may cause the power flow in some other transmission line to go beyond its line capacity causing it to trip. Eventually, these failures might result in a cascade of trippings/failures resulting in a blackout. Cascading failures in power and/or communication network due to intra/inter dependencies have disastrous effects as seen in power blackouts which occurred in New York (2003) ~\cite{andersson2005causes}, San Diego (2011) ~\cite{sandiego} and India (2012) ~\cite{tang2012analysis}. Thus modeling these dependencies is critical in understanding and preventing such failures which might be triggered by natural as well as man-made attacks.

As noted, modeling these complex dependencies and analysis of failure in these infrastructures are considered to be highly important. A number of models have been proposed that capture these kind of dependencies \cite{Bul10}, \cite{Gao11}, \cite{Sha11}, \cite{Ros08}, \cite{Zha05}, \cite{Par13}, \cite{Ngu13}, \cite{Zus11}. However, each of these models has their own shortcomings in bringing out the complex nature of the dependencies that might exist \cite{Banerjee2014Survey}. Authors in \cite{sen2014identification} bring out the need to address the complex dependency which can be explained through the following example. Let $a_x$ (which can be a generator, substation, transmission line etc.) be a power network entity and $b_w,b_y,b_z$ (which can be a router, end system etc.) a set of communication network entities. Consider the dependency where the entity $a_x$ is operational if (i) entities $b_w$ and (\emph{logical AND}) $b_y$ are operational, or (\emph{logical OR}) (ii) entity $b_z$ is operational. Models in \cite{Bul10}, \cite{Gao11}, \cite{Sha11}, \cite{Ros08}, \cite{Zha05}, \cite{Par13}, \cite{Ngu13}, \cite{Zus11} fails to capture this kind of dependency. Motivated by these findings and limitations of the existing models, the authors in \cite{sen2014identification} proposed a \emph{Boolean logic based dependency} model termed as \emph{Implicative Interdependency Model} (IIM). For the example stated above, the dependency of $a_x$ on $b_w,b_y,b_z$ can be represented as $a_x \leftarrow b_w b_y + b_z$. This equation representing the dependency of an entity is termed as \textit{Interdependency Relation} (IDR). Using this model a number of problems were studied on interdependent power and communication infrastructure system\cite{sen2014identification}, \cite{das2014root}, and \cite{das2015smallest}. 

In this paper, we restrict to intra-network dependencies in Power Network and inter-network dependency in Power-Communication network and utilize the IIM to analyze and solve two important problems pertaining to critical infrastructures. For an existing critical infrastructure system, an operator would have the capability to measure the extent of failure when a certain set of entities fail initially. Consider a scenario where an operator identifies a set of critical entities which when failed initially would cause the maximum damage. In an ideal case, there would be enough resources available to support those critical entities from initial failure. However, if the availability of resources is a constraint, then an operator might have to choose entities which when supported would minimize the damage. We define the entities to support as the entities to harden and the problem as the \emph{Entity Hardening Problem}. An entity $x_i$ when hardened is resistant to both \emph{initial} and \emph{induced} failure (failing of entities in the cascading process after the initial failure). In the physical world, an entity can be hardened with respect to cyber attacks (say) by having a strong firewall. Similarly some entities can be hardened by  --- (a) strengthening their physical structures for protection from natural disaster, (b) placing redundant entity $a'$ for an entity $a$ which can operate when $a$ fails, (c) increasing physical limits of the entity (maximum power flow capacity of the transmission line, maximum generation capacity of a generator bus). There exist multiple such ways to harden an entity from different kind of failures. Even though there may be circumstances under which an entity cannot be hardened, in this paper we relax such possibilities and assume that there always exist a way to harden a given entity. Hardening entities can prevent cascading failures caused by some initial failure. Thus this results in protecting a set of entities including the hardened entities from an initial failure trigger. Using these definitions the \emph{Entity Hardening Problem} finds a set of $k$ entities that should be hardened (with $k$ being the resource constraint) in an intra-network or inter-network critical infrastructure system that protects the maximum number of entities from failure when a set of $K$ entities fail initially. 

The second problem, \emph{Targeted Entity Hardening}, discussed in this article is a restricted version of the \emph{Entity Hardening Problem}. For an intra-dependent power network or interdependent power and communication network, certain entities might have higher priority to be protected. There might exist entities whose non-functionality poses higher economic or societal damage as compared to other entities. For example, power and communication network entities corresponding to office buildings running global stock exchanges, the U.S. White House, transportation sectors like airports etc. presumably are more important to be protected. Let $F$ denote the failed set of entities (including initial and induced failure) when a set of $K$ entities fail initially. We define a set $P$ (with $P \subseteq F$) of entities which have a higher priority to be protected. The \emph{Targeted Entity Hardening} problem finds the minimum set of entities which when hardened would ensure that none of the entities in set $P$ fail. 

This paper is more inclined towards finding and analyzing the solution of the two problems discussed. Even though procedures are described to generate dependency equations, they are primarily intended to perform a comparative analysis of the provided solutions to the problems. For an intra-dependent or inter-dependent critical infrastructure(s) which can be modeled through IIM, the broader idea is to use the solutions for different decision-making tasks. The paper is structured as follows. The motivation behind IIM along with a formal description is provided in Section \ref{IIM_Section}. The two hardening problems are more formally defined along with their Decision and Optimization Versions in Section \ref{ProbForm}. The computational complexity of the problems along with the solutions to some restricted cases is provided in Section \ref{CompAna}. As both the problems are NP-complete, we provide an optimal Integer Linear Program (ILP) solutions to them along with sub-optimal Heuristics in Section \ref{Solutions}. In Section \ref{ExpRes} we describe a procedure to generate the dependency equations of the IIM model for intra-dependent power network along with a rule defined approach to generate the same for interdependent power-communication network. For power network, different bus system data are used to generate the dependency equations which are obtained from MatPower \cite{zimmerman2011matpower}. For interdependent Power-Communication network we used real world data of Maricopa County, Arizona, USA obtained from Geotel (\emph{http://www.geo-tel.com}) for communication network and Platts (\emph{http://www.platts.com}) for power network. In the same section, we provide comparative studies of the heuristic to optimal ILP solutions for both the problem using the generated dependency equations. 

\section{Related Work}
In the last few years, there has been considerable activity in the research community to study Critical Infrastructure Interdependency. One of the earliest studies on robustness and resiliency issues related to Critical Infrastructures of the U.S. was conducted by the Presidential Commission on Critical Infrastructures, appointed by President Clinton in 1996 \cite{clintonexecutiveorder1996}. Rinaldi et al. are among the first group of researchers to study interdependency between Critical Infrastructures and to propose the use of complex adaptive systems as models of critical infrastructure interdependencies \cite{rinaldi2004modeling}, \cite{rinaldi2001identifying}. Pederson et al. in \cite{pederson2006critical}, provided a survey of Critical Infrastructure Interdependency modeling, undertaken by U.S. and international researchers. Motivated by the power failure event in Italy 2003, Buldyrev et al. in \cite{Bul10}, proposed a graph-based interdependency model, where the number of nodes in the power network was assumed to be the same as the number of nodes in the communication network, and in addition there existed a one-to-one dependency between a node in the power network to a node in the communication network. The authors opine in a subsequent paper \cite{Sha11} that the assumption regarding one-to-one dependency relationship is unrealistic and a single node in one network may be dependent on multiple nodes in the other network. Lin et al. presented an event driven co-simulation framework for interconnected power and communication networks in \cite{lin2011power}, \cite{lin2012geco}. A game theoretic model for a multi-layer infrastructure networks using flow equilibrium was proposed in \cite{Zha05}. Security of interdependent and identical Networked Control System (NCS) was studied in \cite{amin2013security}, where each plant was modeled by a discrete-time stochastic linear system, with systems controlled over a shared communication network. The importance of simultaneously considering power and communication infrastructures was highlighted in \cite{muller2016interfacing}. The results of a systematic study of human initiated cascading failures in critical interdependent societal infrastructures were reported in \cite{barrett2012human}. Focusing on the blackout of the Polish power grid, the authors in \cite{pfitzner2011controlled} studied the impact of the order of tripping of overhead lines on the severity of the failure. Analyzing failure in smart grid under targeted initial attack was studied in \cite{ruj2014analyzing}. The effect of cyber (communication) and power network dependencies in smart grid was studied in \cite{falahati2012reliability} for reliability assessments. Recovery of information of the failed entities in a power grid after a failure event was studied in \cite{soltan2015joint}. As described in Section \ref{Intro}, the models used by each of the papers have shortcomings to analyze different aspects of vulnerability in critical infrastructures. IIM is used to overcome such limitations and is utilized to address the Entity Hardening and Targeted Entity Hardening problem in this paper. 
\section{Implicative Interdependency Model}
\label{IIM_Section}
The need for a model to capture the complex intra and inter network dependencies is elaborated through a descriptive example of interdependent power and communication network. Consider the system shown in Figure \ref{fig:dependentPowerCom} where the power network entities such as generators, transmission lines and substations are denoted by $a_0$ through $a_{11}$ and communication entities such as GPS transmitters and satellites are denoted by $b_0$ through $b_4$. The Smart Control Center (SCC) is represented by the variable $c_0$ as it is a part of both the power and the communication network.  For the SCC to be operational, it must receive electricity either from the generator via the different power grid entities, or from the battery. Similarly, the functioning of the generator will be affected if it fails to receive appropriate control signals from the SCC. The mutual dependency between the generator and the SCC can be expressed in terms of two implicative dependency relations --- (i) $a_{11} \leftarrow b_4 c_0$, (ii) $c_0 \leftarrow (b_0 b_3 (b_1 + b_2)) (a_0 a_1 + a_2 a_3 a_4 a_5 a_6 a_7 a_8 a_9 a_{10} a_{11})$. It may be noted that the SCC will not be operational if it does not receive electric power produced at the generating station and carried over the power grid entities to the SCC and its battery backup also fails. This dependency can be expressed by the implicative relation $c_0 \leftarrow a_0 a_1 + a_2 a_3 a_4 a_5 a_6 a_7 a_8 a_9 a_{10} a_{11}$  implying that $c_0$ will be operational (i) if entities $a_0$ and $a_1$ are operational, or (ii) if entities $a_2$ through $a_{11}$ are operational. However, the SCC will also not be operational if it does not receive data from the communication system (IEDs, satellites, etc.). This dependency can be expressed by the relation $c_0 \leftarrow (b_0 b_3 (b_1 + b_2))$. This implies that $c_0$ will be operational (i) if entities $b_1$ or $b_2$ is operational, and (ii) if entities $b_0$ and $b_3$ are operational. Combining the dependency of the SCC on the power grid and the communication network, the consolidated dependency relation can be expressed as $c_0 \leftarrow (b_0 b_3 (b_1 + b_2)) (a_0 a_1 + a_2 a_3 a_4 a_5 a_6 a_7 a_8 a_9 a_{10} a_{11})$. Likewise, the dependency relation for the generating station can be expressed as $a_{11} \leftarrow b_4 c_0$, implying that the generating station will not be operational unless it receives appropriate signals from the SCC $c_0$, carried over wired or wireless link $b_4$. These two implicative relations demonstrate that dependency (or interdependency) is a complex combination of conjunctive and disjunctive terms. We term the model capturing this complex dependencies and interdependencies as \textit{Implicative Interdependency Model}.

\begin{figure}[ht]
    \centering
    \includegraphics[width=8cm,height=8cm]{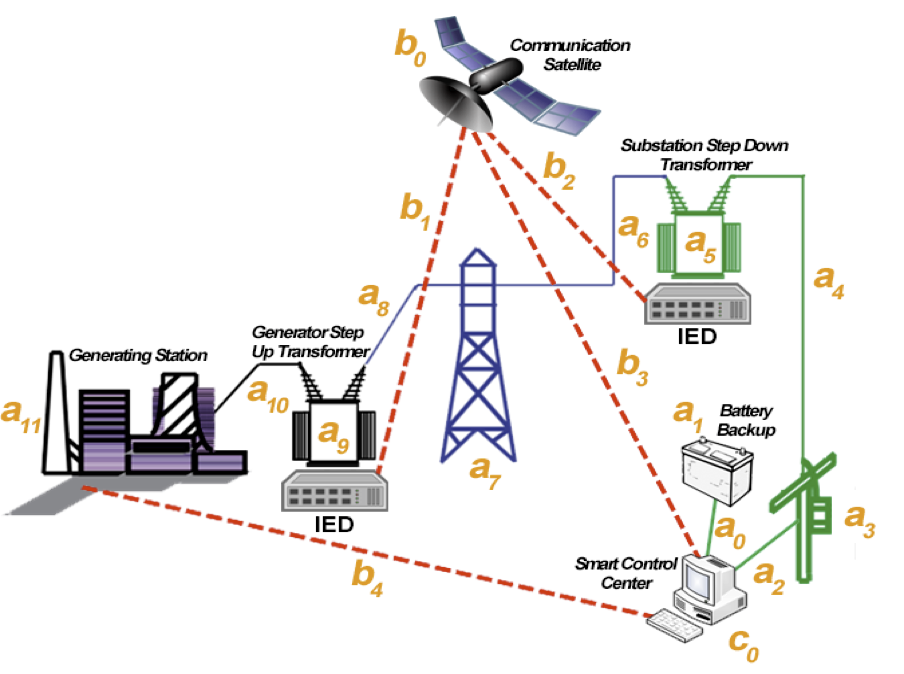}
    \caption{Example of Power - Communication Infrastructure Interdependency}
    \label{fig:dependentPowerCom}
\end{figure}

In the IIM an intra-network or inter-network critical infrastructure system is represented by $\mathcal{I}(E,\mathcal{F}(E))$, where $E$ is the set of entities and $\mathcal{F}(E)$ is the set of dependency relations. Throughout this paper, an intra-dependent critical infrastructure or interdependent critical infrastructure is termed as \emph{system} denoted by $\mathcal{I}(E,\mathcal{F}(E))$. The dynamics of the model is explained through an example. Consider sets $A$ and $B$ (with $E=A \cup B$) representing entities in power and communication network (say) with $A = \{a_1,a_2,a_3\}$ and $B = \{b_1,b_2,b_3,b_4\}$ respectively. The function  $\mathcal{F}(E)$ giving the set of dependency equations are provided in Table \ref{tbl:example1idr}.  In the given example, an IDR $b_3 \leftarrow a_2 + a_1 a_3$ implies that entity $b_3$ is operational if entity $a_2$ \emph{or} entity $a_1$ \emph{and} $a_3$ are operational. In the IDRs each conjunction term e.g. $a_1 a_3$ is referred to as \emph{minterms}. 
\begin{table}[ht!]
\parbox{.4\linewidth}{
		\centering
		\begin{tabular}{|l||l|}  \hline
			{\bf Power Network} & {\bf Comm. Network} \\ \hline
			$a_1\leftarrow b_2 $ & $b_1 \leftarrow a_1 + a_2$ \\ \hline
			$a_2 \leftarrow  b_2$ & $b_2 \leftarrow a_1a_2$ \\ \hline
			$a_3 \leftarrow b_4$ & $b_3 \leftarrow a_2 + a_1 a_3$ \\ \hline
			$--$ & $b_4 \leftarrow a_3$ \\ \hline
		\end{tabular}
		\caption{IDRs for the constructed example}
		\protect\label{tbl:example1idr}
	}
	\hfill
\parbox{.4\linewidth}{
		\centering
			\begin{tabular}{|c|c|c|c|c|c|c|c|}  \hline
			\multicolumn{1}{|c|}{Entities} & \multicolumn{7}{c|}{Time Steps ($t$)}\\
			\cline{2-8} & $0$ & $1$ & $2$ & $3$ & $4$ & $5$ & $6$ \\\hline \hline
			$a_1$ & $0$ & $0$ & $1$ & $1$ & $1$ & $1$ & $1$ \\ \hline
			$a_2$ & $1$ & $1$ & $1$ & $1$ & $1$ & $1$ & $1$ \\ \hline
			$a_3$ & $1$ & $1$ & $1$ & $1$ & $1$ & $1$ & $1$ \\ \hline
			$b_1$ & $0$ & $0$ & $0$ & $1$ & $1$ & $1$ & $1$ \\ \hline
			$b_2$ & $0$ & $1$ & $1$ & $1$ & $1$ & $1$ & $1$ \\ \hline
			$b_3$ & $0$ & $1$ & $1$ & $1$ & $1$ & $1$ & $1$ \\ \hline
			$b_4$ & $0$ & $1$ & $1$ & $1$ & $1$ & $1$ & $1$ \\ \hline
		\end{tabular}
		\caption{Failure cascade propagation when entities $\{a_2, a_3\}$ fail at time step $t=0$. A value of $1$ denotes entity failure, and $0$ otherwise}
		\protect\label{tbl:example1cascade}
	}
\end{table}

Initial failure of entities in $A \cup B$ would cause the failure to cascade until a steady state is reached. As noted earlier, the event of an entity failing after the initial failure is termed as \textit{induced failure}. Failure in IIM proceeds in unit time steps with \textit{initial failure} starting at time step $t=0$. Each time step captures the effect of entities killed in all previous time steps. We demonstrate the cascading failure for the interdependent network outlined in Table \ref{tbl:example1idr} through an example. Consider the entities $a_2$ and $a_3$ fail at time step $t=0$. Table \ref{tbl:example1cascade} represents the cascade of failure in each subsequent time steps. In Table \ref{tbl:example1cascade}, for a given entity and time step, $'0'$ represents the entity is operational and $'1'$ non operational. In this example a steady state is reached at time step $t=3$ when all entities are non operational. IIM also assumes that the dependent entities of all failed entities are killed immediately at the next time step. For example at time step $t=1$ entities $a_2$, $a_3$, $b_2$, $b_3$ and $b_4$ are non operational. Due to the IDR $a_1 \leftarrow b_2$ entity $a_1$ is killed immediately at time step $t=2$. At $t=3$ the entity $b_1$ is killed due to the IDR $b_1 \leftarrow a_1 + a_2$ thus reaching the steady state. 

As noted earlier The model captures the cascading failure that propagates through the entities on an event of \textit{initial failure}.  Consider $E= A \cup B$ with $A$ and $B$ representing entities in two separate critical infrastructures. The cascading failure process is shown diagrammatically in Figure \ref{fig:cascade} with sets $A_{d}^0 \subset A$ and  $B_{d}^0 \subset B$ failing at $t=0$. Accordingly, cascading failure in these systems can be represented as a \textit{closed loop} control system shown in Figure \ref{fig:fixedPoint}. The steady state after an initial failure is analogous to the computation of \textit{fixed point} of a function  ${\mathcal G}(.)$ such that ${\mathcal G} (A_{d}^p \cup B_{d}^p) = A_{d}^p \cup B_{d}^p$, with steady state reached at $t=p$. It can be followed directly that for an interdependent system with $|E| = m$, any initial failure would cause the system to reach a steady state within $m-1$ time steps. 

\begin{figure}[ht]
\centering
\begin{subfigure}{.5\textwidth}
  \centering
  \includegraphics[width=8cm,height=8cm,keepaspectratio]{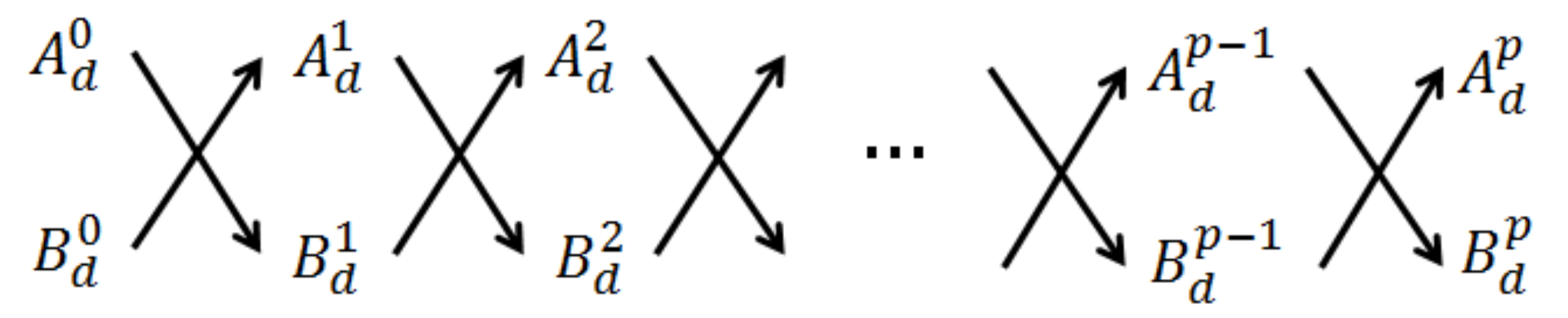}
  \caption{Cascading failures reach steady state after $p$ time steps}
  \label{fig:cascade}
\end{subfigure}%
\begin{subfigure}{.5\textwidth}
  \centering
  \includegraphics[width=8cm,height=8cm,keepaspectratio]{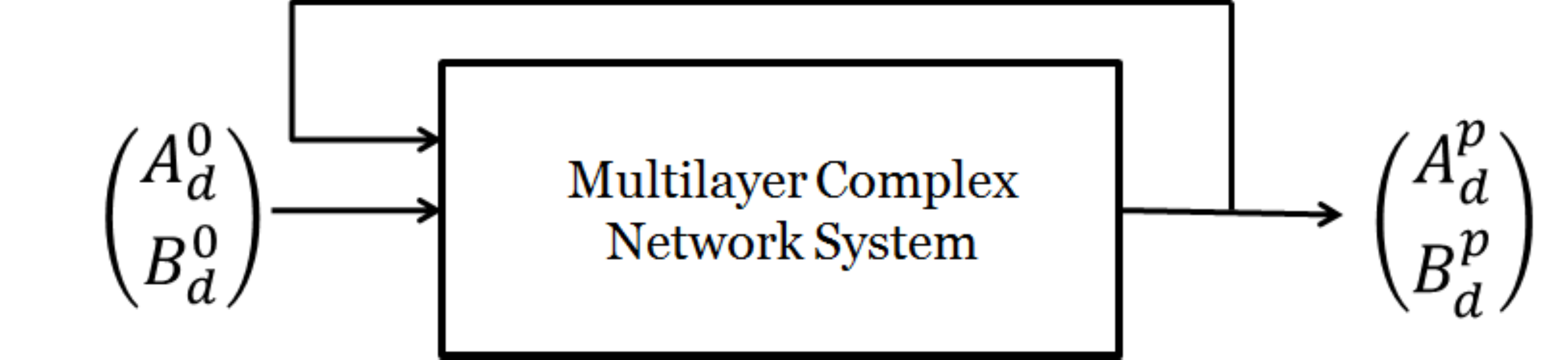}
  \caption{Cascading failures as a fixed point system}
  \label{fig:fixedPoint}
\end{subfigure}
\caption{Cascading Failures in Multi-layered Networks}
\label{fig:cascadeFixedPoint}
\end{figure}

We note some of the challenges in generating the IDRs. The main challenge is an accurate formulation of the IDRs. Two possible ways of doing this would be (i) careful analysis of the underlying infrastructures as in \cite{Zus11}, (ii) consultation with domain experts. In Section \ref{ExpRes} we provide techniques to generate IDR for power network and interdependent power-communication network. However, the underlying assumptions behind the technique pose some limitations on its applicability to the real world problems. The formulation of IDRs from the interdependent network is an ongoing research and the problem is solved under the reasonable assumption that these IDRs can be developed. 

\section{Problem Formulation}
\label{ProbForm}
As discussed in Section \ref{Intro}, an entity when hardened, is protected from both initial and induced failures. With this understanding we formally describe the two hardening problems --- Entity Hardening and Targeted Entity Hardening. It is to be noted both the problems return a set of entities to harden. The approach that should be taken to physically harden the entities rest upon the properties of the entities. 
\subsection{Entity Hardening Problem}
Before stating the problem formally, a brief understanding of entity hardening is provided. Consider the system with set of dependency relations given by Table \ref{tbl:example1idr}. With an initial failure of entities $a_2, a_3$ the subsequent cascading failures is shown in Table \ref{tbl:example1cascade} which fails all the entities in the system. We note three instances where entities  $a_1, a_2$ and $a_3$ are hardened separately with $a_2, a_3$ failing initially. The failure cascade propagation when $a_1, a_2$ and $a_3$ are hardened are shown in Tables \ref{tbl:example1cascadeA}, \ref{tbl:example1cascadeB}, and \ref{tbl:example1cascadeC} respectively. In the tables the cascading failure is shown till $t=3$ because with initial failure of entities $a_2,a_3$ the cascade propagation stops at $t=3$ as seen in Table \ref{tbl:example1cascade}. Hardening entity $a_1$ protect entities $a_1, b_1$ from failure. Similarly, when $a_2$ is hardened it protect $a_1, a_2, b_1, b_2, b_3$ and hardening $a_3$ protect entities $a_3, b_4$. If the hardening budget is $1$ the operator would clearly harden the entity $a_2$ as it protects the maximum number of entities from failure. We now describe the entity hardening problem formally.

\begin{table}[ht!]
\parbox{.3\linewidth}{
		\centering
		\begin{tabular}{|c|c|c|c|c|}  \hline
			\multicolumn{1}{|c|}{Entities} & \multicolumn{4}{c|}{Time Steps ($t$)}\\
			\cline{2-5} & $0$ & $1$ & $2$ & $3$ \\\hline \hline
			$a_1$ & $0$ & $0$ & $0$ & $0$ \\ \hline
			$a_2$ & $1$ & $1$ & $1$ & $1$ \\ \hline
			$a_3$ & $1$ & $1$ & $1$ & $1$ \\ \hline
			$b_1$ & $0$ & $0$ & $0$ & $0$ \\ \hline
			$b_2$ & $0$ & $1$ & $1$ & $1$ \\ \hline
			$b_3$ & $0$ & $1$ & $1$ & $1$ \\ \hline
			$b_4$ & $0$ & $1$ & $1$ & $1$ \\ \hline
		\end{tabular}
		\caption{Failure cascade propagation when entities $\{a_2, a_3\}$ fail at time step $t=0$ and $a_1$ is hardened}
		\protect\label{tbl:example1cascadeA}
	}
	\hfill
\parbox{.3\linewidth}{
		\centering
		\begin{tabular}{|c|c|c|c|c|}  \hline
			\multicolumn{1}{|c|}{Entities} & \multicolumn{4}{c|}{Time Steps ($t$)}\\
			\cline{2-5} & $0$ & $1$ & $2$ & $3$ \\\hline \hline
			$a_1$ & $0$ & $0$ & $0$ & $0$ \\ \hline
			$a_2$ & $0$ & $0$ & $0$ & $0$ \\ \hline
			$a_3$ & $1$ & $1$ & $1$ & $1$ \\ \hline
			$b_1$ & $0$ & $0$ & $0$ & $0$ \\ \hline
			$b_2$ & $0$ & $0$ & $0$ & $0$ \\ \hline
			$b_3$ & $0$ & $0$ & $0$ & $0$ \\ \hline
			$b_4$ & $0$ & $1$ & $1$ & $1$ \\ \hline
		\end{tabular}
		\caption{Failure cascade propagation when entities $\{a_2, a_3\}$ fail at time step $t=0$ and $a_2$ is hardened}
		\protect\label{tbl:example1cascadeB}
	}
	\hfill
	\parbox{.3\linewidth}{
	\centering
		\begin{tabular}{|c|c|c|c|c|}  \hline
			\multicolumn{1}{|c|}{Entities} & \multicolumn{4}{c|}{Time Steps ($t$)}\\
			\cline{2-5} & $0$ & $1$ & $2$ & $3$ \\\hline \hline
			$a_1$ & $0$ & $0$ & $1$ & $1$ \\ \hline
			$a_2$ & $1$ & $1$ & $1$ & $1$ \\ \hline
			$a_3$ & $0$ & $0$ & $0$ & $0$ \\ \hline
			$b_1$ & $0$ & $0$ & $0$ & $1$ \\ \hline
			$b_2$ & $0$ & $1$ & $1$ & $1$ \\ \hline
			$b_3$ & $0$ & $1$ & $1$ & $1$ \\ \hline
			$b_4$ & $0$ & $0$ & $0$ & $0$ \\ \hline
		\end{tabular}
		\caption{Failure cascade propagation when entities $\{a_2, a_3\}$ fail at time step $t=0$ and $a_3$ is hardened}
		\protect\label{tbl:example1cascadeC}
	}
\end{table}

\noindent
\textbf{{\em The Entity Hardening (ENH) problem}}\\
INSTANCE: Given:\\
(i) A system $\mathcal{I}(E,\mathcal{F}(E))$, where the set $E$ represent the set of entities, and $\mathcal{F}(E)$ the set of IDRs.\\
(ii) The set of $K$ initially failing entities $E'$, where $E' \subseteq E$ \\
(iii) Two positive integers $k, k<K$ and $E_F$. \\ \\
DECISION VERSION: Is there a set of entities $\mathcal{H} = E'', E'' \subseteq E, |\mathcal{H}| \leq k$, such that hardening $\mathcal{H}$ entities results in no more than $E_F$ entities to fail after entities in $E'$ fail at time step $t=0$. \\ \\
OPTIMIZATION VERSION: Find a set of $k$ entities to harden which would maximize the number of protected entities with entities in $E'$ failing initially.

\begin{defn}
\label{Kill}
\textit{$KillSet(S)$ : For an initial failure of set $S$, the set of entities that fail due to induced failure in the cascading process including the entities in set $S$ is denoted by $KillSet(S)$}.
\end{defn}

The following points are to be noted regarding the ENH problem --- (a) the condition $k<K$ is assumed as with $k \ge K$ hardening the $K$ initially failing entities would ensure that there are no induced and initial failure. (b) with $E'$ entities failing initially, the entities to be harden are to be selected from $KillSet(E')$. Hardening entities outside $KillSet(E')$ would not result in protection of any non-hardened entity. 

\subsection{Targeted Entity Hardening Problem}
Qualitatively, for a system $\mathcal{I}(E,\mathcal{F}(E))$ the objective of the Targeted Entity Hardening problem is to choose a minimum cardinality set of entities to harden, with a set of initially failing entities, such that all entities in a given set $P$ are protected from failure. We use the example with dependency equations outlined in Table \ref{tbl:example1idr} to describe the Targeted Entity Hardening Problem with $P = \{b_4\}$. With $\{a_2, a_3 \}$ being the two entities failing initially, hardening entity $a_2$ (with $a_3$ failing) would prevent failure of entities $a_1,a_3, b_1, b_1, b_3$. Similarly, hardening the entity $a_3$ (with $a_2$ failing) would prevent the failure of entity $b_4$. Even though hardening $a_2$ prevent failure of more entities than hardening $a_3$, owing to the problem description $a_3$ has to be hardened which is a solution to the Targeted Entity Hardening problem in this scenario. It is to be noted that other entities might also be protected from failure when a set of entities are hardened to protect a given set of entities. The Targeted Entity Hardening problem is formally stated below accompanied with a descriptive diagram provided in Figure \ref{fig:figTargetHard} (in the figure direct failure means initial failure) --- \\ \\
\textbf{{\em The Targeted Entity Hardening (TEH) problem}}\\
INSTANCE: Given:\\
(i) A system $\mathcal{I}(E,\mathcal{F}(E))$, where the set $E$ represent the set of entities, and $\mathcal{F}(E)$ is the set of IDRs.\\
(ii) The set of $K$ entities failing initially $E'$, where $E' \subseteq E$. \\
(iii) The set $F \subseteq E$ contains all the entities failed due to initial failure of $E'$ entities i.e. $KillSet(E')$ \\
(iv) A positive integer $k$ and $k < K$. \\ 
(v) A set $P \subseteq F$. \\ \\
DECISION VERSION: Is there a set of entities $H = E'' \subseteq E, |H| \leq k$, such that hardening $H$ entities would result in protecting all entities in the set $P$ after entities in $E'$ fail at the initial time step. \\ \\
OPTIMIZATION VERSION: Find the minimum set of entities in $E$ to harden that would result in protecting all entities in the set $P$ after entities in $E'$ fail at the initial time step. 

\begin{figure}[h]
    \centering
    \includegraphics[width=7cm,height=7cm,keepaspectratio]{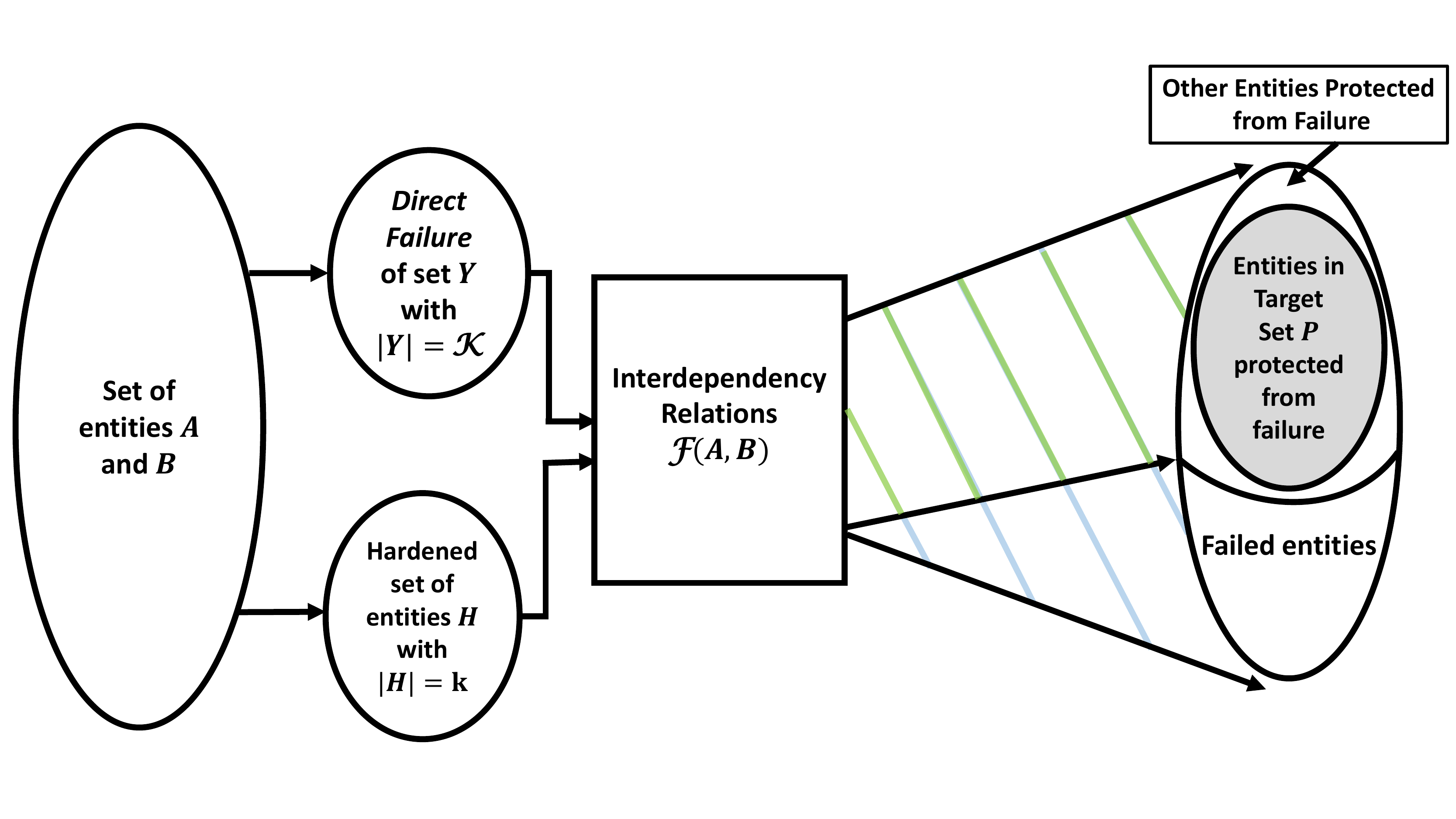}
    \caption{Pictographic description of the Targeted Entity Hardening problem}
    \label{fig:figTargetHard}
\end{figure}

\section{Computational Complexity Analysis}
\label{CompAna}
The computational complexity for both the problems are provided in this section. The problems are proved to be NP-complete. Additionally, approximate and polynomial solutions to few subcases are provided. The subcases impose restrictions on the IDRs and the solutions can be applied to interdependent systems whose dependency equations fall within the definition of the restriction. 

\subsection{Entity Hardening Problem}
We prove that the ENH problem is NP-complete in Theorem \ref{CaseIIE}. Using the results of Theorem \ref{CaseIIE} an in-approximability bound of the problem is provided in Theorem \ref{inapxC2}.
\begin{theorem}{} \label{CaseIIE}The ENH Problem is NP Complete 
\end{theorem}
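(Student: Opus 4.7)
My plan is to prove NP-completeness in two stages: verification in NP, followed by a polynomial-time reduction from a known NP-complete problem. For NP membership, I would take the hardening set $\mathcal{H}$ as the certificate and simulate the cascade: at $t=0$ the failed set is $E' \setminus \mathcal{H}$, and at each subsequent step every IDR is evaluated against the current failed set, with any non-hardened entity whose IDR evaluates to false being added. Since the excerpt already observes that steady state is reached within $|E|-1$ time steps and each step scans $|\mathcal{F}(E)|$ IDRs, the simulation is polynomial; one then compares the final $|KillSet(E')|$ against $E_F$.

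For NP-hardness I would reduce from Set Cover, since the disjunctive structure of the IDRs pairs naturally with covering. Given a Set Cover instance $(\mathcal{S} = \{S_1,\ldots,S_n\},\, U = \{u_1,\ldots,u_m\},\, k)$ with $k < n$, build the system $\mathcal{I}(E,\mathcal{F}(E))$ on entities $E = \{s_1,\ldots,s_n\} \cup \{u_1,\ldots,u_m\}$, introducing one IDR per universe element, $u_j \leftarrow \sum_{i\,:\,u_j \in S_i} s_i$. Set $E' = \{s_1,\ldots,s_n\}$ (so $K = n$), use the same hardening budget $k$, and choose the failure threshold $E_F = n - k$. The construction is clearly polynomial, and the constraints $k < K$ and $E_F > 0$ are respected.

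Correctness in the forward direction is immediate: a set cover of size at most $k$, padded if necessary to size exactly $k$, translates to hardening the corresponding $s_i$'s; each $u_j$ retains at least one operational witness in its IDR, so only the $n - k$ non-hardened $s_i$'s fail, matching $E_F$. For the reverse direction, given a hardening $\mathcal{H}$ with $|\mathcal{H}| \leq k$ achieving at most $n - k$ failures, I would split $\mathcal{H}$ into $\mathcal{H}_S$ (of size $a$) and $\mathcal{H}_U$ (of size $b$), and let $C$ be the set of $u_j$'s covered by $\mathcal{H}_S$. The $n - a$ non-hardened $s_i$'s necessarily fail, and the remaining failures are the $u_j$'s not in $\mathcal{H}_U \cup C$. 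Combining $(n-a) + (m - |\mathcal{H}_U \cup C|) \leq n - k$, $a + b \leq k$, and $|\mathcal{H}_U \cup C| \leq m$ forces $a = k$, $b = 0$, and $\mathcal{H}_U \cup C = U$, so $\mathcal{H}_S$ is a valid set cover of size $k$.

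The main obstacle I anticipate is exactly this reverse direction, because an adversary can in principle spend budget hardening $u$-entities directly rather than covering them through the $s$-layer. The tight choice $E_F = n - k$ is what penalises this: every hardened $u_j$ forfeits budget that could have kept an $s_i$ (and through it, potentially many $u$-entities) operational, so any nonzero $b$ violates the failure bound. Should this accounting prove too fragile, a clean fallback is to reduce from Vertex Cover, letting each edge become a $u$-entity whose IDR depends on its two endpoint $s$-entities; this makes the comparison ``hardening an $s_i$ dominates hardening a $u_j$'' immediate and removes any ambiguity about which entity type is worth hardening.
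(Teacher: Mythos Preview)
Your proof is correct and the counting argument in the reverse direction is clean: the tight choice $E_F = n-k$ indeed forces all budget onto the $s$-layer, so no fallback to Vertex Cover is needed.

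However, your reduction is \emph{not} the one the paper uses for Theorem~\ref{CaseIIE}. The paper reduces from the \textsc{Densest $p$-Subhypergraph} problem, building \emph{conjunctive} IDRs of the form $a_j \leftarrow b_m b_n b_q$ (one per hyperedge), failing all of $B$ initially, and asking whether hardening $p$ vertices of $B$ can keep at least $M$ hyperedge-entities alive. Your Set Cover reduction with \emph{disjunctive} IDRs is in fact exactly the construction the paper deploys later, in Theorem~\ref{CaseIII}, to establish hardness of the restricted case where every IDR is a disjunction of singletons. Both routes prove NP-completeness of the general ENH problem, but they buy different things: the paper's hypergraph reduction immediately yields the $2^{-\log(n)^{\lambda}}$ inapproximability of Theorem~\ref{inapxC2} by inheriting the hardness of \textsc{Densest $p$-Subhypergraph}, whereas your Set Cover route is more elementary and, via the connection to \textsc{Maximum Coverage}, naturally leads instead to the $(1-1/e)$-approximation of Theorem~\ref{CaseIIIappx2}. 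Your reverse-direction accounting is also tighter than the paper's treatment of the analogous step in Theorem~\ref{CaseIII}, where the claim that hardening must come from $B$ is asserted rather than derived.
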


\begin{proof}
The Entity Hardening problem is proved to be NP complete by giving a reduction from the Densest $p$-Subhypergraph problem \cite{hajiaghayi2006minimum}, a known NP-complete problem. An instance of the  Densest $p$-Subhypergraph problem includes a hypergraph $G=(V,E_V)$, a parameter $p$ and a parameter $M$. The problem asks the question whether there exists a set of vertices $|V'| \subseteq V$ and $|V'| \le p$ such that the subgraph induced with this set of vertices has at least $M$ hyperedges that are completely covered. From an instance of the  Densest $p$-Subhypergraph problem we create an instance of the ENH problem in the following way. Consider a system $\mathcal{I}(E,\mathcal{F}(E))$ with $E=A \cup B$, where $A$ and $B$ are entities of two separate critical infrastructures dependent on each other. For each vertex $v_i$ and each hyperedge $e_j$ entities $b_i$ and $a_j$ are added to the set $B$ and $A$ respectively. For each hyperedge $e_j$ with $e_j= \{v_m, v_n, v_q\}$ (say) an IDR of form $a_j \leftarrow b_m b_n b_q$ is created. It is assumed that the value of $K$ is set of $|V|$. The values of $k$ and $E_F$ are set to $p$ and $|V|+|E_V|-p-M$ (where $|A|=|V|$ and $|B|=|E|$) respectively.

In the constructed instance only entities of set $A$ are dependent on entities of set $B$. Additionally the dependency for an entity $a_i$ consists of conjunction of entities in set $B$. Hence for an entity $a_i \in A$ to fail, either it itself has to fail initially or any one of the entity that $a_i$ depends on has to fail. It is to be noted that the entities in set $B$ has no induced failure i.e., there is no cascade. Following from this assertion, with $K=|V'|$, failing entities in $B$ would fail all entities in set $A \cup B$. For this created instance $E'$ is set to $B'$ 

If an entity in set $A$ is hardened then it would have no effect in failure prevention of any other entities. Whereas hardening an entity $b_m \in B$ might result in failure prevention of an entity $a_i \in A$ with IDR $a_j \leftarrow b_m b_n b_q$ provided that entities $b_n, b_q$ are also defended. With $k = p$ (and $K \le |V|=|B|$) it can be ensured that entities to be defended are from set $B'$. 

To prove the theorem, consider that there is a solution to the Densest $p$-Subhypergraph problem. Then there exist $p$ vertices which induces a subgraph which has at least $M$ hyperedges. Hardening the entities $b_i \in B'$ for each vertex $v_i$ in the solution of the Densest $p$-Subhypergraph problem would then ensure that at least $M$ entities in set $A$ are protected from failure. This is because the entities in set $A$ for which the failure is prevented corresponds to the hyperedges in the induced subgraph. Thus the number of entities that fail after hardening $p$ entities is at most $|V|+|E_V|-p-M$, solving the ENH problem. Now consider that there is a solution to the ENH problem. As previously stated, the entities to be hardened will always be from set $B'$. So defending $p$ entities from set $B'$ would result in failure prevention of at least $M$ entities in set $A$ such that $E_F \le |V|+|E_V|-p-M$. Hence, the vertex induced subgraph would have at least $M$ hyperedges completely covered when vertices corresponding to the entities hardened are included in the solution of the Densest $p$-Subhypergraph problem. Hence proved. 
\end{proof}

\begin{theorem} {} \label{inapxC2} For a system $\mathcal{I}(E,\mathcal{F}(E))$ with $n=|E|$ and $\mathcal{F}(E)$ having IDRs of form in the created instance of Theorem \ref{CaseIIE}, the ENH problem is hard to approximate within a factor of $\frac{1}{2^{log(n)^{\lambda}}}$ for some $\lambda >0$. 
\end{theorem}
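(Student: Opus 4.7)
The plan is to show that the reduction constructed in Theorem \ref{CaseIIE} is gap-preserving, so that a known inapproximability result for the Densest $p$-Subhypergraph problem transfers directly to ENH restricted to the IDR family used in that reduction. Specifically, I would invoke the hardness result of Hajiaghayi et al.\ \cite{hajiaghayi2006minimum}, which rules out approximations for Densest $p$-Subhypergraph better than $2^{-(\log n)^{\delta}}$ for some $\delta > 0$ under standard complexity assumptions, and then transport that gap to the constructed ENH instance.

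First I would pin down the value correspondence. In the constructed instance of size $n_{\text{ENH}} = |V| + |E_V|$, no rational algorithm ever hardens an entity of $A$, since such an $a_j$ depends on nothing and protects only itself; hence the optimal hardening set lies in $B$. Identifying $b_i \leftrightarrow v_i$, hardening $H \subseteq B$ with $|H| = p$ protects exactly the $p$ entities in $H$ together with every $a_j$ whose IDR $a_j \leftarrow \prod_{v_i \in e_j} b_i$ is fully satisfied, i.e.\ every hyperedge $e_j \subseteq H$. Consequently the optimum number of protected entities equals $p + M^{\ast}$, where $M^{\ast}$ is the optimum of the Densest $p$-Subhypergraph instance; any $\alpha$-approximation for ENH yields $\alpha(p + M^{\ast}) - p$ fully-covered hyperedges, i.e.\ a ratio $\alpha - (1-\alpha)\,p/M^{\ast}$ for Densest $p$-Subhypergraph.

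The main obstacle is this additive $p$ slack: whenever $M^{\ast}$ is of order $p$, the translation is vacuous. I would handle it by a standard padding/tensoring amplification, taking $t = \mathrm{poly}(n)$ disjoint copies of the Densest $p$-Subhypergraph instance so that both the target value and the budget scale linearly while the additive defect remains $O(p)$, becoming a lower-order term; alternatively one can restrict to the canonical Densest $p$-Subhypergraph gap instances produced by the Hajiaghayi et al.\ construction, where $M^{\ast} = \Omega(|E_V|) \gg p$ in the YES case and $M^{\ast}$ is tiny in the NO case. Either device converts the additive loss into a lower-order multiplicative one.

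Finally I would track the size blow-up: since $n_{\text{ENH}}$ is polynomial in the parameter $n$ of the hardness statement for Densest $p$-Subhypergraph, a factor of $2^{-(\log n)^{\delta}}$ in the source problem translates to $2^{-(\log n_{\text{ENH}})^{\lambda}}$ for some $0 < \lambda < \delta$, absorbing the polynomial overhead into the slightly smaller exponent. Combining gap preservation, amplification, and this size accounting yields the claimed inapproximability of ENH by a factor of $\frac{1}{2^{(\log n)^{\lambda}}}$ for some $\lambda > 0$.
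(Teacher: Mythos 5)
Your overall route is the same as the paper's: reuse the Theorem \ref{CaseIIE} reduction and transport the Hajiaghayi et al.\ hardness of Densest $p$-Subhypergraph to the restricted ENH instances, absorbing the polynomial size blow-up into a smaller exponent $\lambda$. The paper's own proof is a two-line assertion that the restricted ENH problem ``is a special case'' of Densest $p$-Subhypergraph and that the inapproximability therefore carries over; you are more careful, and you correctly isolate the real issue the paper glosses over, namely that the two objectives differ by the additive term $p$ (ENH counts the $p$ hardened entities of $B$ as protected, so $\mathrm{OPT}_{\mathrm{ENH}} = p + M^{\ast}$), which makes a multiplicative guarantee for ENH degrade to $\alpha - (1-\alpha)p/M^{\ast}$ for Densest $p$-Subhypergraph and is vacuous when $M^{\ast} = O(p)$ --- especially so here, where the target ratio $\alpha = 2^{-(\log n)^{\lambda}}$ is subconstant, so one actually needs $M^{\ast} \gg p/\alpha$ for the naive transfer to say anything.

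However, your first proposed repair does not work as stated. Taking $t$ disjoint copies of the hypergraph forces the budget to scale to $tp$ as well, so the additive defect becomes $tp$, not $O(p)$, and the ratio $p/M^{\ast}$ is unchanged (worse, the optimum of the disjoint union under budget $tp$ need not even equal $t M^{\ast}$, since concentrating the budget in one copy can be better). The amplification that does work is asymmetric: duplicate only the hyperedge-side entities, replacing each $a_j$ by $t$ entities $a_j^{(1)},\dots,a_j^{(t)}$ with identical IDRs, while keeping $B$ and the budget $p$ fixed. Then $\mathrm{OPT}_{\mathrm{ENH}} = p + tM^{\ast}$ and an $\alpha$-approximation recovers at least $\alpha M^{\ast} - p/t$ covered hyperedges, so a polynomially large $t$ makes the additive loss negligible whenever $M^{\ast}\geq 1$ and $\alpha \geq 2^{-(\log n)^{\lambda}}$. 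Your alternative device (relying on the Hajiaghayi et al.\ gap instances having $M^{\ast} \gg p$ in the YES case) would also need the quantitatively stronger bound $M^{\ast}\gtrsim p/\alpha$ and is asserted rather than checked; with the entity-duplication fix in place of the disjoint-copy argument, the rest of your proof (value correspondence, restriction of the optimum to $B$, and the size accounting for $\lambda$) goes through.
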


\begin{proof}
The ENH problem with IDRs of form in the created instance of Theorem \ref{CaseIIE} is a special case of the densest $p$-subhypergraph problem. In \cite{hajiaghayi2006minimum} the densest $p-$subhypergraph problem is proved to be inapproximable within a factor of $\frac{1}{2^{log(n)^{\lambda}}}$ ($\lambda >0$). The same result applied to the ENH problem as well. Hence the theorem follows.
\end{proof}

\subsubsection{Restricted Case I: Problem Instance with One Minterm of Size One}
The IDRs of this restricted case have a single minterm of size $1$. This can be represented as $e_{i} \leftarrow e_{j}$, where $e_{i}$ and $e_{j}$ are entities of a system $\mathcal{I}(E,\mathcal{F}(E))$. Algorithm \ref{algCaseI} solves the ENH problem with this restriction optimally in polynomial time utilizing the notion of \emph{Kill Set} defined in Defintion \ref{Kill} with proof of optimality given in Theorem \ref{CaseIE}. 

\begin{algorithm}
\small
	\KwData{A system $\mathcal{I}(E,\mathcal{F}(E))$, set of $K$ entities failing initially $E' , E' \subseteq E$, hardening budget $k$} 
	\KwResult{Set of hardened entities $\mathcal{H}$}
	\Begin{			
		  For each entity $e_i \in E'$ compute the set of kill sets and store it in a set $\mathcal{C}=\{C_{e_1}, C_{e_2}, ..., C_{e_{K}}\}$, where $C_{e_i} = KillSet(e_i)$ \;
		  Set $\mathcal{H}=\emptyset$ \;
			\For  {(i=1; $i \le \mathcal{K}$; i++)}{
					Choose the set $C_{e_k}$ having the highest cardinality from $\mathcal{C}$ \; 
					Update $\mathcal{C} \leftarrow \mathcal{C} \setminus C_{e_k}$ \;
     				\For  {$C_{e_j} \in \mathcal{C}$}{
						Update $C_{e_j} \leftarrow C_{e_j} \setminus C_{e_k}$ \;
					}
					Update $\mathcal{H} \leftarrow \mathcal{H} \cup \{e_k\}$\;
					If all Kill Sets are empty then break \;
			}
			\Return{$\mathcal{H}$}
	}		
\caption{Entity Hardening Algorithm for systems with Restricted Case I type of dependencies}
\label{algCaseI}
\end{algorithm}

\begin{theorem}{} \label{CaseIE} Algorithm \ref{algCaseI} solves the Entity Hardening problem for the Restricted Case I optimally in polynomial time. 
\end{theorem}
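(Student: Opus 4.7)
The plan is to exploit the tree-like structure that Restricted Case~I imposes on the cascade. Since each IDR has the form $e_i \leftarrow e_j$, every entity depends on at most one other entity, and the failure-propagation graph (with a directed edge from $e_j$ to each $e_i$ whose IDR is $e_i \leftarrow e_j$) is a forest; consequently $KillSet(e)$ is precisely the subtree rooted at $e$. Using the fact that each entity therefore has a unique upward dependency chain, I would show that for any two $e_i, e_j \in E'$ the sets $C_{e_i}$ and $C_{e_j}$ are either disjoint or nested: if some $x$ lies in both, then $e_i$ and $e_j$ must both sit on $x$'s unique chain, so one is an ancestor of the other. Hence $\mathcal{C} = \{C_{e_i} : e_i \in E'\}$ is a laminar family.

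I would then argue that restricting the hardening choice to a subset $H \subseteq E'$ loses no optimality. Given any candidate $x \notin E'$ sitting in the failure subtree of some $e \in E'$, replacing $x$ by the closest $E'$-ancestor of $x$ cuts the cascade strictly higher and therefore protects a superset of what hardening $x$ would protect; iterating this swap converts any solution into one drawn entirely from $E'$ without decreasing the number of saved entities. With this reduction, the problem reduces to a maximum coverage instance over the laminar family $\mathcal{C}$ with budget $k$.

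Finally, I would verify that the greedy rule used by Algorithm~\ref{algCaseI} is optimal for maximum coverage on a laminar family. When the largest set $C_{e_k}$ is chosen, no other element of $\mathcal{C}$ can properly contain it, so every remaining $C_{e_j}$ is either disjoint from $C_{e_k}$ (and is left unchanged by the inner-loop subtraction) or a subset of it (and is emptied). The surviving nonempty sets therefore still form a laminar family on a shrunken universe, and an induction on $k$ paired with a standard exchange argument gives optimality. Polynomial running time is immediate, since each $KillSet(e_i)$ is obtained by a single traversal of the failure forest and the outer loop performs at most $k$ iterations of elementary set operations. The main obstacle I foresee is justifying the identification of the ``protection value'' of $H$ with $\lvert \bigcup_{e \in H} C_e \rvert$: one must argue carefully about how the cascade from $E' \setminus H$ interacts with the hardened entities in $H$, and it is precisely the laminar structure of $\mathcal{C}$ that makes this identification clean.
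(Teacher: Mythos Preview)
Your plan tracks the paper's approach closely: both rely on the nested-or-disjoint structure of the kill sets $C_e = KillSet(e)$ for $e \in E'$ and argue that greedily taking the largest remaining set is optimal. The step you yourself flag as ``the main obstacle''---identifying the protection value of $H \subseteq E'$ with $\bigl\lvert \bigcup_{e \in H} C_e \bigr\rvert$---is, however, a genuine gap that laminarity does \emph{not} close. Take the path with IDRs $e_2 \leftarrow e_1$, $e_3 \leftarrow e_2$, $e_4 \leftarrow e_3$, $e_5 \leftarrow e_4$, set $E' = \{e_1, e_3\}$ and $k = 1$. Here $C_{e_1} = \{e_1,\dots,e_5\} \supset C_{e_3} = \{e_3,e_4,e_5\}$ is a (two-element) laminar family, and Algorithm~\ref{algCaseI} hardens $e_1$. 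But hardening $e_1$ protects only $\{e_1,e_2\}$: the entity $e_3 \in E'$ still fails initially and carries the cascade to $e_4,e_5$. Hardening $e_3$ instead protects $\{e_3,e_4,e_5\}$, which is strictly better. So the coverage value $\lvert C_{e_1} \rvert = 5$ grossly overstates what hardening $e_1$ actually achieves, and the greedy choice is wrong on this instance.

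The source of the error is that when two initially failing entities are nested ($C_{e_3} \subset C_{e_1}$), hardening the ancestor $e_1$ does nothing to stop the cascade launched by the descendant $e_3$; the true marginal protection of hardening $e \in E'$ is $C_e$ \emph{minus} the kill sets of all $E'$-descendants of $e$, not $C_e$ itself. Your exchange argument for restricting to $H \subseteq E'$ is fine, and greedy is indeed optimal for maximum coverage over a laminar family---but the problem is not that maximum-coverage instance. The paper's own proof makes the same leap (it simply asserts that picking the super-set element of each class ``clearly'' maximizes protection), so the gap is not yours alone; but as the counterexample shows, the identification you hoped laminarity would deliver cannot be salvaged.
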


\begin{proof}

It is shown in \cite{sen2014identification} that the kill set for all entities in the interdependent network can be computed in $\mathcal{O}(n^3)$ where $n=|E|$. Thus computing the kill sets of $K$ entities would have a time complexity of $\mathcal{O}(K n^2)$. Each update in line 8 would take $\mathcal{O}(n)$ time and hence the total computation of the inner for loop can be done in $\mathcal{O}(K n)$. The outer for loop iterates for $K$ times thus the time complexity of lines $4-9$ is $\mathcal{O}(K^2 n)$. Hence Algorithm \ref{algCaseI} runs in $\mathcal{O}(K n^2)$. 

For two kill sets $C_{e_i}$ and $C_{e_j}$, it can be shown that either $C_{e_i} \cap C_{e_j} = \emptyset$ or $C_{e_i} \cap C_{e_j} = C_{e_i}$ or $C_{e_i} \cap C_{e_j} = C_{e_j}$ \cite{sen2014identification}. Using this assertion the set $E'$ can be partitioned into disjoint subsets $E_{X_1},E_{X_2},..,E_{X_m}$ where kill sets of two entities $e_a, e_b $have no elements in common with $e_a \in E_{X_i}$ and $e_b \in E_{X_j}$ and $ i \ne j$. Additionally, for any given subset of entities $E_{X_z}$ there exist an entity $e_k \in E_{X_z}$ whose kill set is a super set of kill sets of all other entities in $E_{X_z}$. Thus each of the disjoint subset has an entity whose kill set is the super set among all other entities in that subset. Algorithm \ref{algCaseI} chose such an entity in line 5 for every iteration and updates in line 8 would make the kill set of all the remaining entities in the partition to be empty and hence would not be hardened in future iterations. Clearly choosing these entities would globally maximize the total number of protected entities from failure. Hence the Algorithm \ref{algCaseI} is proved to be optimal.
\end{proof}

\subsubsection{Restricted Case II: Problem Instance with an Arbitrary Number of Minterm of Size One}
The IDRs of this restricted case have arbitrary number of minterm of size $1$. This can be represented as $e_{i} \leftarrow \sum^{p}_{q=1} e_{q}$, where $e_{i}$ and $e_{q}$ are entities of a system $\mathcal{I}(E,\mathcal{F}(E))$ and the number of minterms are $p$. The ENH problem with respect to this restricted case is NP-complete and is proved in Theorem \ref{CaseIII}.  We provide an approximation bound for this restricted case of the problem in Theorem \ref{CaseIIIappx2} using the results of Theorem \ref{CaseIII}. The approximation bound uses the notion of \emph{Protection Set} (Definition \ref{PSet}). The Protection Set of an entity can be computed in $\mathcal{O}((n)^2)$ where $n =|E|$ and $m$ are number of minterms.

\begin{theorem}{} \label{CaseIII}The ENH problem for Restricted Case II is NP Complete 
\end{theorem}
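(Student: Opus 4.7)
The plan is to prove the claim in two parts: membership in NP, and NP-hardness via reduction from Maximum $k$-Coverage. Membership is immediate --- given a candidate hardening $\mathcal{H} \subseteq E$ with $|\mathcal{H}| \le k$, simulate the IIM cascade from the initial failure set $E' \setminus \mathcal{H}$ for at most $|E|-1$ time steps; each IDR under Restricted Case II is a disjunction of singletons, so each update runs in time linear in the total IDR length, and the total failure count can be compared against $E_F$ in polynomial time.

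For hardness, I would reduce from Maximum $k$-Coverage: given a universe $U$, a collection $\mathcal{S} = \{S_1, \ldots, S_m\}$ of subsets of $U$, and integers $k < m$ and $M$, decide whether some $k$ members of $\mathcal{S}$ jointly cover at least $M$ elements. Construct $\mathcal{I}(E, \mathcal{F}(E))$ by creating an entity $b_j$ for each $S_j$ and an entity $a_u$ for each $u \in U$, and attaching to each $a_u$ the IDR
\[
a_u \leftarrow \sum_{j : u \in S_j} b_j,
\]
which is a sum of size-$1$ minterms, conforming to Restricted Case II. The $b_j$'s carry no IDRs. Set $E' = \{b_1, \ldots, b_m\}$ (hence $K = m$), hardening budget $k$, and failure target $E_F = m + |U| - k - M$. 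The construction is clearly polynomial.

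To verify correctness, observe that after hardening $H = H_B \cup H_A$ with $H_B \subseteq \{b_j\}$, $H_A \subseteq \{a_u\}$, and $|H_B| + |H_A| \le k$, an $a_u$ survives iff either $a_u \in H_A$ or some $b_j \in H_B$ has $u \in S_j$. Writing $t$ for the number of elements of $U$ covered by $H_B$, the total number of surviving entities is at most $|H_B| + |H_A| + t \le k + t$. Thus, if $k$ sets cover $M$ elements, hardening the corresponding $b_j$'s saves exactly $k + M$ entities and meets the target; conversely, if the ENH instance achieves at most $E_F$ failures, then at least $k + M$ entities are saved, forcing $t \ge M$, and $H_B$ (padded arbitrarily to size $k$ if it is smaller) witnesses Max $k$-Coverage of value at least $M$.

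The main obstacle I anticipated was that a mixed hardening strategy --- some $b_j$'s together with some $a_u$'s --- could defeat a direct reduction. The inequality saved $\le |H_B| + |H_A| + t$ resolves this cleanly because each directly-hardened $a_u$ contributes only one unit of savings while the coverage $t$ is independent of how the budget is split; the $+k$ slack from $|H| \le k$ absorbs any trade-off without disturbing the $t \ge M$ conclusion. Consequently no duplication or weighting gadget is required, and the decision versions of ENH under Restricted Case II and of Max $k$-Coverage are equivalent, establishing NP-completeness.
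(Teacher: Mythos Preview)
Your proof is correct. The construction is the same bipartite gadget the paper uses---set-entities $b_j$ with no IDRs, element-entities $a_u$ with a disjunction-of-singletons IDR listing the containing sets, and $E'=B$. The only difference is the source problem: the paper reduces from Set Cover with the tight threshold $E_F=m-M$ (which forces \emph{every} $a_u$ to survive, so the hardened $b_j$'s must form a full cover), while you reduce from Maximum $k$-Coverage with the looser threshold $E_F=m+|U|-k-M$. Your route has the advantage that the mixed-hardening case is handled by a single clean inequality (survivors $\le |H_B|+|H_A|+t\le k+t$) rather than by arguing informally that hardening $a_u$'s is dominated; the paper's route has the advantage that Set Cover is the more standard citation. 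Both arguments rest on the identical structural observation, so this is a minor variation of the same approach rather than a genuinely different proof.
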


\begin{proof}
The ENH problem for case III is proved to be NP complete by giving a reduction from the Set Cover Problem. An instance of the Set Cover problem is given by a set $S=\{x_1,x_2,...,x_n\}$ of elements, a set of subsets $\mathcal{S}=\{S_1,S_2,...,S_m\}$ where $S_i \subseteq S$ and a positive integer $M$. The decision version of the problems finds whether there exist at most $M$ subsets from set $\mathcal{S}$ whose union would result in the set $S$. From an instance of the set cover problem we create an instance of the ENH problem in the following way. Consider a system $\mathcal{I}(E,\mathcal{F}(E))$ with $E=A \cup B$, where $A$ and $B$ are entities of two separate critical infrastructures dependent on each other. For each element $x_i$ in set $S$ we add an entity $a_i$ in set $A$. For each subset $S_i$ in set $\mathcal{S}$ we add an entity $b_i$ in set $B$. For all subsets in $\mathcal{S}$, say $S_p, S_m,S_n$, which has the element $x_i$ an IDR of form $a_i \leftarrow b_m+b_n+b_l$ is added to $\mathcal{F}(E)$.  The values of positive integers $k$ and $E_F$ are set to $M$ and $m-M$ respectively. It is assumed that the value of $K=m$ and $E'=B$.

The constructed instance ensures that the entities to be hardened are from set $B$. This is because hardening an entity $a_i \in A$ would only result in prevention of its own failure whereas hardening an entity $b_j \in B$ would result in failure prevention of its own and all other entities in set $A$ for which it appears in its IDR. 

Consider there exists a solution to the Set Cover problem. Then there exist $M$ subsets whose union results in the set $S$. Hardening the entities in set $B$ corresponding to the subsets selected would ensure that all entities in set $A$ are prevented from failure. This is because for the dependency of each entity $a_i \in A$ there exist at least one entity (in set $B$) that is hardened. Hence the number of entities that fails after hardening is $m-M$ which is equal to $E_F$, thus solving the ENH problem. Now, consider that there is a solution to the ENH problem. As discussed above the entities to be hardened should be from set $B'$. To achieve $E_F=m-M$ with $k=M$, no entities in the set $A$ must fail. Hence for each entity $a_i \in A$ at least one entity in set $B$ that appears in its IDR has to be hardened. Thus, it directly follows that the union of subsets in set $\mathcal{S}$ is equal to the set $S$, solving the Set Cover Problem. Hence proved.
\end{proof}

\begin{defn}
\label{PSet}
\textit{For an entity $e_i \in E$ the Protection Set is defined as the entities that would be prevented from failure by hardening the entity $e_i$ when all entities in $E'$ fail initially. This is represented as $P(x_i|E')$.}
\end{defn}

\noindent
\begin{theorem}
\label{CaseIIIappx1}
For two entities $e_i,e_j \in A \cup B$, $P(e_i|E') \cup P(e_j|AE') = P(e_i,e_j|E')$ when IDRs are in form of Restricted Case II.
\end{theorem}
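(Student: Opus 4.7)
The forward inclusion $P(e_i|E')\cup P(e_j|E')\subseteq P(e_i,e_j|E')$ is immediate from monotonicity of hardening, since any entity saved by hardening the smaller set $\{e_i\}$ (or $\{e_j\}$) is also saved by hardening the superset $\{e_i,e_j\}$. All of the real work lies in the reverse inclusion, and my plan is to attack it through a structural characterization of the operational set that is tailored to Restricted Case II.

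My first step would be to build the directed dependency graph $G$ on $E$ in which $x\to y$ whenever $y$ appears as a (singleton) minterm of $x$'s IDR. Because every minterm has size one, the cascade rule collapses to: an un-hardened entity $x$ fails at steady state iff $x\in E'$ or every out-neighbour of $x$ in $G$ has failed. By induction on the cascade time step, I would argue that, for hardening set $H$ and initial failure $E'$, an entity $z$ is operational at steady state if and only if there exists a directed walk $z=z_0\to z_1\to\cdots\to z_r$ in $G$ such that each intermediate $z_0,\ldots,z_{r-1}$ lies in $H\cup(E\setminus E')$ and the terminal $z_r$ either lies in $H$ or has no IDR and lies outside $E'$. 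I shall call any such walk a \emph{support walk} for $z$ under $H$.

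With the characterization in hand, fix $z\in P(e_i,e_j|E')$ and select a support walk $\pi$ for $z$ under $H=\{e_i,e_j\}$. The argument then proceeds by inspecting how $e_i$ and $e_j$ appear on $\pi$. If neither appears, $\pi$ already satisfies the support-walk conditions for $H=\{e_i\}$, so $z\in P(e_i|E')$; the same conclusion holds if only $e_i$ occurs, since dropping $e_j$ from $H$ cannot invalidate any condition along a walk that never uses $e_j$. The only nontrivial case is when both occur, and here I would apply a path-surgery step: assume without loss of generality that the first occurrence of $e_i$ on $\pi$ is at position $l$ and the first occurrence of $e_j$ is at some position $m>l$, and truncate $\pi$ at position $l$. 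The truncated walk $z\to z_1\to\cdots\to e_i$ terminates at an entity hardened under $\{e_i\}$, and each of its intermediate vertices $z_0,\ldots,z_{l-1}$ equals neither $e_i$ (by choice of first occurrence) nor $e_j$ (since $m>l$), so the original support condition forces these intermediates to lie in $E\setminus E'$. Consequently the truncated walk is a support walk for $z$ under $H=\{e_i\}$, proving $z\in P(e_i|E')$; the symmetric case with $e_j$ first yields $z\in P(e_j|E')$.

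The main obstacle I anticipate is the inductive verification of the support-walk characterization, especially handling entities with no IDR as grounded terminals and making sure the characterization is unaffected when the chosen walk revisits vertices. Once that is settled, the remainder of the proof is a short case analysis whose correctness hinges crucially on the Restricted Case II assumption: because each minterm is a single entity, only one out-edge at each step needs to be kept alive, and support for an entity can be traced along a single walk rather than a branching structure, which is precisely what makes the path-truncation surgery valid.
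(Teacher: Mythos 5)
Your overall strategy is sound and genuinely different from the paper's: the paper argues by contradiction, looking only at the immediate IDR of a putative entity $e_p$ saved exclusively by hardening $e_i$ and $e_j$ together, and observes that a disjunction of singleton minterms cannot require two distinct saviours; your route instead builds the dependency digraph, characterizes operational entities via support walks, and finishes with a truncation-at-first-occurrence surgery. Your version is arguably the more careful one, since the paper's argument never really addresses witnesses that sit deeper than one level in the cascade, whereas your walk-based argument handles arbitrary depth uniformly. The forward inclusion by monotonicity and the final path-surgery step are both correct as written.

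There is, however, one genuine gap, and it is exactly the obstacle you flagged but did not resolve: the ``only if'' direction of your support-walk characterization is false as stated. Take IDRs $a \leftarrow b$ and $b \leftarrow a$ with neither $a$ nor $b$ in $E'$. Both entities are operational at steady state (the cascade is a least fixed point seeded by $E' \setminus H$, so nothing ever kills them), yet every walk out of $a$ cycles forever and never terminates at an entity of $H$ or at a grounded (IDR-free) entity outside $E'$. Your proposed induction on cascade time steps cannot rescue this direction, because such entities never fail and hence have no time step to induct on. The repair is to restrict the characterization to entities in $KillSet(E')$, which is all the theorem needs since $P(\cdot\,|E')$ only contains entities that actually fail when nothing is hardened. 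For a saved entity $z \notin H$ one has $z \notin E'$, and some out-neighbour of $z$ is operational under $H$ while all out-neighbours fail under $H=\emptyset$; that out-neighbour is therefore also saved and fails under $H=\emptyset$ strictly earlier than $z$ does. Following such out-neighbours gives a walk along which the no-hardening failure time strictly decreases, so it must terminate, and it can only terminate at a member of $H$ (a grounded entity outside $E'$ never lies in $KillSet(E')$). With the lemma stated for $KillSet(E')$ and proved this way, your truncation argument goes through verbatim; as a side effect, your ``neither $e_i$ nor $e_j$ occurs on $\pi$'' case becomes vacuous, which is as it should be, since an entity with such a walk is operational even with no hardening and so belongs to no protection set.
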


\begin{proof}
Assume that defending two entities $e_i$ and $e_j$ would result in preventing failure of $P(e_i,e_j|E')$ entities with $|P(e_i|E') \cup P(e_j|E')| < |P(e_i,e_j|E')|$. Then there exist at least one entity $e_p \notin P(e_i|E') \cup P(e_j|E')$ such that it's failure is prevented only if $e_i$ and $e_j$ are protected together. So two entities $e_m$ and $e_n$ (with $e_m \in P(e_i|E')$ and $e_n \in P(e_j|E')$ or vice versa) have to be present in the IDR of $e_p$. As the IDRs are of restricted Case II so if any one of $e_m$ or $e_n$ is protected then $e_p$ is protected, hence a contradiction. On the other way round $P(e_i,e_j|E')$ contains all entities which would be prevented from failure if $e_i$ or $e_j$ is defended alone. So it directly follows that $|P(e_i|E') \cup P(e_j|E')| > |P(e_i,e_j|E')|$ is not possible. Hence the theorem holds. 
\end{proof}

\begin{theorem}
\label{CaseIIIappx2}
There exists an $1-\frac{1}{e}$ approximation algorithm that approximates the ENH problem for Restricted Case II.
\end{theorem}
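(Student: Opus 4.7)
The plan is to recast the ENH problem under Restricted Case II as an instance of the Maximum $k$-Coverage problem, and then invoke the classical greedy algorithm of Nemhauser, Wolsey, and Fisher that achieves a $1-\tfrac{1}{e}$ factor for monotone submodular maximization subject to a cardinality constraint.

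First, I would lift Theorem \ref{CaseIIIappx1} from pairs to arbitrary hardening sets. By induction on $|H|$, using $P(e_i, e_j|E') = P(e_i|E') \cup P(e_j|E')$ as the base case, I would show that for every $H \subseteq E$,
\[
P(H|E') \;=\; \bigcup_{e_i \in H} P(e_i|E').
\]
The inductive step reduces to the pairwise case by treating $H \cup \{e\}$ as two ``super-entities'' $H$ and $\{e\}$; the key structural fact is that under Restricted Case II, every IDR is a disjunction of single entities, so satisfying an IDR requires exactly one disjunct to survive and therefore compound ``joint'' protection effects cannot arise.

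Next, I would observe that the optimization version of ENH is exactly a Maximum $k$-Coverage instance: the universe is $KillSet(E')$, the ground set of ``subsets'' consists of $\{P(e_i|E') : e_i \in KillSet(E')\}$, and the objective is to pick $k$ entities whose protection sets have maximum union cardinality. Using the preceding paragraph, the objective $f(H) = |\bigcup_{e_i \in H} P(e_i|E')|$ is monotone non-decreasing and submodular, since coverage functions are the canonical example of monotone submodular set functions. The greedy algorithm then iteratively selects
\[
e^{*} \in \arg\max_{e \in KillSet(E') \setminus H} \; \Bigl| P(e|E') \setminus \bigcup_{e_j \in H} P(e_j|E') \Bigr|,
\]
for $k$ rounds. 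Each protection set is computable in $\mathcal{O}(n^2)$ time as noted before the theorem statement, and the marginal gain evaluation and updates stay polynomial, so the full procedure is polynomial.

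Finally, invoking the Nemhauser--Wolsey--Fisher theorem on greedy maximization of a monotone submodular function under a cardinality-$k$ constraint yields the $1-\tfrac{1}{e}$ approximation guarantee, completing the proof. I expect the main obstacle to be the inductive extension of the pairwise union identity to arbitrary $H$: one must argue carefully that no entity outside $\bigcup_{e_i \in H} P(e_i|E')$ can be rescued by the combined presence of several hardened entities, and this relies essentially on the disjunction-only structure of Restricted Case II; if conjunctive minterms of size greater than one were permitted, the union identity and hence submodularity could fail, which is precisely why Theorems \ref{CaseIII} and \ref{inapxC2} rule out such approximations in the general case.
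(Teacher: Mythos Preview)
Your proposal is correct and follows essentially the same approach as the paper: both reduce the problem to Maximum Coverage by taking the protection sets $P(e_i\mid E')$ as the covering sets and invoking the $1-\tfrac{1}{e}$ greedy guarantee. Your write-up is slightly more careful in that you explicitly lift Theorem~\ref{CaseIIIappx1} from pairs to arbitrary $H$ by induction (the paper simply asserts this extension) and you frame the guarantee via monotone submodularity and Nemhauser--Wolsey--Fisher rather than citing the Maximum Coverage approximation directly, but these are presentational differences only.
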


\begin{proof}
The approximation algorithm is constructed by reducing the problem for this restricted case to \emph{Maximum Coverage} problem. An instance of the maximum coverage problem consists of a set $S=\{x_1,x_2,...,x_n\}$, a set $\mathcal{S}=\{S_1,S_2,...,S_m\}$ where $S_i \subseteq S$ and a positive integer $M$. The objective of the problem is to find a set $S' \subseteq S$ and $|S'| \le M$ such that $\cup_{S_i \in \mathcal{S}} S_i$ is maximized.  Consider a system $\mathcal{I}(E,\mathcal{F}(E))$ with $E=A \cup B$, where $A$ and $B$ are entities of two separate critical infrastructures dependent on each other. For a given initial failure set $E' = A' \cup B'$ with $|A'| + |B'| \le K$, let $P(e_i|A' \cup B')$ denote the protection set for each entity $e_i \in A \cup B$. We construct a set $S=A \cup B$ and for each entity $e_i$ a set $S_{e_i} \subseteq S$ such that $S_{e_i} =  P(e_i|A' \cup B')$. Each set $S_{e_i}$ is added as an element of a set $\mathcal{S}$. The conversion of the problem to Maximum Coverage problem can be done in polynomial time. By Theorem \ref{CaseIIIappx1} defending a set of entities $X \subseteq S$ would result in failure prevention of $\cup_{e_i \in X} S_{x_i}$ entities. Hence, with the constructed sets $S$ and $\mathcal{S}$ and a positive integer $M$ (with $M=k$) finding the Maximum Coverage would ensure the failure protection of maximum number of entities in $A \cup B$. This is same as the ENH problem of Restricted Case II. As there exists an $1-\frac{1}{e}$ approximation algorithm for the Maximum Coverage problem hence the same algorithm can be used to solve this restricted case of the ENH problem using this transformation.
\end{proof}
\subsection{Targeted Entity Hardening Problem}
In this subsection we prove the computational complexity of the Targeted Entity Hardening Problem to be NP-complete in Theorem \ref{th:thmG}.

\begin{theorem}{} \label{th:thmG} The TEH problem is NP-complete
\end{theorem}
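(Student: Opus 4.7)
The plan is to show membership in NP and then give a polynomial-time reduction from the \textsc{Set Cover} problem, mirroring the style of the reduction used for Restricted Case II in Theorem \ref{CaseIII} but adapted to the minimization objective of TEH.

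Membership in NP is immediate: given a candidate hardening set $H$ with $|H|\le k$, one simulates the cascade on $\mathcal{I}(E,\mathcal{F}(E))$ starting from initial failure $E'$, with the rule that entities in $H$ never fail. Since the steady state is reached within $|E|-1$ time steps, and each step just re-evaluates the IDRs, this check runs in polynomial time; $H$ is a certificate iff no entity of $P$ has failed at steady state.

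For NP-hardness, I would reduce from \textsc{Set Cover}: given a universe $S=\{x_1,\dots,x_n\}$, a collection $\mathcal{S}=\{S_1,\dots,S_m\}$ and integer $M$, construct a system $\mathcal{I}(E,\mathcal{F}(E))$ with $E = A\cup B$ where $A=\{a_1,\dots,a_n\}$ corresponds to elements and $B=\{b_1,\dots,b_m\}$ to subsets. For each element $x_i$, if $x_i\in S_{j_1},\dots,S_{j_p}$, create the IDR $a_i \leftarrow b_{j_1}+b_{j_2}+\cdots+b_{j_p}$. Set $E' = B$ (so $K=m$), $P=A$, and $k=M$. Then $F=KillSet(E')=A\cup B$, since initially all $b_j$ fail and each $a_i$ fails at $t=1$. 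This construction is clearly polynomial.

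For the forward direction, a set cover $\{S_{j_1},\dots,S_{j_M}\}$ of $S$ yields the hardening $H=\{b_{j_1},\dots,b_{j_M}\}$: for every $a_i\in A=P$, the IDR of $a_i$ contains some hardened $b_{j_t}$, so $a_i$ does not fail. For the backward direction, the main obstacle is that a TEH solution $H$ is allowed to harden entities of $A$ as well as entities of $B$, and is not a priori a subcollection of $\mathcal{S}$. Here I would use a canonicalization argument: if $a_i\in H\cap A$, replace $a_i$ by any single $b_j$ appearing in the IDR of $a_i$ (such a $b_j$ exists, else the instance is trivially a \textsc{No}-instance of \textsc{Set Cover}). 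The new set $H'$ has $|H'|\le|H|$ and still protects $a_i$; moreover any other $a_\ell$ already protected by $H$ remains protected because hardening a $b$-entity can only block additional failures, never enable new ones. Iterating this swap produces a hardening $H^*\subseteq B$ with $|H^*|\le k=M$ that protects all of $A$. The corresponding subcollection $\{S_j : b_j\in H^*\}$ then covers $S$, because every element $x_i$ must have some $b_j$ in $H^*$ appearing in the IDR of $a_i$. Combining this with membership in NP establishes NP-completeness.
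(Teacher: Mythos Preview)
Your proposal is correct and follows essentially the same approach as the paper: both reduce from \textsc{Set Cover} via the identical construction ($A$ for elements, $B$ for subsets, IDRs $a_i\leftarrow b_{j_1}+\cdots+b_{j_p}$, $E'=B$, $P=A$, $k=M$). Your write-up is in fact slightly more complete than the paper's, since you explicitly verify membership in NP and spell out the canonicalization/swap argument for the backward direction (replacing each hardened $a_i$ by some $b_j$ in its IDR without increasing $|H|$), whereas the paper leaves the size bound on the resulting cover implicit.
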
 
\begin{proof}
We proof that the Targeted Entity Hardening is NP complete by a reduction from Set Cover problem. An instance of the Set Cover problem consists of (i) a set of elements $U=\{x_1,x_2,\hdots,x_n\}$, (ii) a set of subsets $\mathcal{S} = \{S_1,S_2,\hdots,S_m\}$ with $S_i \subseteq U$ $\forall S_i \in \mathcal{S}$, and (iii) a positive integer $M$. The problem asks the question whether there is a subset $\mathcal{S}'$ of $\mathcal{S}$ with $|\mathcal{S}'| \le M$ such that $\bigcup_{S_k \in \mathcal{S}'} S_k = U$. From an instance of the Set Cover problem we create an instance of the Targeted Entity Hardening Problem as follows. Consider a system $\mathcal{I}(E,\mathcal{F}(E))$ with $E=A \cup B$, where $A$ and $B$ are entities of two separate critical infrastructures dependent on each other. For each element $x_j$ in $U$ we add an entity $a_j$ in set $A$. Similarly for each subset $S_i$ in set $\mathcal{S}$ we add an entity $b_i$ in set $B$. For each element $x_i \in U$ which appears in subsets $S_m,S_n,S_p \in \mathcal{S}$ (say) we add an IDR $a_i \leftarrow b_m+b_n+b_p$ to $\mathcal{F}(E)$. There are no IDRs for entities in set $B$ which prevents any cascading failure. The value of $K$ is set to $|\mathcal{S}|$ and $E'=B$ which fails all entities in $A \cup B$. The set of $P$ entities to be protected is set to $A$ and $k$ is set to $M$.

Consider there exists a solution to the Set Cover problem. Then there exist a set $\mathcal{S}'$ of cardinality $M$ such that $\bigcup_{S_k \in \mathcal{S}'} S_k = U$. For each subsets $S_k \in \mathcal{S}'$ we harden the entity $b_k \in B$. So in each IDR of the $A$ type entities there exist a $B$ type entity that is hardened. Hence all $A$ type entities will be protected from failure thus solving the Targeted Entity Hardening problem. 

On the other way round consider there is a solution to the Targeted Entity Hardening problem. This ensures either that for each entity $a_j \in A$ (i) $a_j$ itself is hardened, or (ii) at least one entity from set $B$ in $a_j$'s IDR is hardened. For scenario (i) arbitrarily select an entity $b_p$ in $a_j$'s IDR and include it in set $C$. For scenario (ii) include the hardened entities in the IDR of $a_j$ into set $C$. This is done for each entity $a_j \in A$. For each entity in set $C$ select the corresponding subset in set $\mathcal{S}$. The union of these set of subsets would result in the set $U$. Thus solving the set cover problem. Hence the theorem is proved.
\end{proof}

\subsubsection{Restricted Case I: Problem Instance with One Minterm of Size One}
This restriction imposed on the IDRs is the same as that of restricted case I of the ENH problem. Using the definition of \emph{Protection set} (Definition \ref{PSet}) and the result in Theorem \ref{th:thPS} we design an algorithm (Algorithm \ref{alg:alg1}) that solves the problem for this restricted case optimally in polynomial time (proved in Theorem \ref{CaseI}).

\begin{theorem}
	\label{th:thPS}
	Given a system $\mathcal{I}(E,\mathcal{F}(E))$ with IDRs of form restricted case I and $E' \subset E$ entities failing initially, for any entity $e_i$ and $e_j$ with $e_i \ne e_j$ either (a) $PS(e_i|E') \subseteq PS(e_j|E')$, (b) $PS(e_j|E') \subseteq PS(e_i|E')$, or (c) $PS(e_i|E') \cap PS(e_j|E') = \emptyset$.
\end{theorem}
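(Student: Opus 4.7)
The plan is to exploit the fact that, under Restricted Case~I, every IDR has the form $e_i \leftarrow e_j$, so each entity depends on at most one other entity. Orienting an edge from $e_j$ to $e_i$ whenever $e_i \leftarrow e_j$, the resulting graph is a directed forest along which failure cascades travel from roots toward leaves. I write $S(e)$ for the set of descendants of $e$ in this forest, including $e$ itself.

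The first step is to obtain an explicit characterisation of the protection set. For $e_i \notin KillSet(E')$ one trivially has $PS(e_i|E') = \emptyset$. For $e_i \in KillSet(E')$, I claim $e_k \in PS(e_i|E')$ if and only if $e_k \in S(e_i)$ and no element of $E' \setminus \{e_i\}$ appears on the unique forest path from $e_i$ (exclusive) down to $e_k$ (inclusive). The reason is that in the forest every cascade reaching $e_k$ must pass through each ancestor of $e_k$, so hardening $e_i$ blocks precisely those cascades whose $E'$-origin lies at or above $e_i$, while any $E'$-element strictly between $e_i$ and $e_k$ (or equal to $e_k$) would still fail and re-propagate to $e_k$.

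With this in hand, I split on the ancestor-descendant relation of $e_i$ and $e_j$. If neither is an ancestor of the other, then $S(e_i) \cap S(e_j) = \emptyset$, so the two protection sets are disjoint, giving case~(c). Otherwise, I may assume $e_i$ is a strict ancestor of $e_j$. Inspect the path from $e_i$ (exclusive) to $e_j$ (inclusive): if it contains some $e_m \in E' \setminus \{e_i\}$, then $e_m$ is a strict descendant of $e_i$ and an ancestor of every element of $PS(e_j|E') \subseteq S(e_j)$, so by the characterisation no element of $PS(e_j|E')$ can lie in $PS(e_i|E')$, and the two sets are disjoint. If no such $e_m$ exists, then for any $e_k \in PS(e_j|E')$ the forest path from $e_i$ to $e_k$ decomposes into an $e_i$-to-$e_j$ segment (clear of $E' \setminus \{e_i\}$ by hypothesis) followed by an $e_j$-to-$e_k$ segment (clear of $E' \setminus \{e_j\}$ because $e_k \in PS(e_j|E')$); the concatenation is clear of $E' \setminus \{e_i\}$. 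Combined with $e_i \in KillSet(E')$, which follows since the $E'$-ancestor witnessing $e_j \in KillSet(E')$ cannot sit strictly between $e_i$ and $e_j$ and hence must lie at or above $e_i$, this yields $e_k \in PS(e_i|E')$, proving $PS(e_j|E') \subseteq PS(e_i|E')$ and giving case~(b).

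The main obstacle I anticipate is pinning down the characterisation of $PS(e_i|E')$ precisely, in particular handling the corner cases where $e_i$ already belongs to $E'$ (so that hardening prevents an initial rather than an induced failure) and where $e_i \notin KillSet(E')$ (so the protection set is empty and every inclusion is trivial). Once the characterisation is settled, the subsequent case analysis is a routine exercise in tracing unique paths through a forest.
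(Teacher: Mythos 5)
Your proposal is correct and follows essentially the same route as the paper: both orient an edge from $e_j$ to $e_i$ for each IDR $e_i \leftarrow e_j$, exploit the in-degree-at-most-one structure to get unique ancestor chains, characterise $PS(\cdot|E')$ as the descendant set truncated at elements of $E'$, and then split on whether one entity is an ancestor of the other. The only slip is calling the graph a directed forest — mutual dependencies such as $a \leftarrow b$, $b \leftarrow a$ produce components containing one cycle (which the paper's proof explicitly accommodates, yielding equality of protection sets for cycle members) — but since in-degree at most one still forces unique simple paths and disjoint descendant sets for incomparable vertices, your argument goes through unchanged.
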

\begin{proof}
	Consider a directed graph $G=(V,E_D)$. The vertex set $V$ consists of a vertex for each entity in $E$. For each IDR of form $y \leftarrow x$ there is a directed edge $(x,y) \in E_D$. In this proof the term vertex and entity is used interchangeably as an entity is essentially a vertex in $G$. It can be shown that $G$ is either (a) Directed Acyclic Graph (DAG) with maximum in-degree of at most $1$ or, (b) contain at most one cycle with no incoming edge to any vertex in the cycle and maximum in-degree of at most $1$, or (c) collection of graphs (a) and/or (b). Consider a vertex $x_i \in V$. Let $G' = (V',E'_D)$ be a subgraph of $G$ with $V'$ consisting of $x_i$ and all the vertices that has a directed path from $x_i$. Moreover, the edge set $E'_D$ consists of all edges $(x,y) \in E_D$ with $x,y \in V'$ except for any edge $(y,x_i)$ with $y_i \in V'$. Such a subgraph $G'$ would be a directed tree with (i) one or more entities in $V' \backslash \{x_i\}$ is in $A' \cup B'$. Let $X$ denote the set of such entities which satisfy this property, or (ii) no entities in $V' \backslash \{x_i\}$ is in $E'$. If the entity $x_i$ is hardened then for case (i) all the entities in $V'$ would be protected from failure except for entities in all subtrees with roots in $X$. The set of entities in such subtrees are contained in a set $Z$ (say). For this condition if $e_j \in V' \backslash Z$ then $PS(e_j|E') \subset PS(e_i|E')$. Else if $e_j \in Z$ then $PS(x_i|E') \cap PS(e_j|E') = \emptyset$. For case (ii) for any entity $x_j \in V'$ the condition $PS(e_j|E') \subseteq PS(E_i|E')$ always holds (the equality holds for graphs of type (b) as stated above). This property holds for all entities in the entity set $E$. Hence proved.
\end{proof}

\begin{algorithm}
	\small
	\KwData{A system $\mathcal{I}(E,\mathcal{F}(E))$, set $E'$ with $|E'|=K$ entities failing initially and the set $P$ of entities to be protected from failure.}
	\KwResult{A set of entities $H$ to be hardened.}
	\Begin{			
		For each entity $e_i \in (E)$ compute the \emph{Protection Sets} $PS(e_i| E')$ \;
		Initialize $H=\emptyset$ \;
		\While  {$P \ne \emptyset$}{
			Choose the Protection Set with highest $|PS(e_i| E') \cap P|$\;
			Update $H \leftarrow H \cup \{e_i\}$ \;
			Update $P \leftarrow P \backslash PS(e_i| E')$\;
			\For{all $d_j \in E$}{
				$PS(e_j| E') = PS(e_j| E') \backslash PS(e_i| E')$\;
			}
		}
		\Return{$H$} \;
	}		
	\caption{Algorithm for TEH problem with IDRs in form of Restricted Case I}
	\label{alg:alg1}
\end{algorithm}

\begin{theorem}{} \label{CaseI} Algorithm \ref{alg:alg1} solves the Targeted Entity Hardening problem with IDRs having single minterms of size $1$ optimally in polynomial time. 
\end{theorem}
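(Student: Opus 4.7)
The plan is to exploit the laminar structure of the protection-set family $\mathcal{L} := \{PS(e_i|E') : e_i \in E\}$ guaranteed by Theorem \ref{th:thPS}: any two sets in $\mathcal{L}$ are either disjoint or nested. Using this, I will first establish a matching lower bound on the optimum by partitioning $P$ into equivalence classes, and then show that Algorithm \ref{alg:alg1} removes exactly one such class per iteration.

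Define an equivalence relation on $P$ by $p \sim q$ iff some $S \in \mathcal{L}$ contains both. Reflexivity holds since $p \in PS(p|E')$, symmetry is trivial, and transitivity follows from laminarity: if $\{p,q\} \subseteq S_a$ and $\{q,r\} \subseteq S_b$ then both $S_a, S_b$ contain $q$, so one is nested in the other and the larger one contains $\{p,q,r\}$. Let $C_1,\ldots,C_t$ denote the resulting classes. A crucial consequence is that for every $S \in \mathcal{L}$, $S \cap P$ lies entirely within a single class, because any two elements of $S \cap P$ are equivalent via $S$. Therefore any feasible hardening set $H$ must include, for each class $C_j$ separately, at least one entity $e$ with $PS(e|E') \cap C_j \neq \emptyset$; since no single entity can serve two different classes, $|H| \ge t$.

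To prove Algorithm \ref{alg:alg1} achieves this bound, note that for each class $C_j$ the sets in $\mathcal{L}$ that contain all of $C_j$ form a chain (by laminarity), with a smallest element $M_j$ satisfying $M_j \cap P = C_j$ (a strict superset of $C_j$ inside $M_j \cap P$ would force a new element into $C_j$ by the equivalence argument). Hence $\max_{e \in E} |PS(e|E') \cap P| = \max_j |C_j|$, and any entity attaining this maximum has $PS(e|E') \cap P$ equal to the argmax class $C_{j^*}$ in full. In iteration one the greedy rule therefore protects an entire class $C_{j^*}$; lines 6--8 then delete exactly $C_{j^*}$ from $P$ and from every other protection set, and the restricted family is still laminar with the class structure $\{C_j\}_{j \ne j^*}$. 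Induction on $t$ yields termination in exactly $t$ steps, matching the lower bound.

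The polynomial-time claim is routine book-keeping: every $PS(e_i|E')$ can be produced in $\mathcal{O}(n^3)$ by a cascade simulation per candidate entity, in the style of the kill-set computation cited in the proof of Theorem \ref{CaseIE}; the main loop iterates at most $t \le n$ times, each iteration performing an $\mathcal{O}(n)$ max-scan and an $\mathcal{O}(n^2)$ update sweep. The main obstacle I expect is the step that the greedy pick always covers a full remaining class rather than a strict subset of one; this hinges on the tight observation that any set achieving the maximum $P$-intersection must in fact equal $C_{j^*}$ on $P$, which is exactly the point where Theorem \ref{th:thPS} has to be invoked with care.
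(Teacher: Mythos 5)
Your proof is correct, and it rests on the same key lemma as the paper's proof --- the laminarity of the protection-set family established in Theorem \ref{th:thPS} --- but it is organized differently and, in one respect, more completely. The paper argues optimality locally: the entity picked by the greedy rule has a protection set that is maximal in the laminar order among sets meeting $P$, so no single alternative entity could do better, and the claim is then asserted. You instead partition $P$ into equivalence classes induced by co-membership in protection sets, observe that every $PS(e|E')\cap P$ lies inside one class while each class is realized in full by some maximal set in its chain, and thereby obtain both a lower bound of $t$ (the number of classes) on any feasible solution and a proof that the greedy loop retires exactly one class per iteration. This matching lower bound is the piece the paper's proof only gestures at, and it is what actually closes the argument: for general set-cover-like structures ``greedy picks a maximal set'' does not imply optimality, and it is precisely laminarity, made explicit through your partition, that rescues it here. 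Two small points worth making explicit if you write this up: (i) your lower bound tacitly uses that hardening a set $H$ protects no more than $\bigcup_{e\in H}PS(e|E')$, which does hold in Restricted Case I because every entity has a unique parent in its single size-one minterm (the paper makes the same tacit assumption); and (ii) your inductive step should note, as you partly do, that subtracting the chosen protection set from all others preserves laminarity and neither splits nor merges the surviving classes, since the removed set meets $P$ exactly in the retired class. With those remarks your argument is airtight and arguably preferable to the one in the paper.
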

\begin{proof}
The \emph{Protection Sets} of the entities can be found in a similar way as that of computing \emph{Kill Sets} defined in \cite{sen2014identification}. It can be shown that computing these sets for all entities in $E$ can be done in $\mathcal{O}(n^3)$ where $n=|E|$. The while loop in Algorithm \ref{alg:alg1} iterates for a maximum of $n$ times. Step 5 can be computed in $\mathcal{O}(n^2)$ time. The for loop in step 8 iterates for $n$ times. For any given $e_j$ and $e_i$, 	$PS(e_j| E') = PS(e_j| E') \backslash PS(e_i| E')$ can be computed in $\mathcal{O}(n^2)$ time with the worst case being the condition when $|PS(e_i| E')| = |PS(e_j| E')| = n$. As step 9 is nested in a for loop within the while loop this accounts for the most expensive step in the algorithm. The time complexity of this step is $\mathcal{O}(n^4)$. Thus Algorithm \ref{alg:alg1} runs polynomially in $n$ with time complexity being $\mathcal{O}(n^4)$.

In Algorithm \ref{alg:alg1} the while loop iterates till all the entities in $P$ are protected from failure. In step 5 the entity $e_i$ with protection set $PS(e_i| E')$ having most number of entities belonging to set $P$ is chosen to be hardened. Correspondingly the entity $e_i$ is added to the hardening set $H$. The set $P$ is updated by removing the entities in $PS(e_i| E')$. Similarly all the protection sets are updated by removing the entities in $PS(e_i| E')$. 

We use the result from Theorem \ref{th:thPS} to prove the optimality of Algorithm \ref{alg:alg1}. An entity $e_i$ is selected to be hardened at any iteration of the while loop has maximum number of entities in $PS(e_i| E') \cap P$. All entities $e_j$ with $PS(e_j|E') \subseteq PS(e_i|E')$ would have $PS(e_j| E') \cap P \subseteq PS(e_i| E') \cap P$. Moreover there exist no entity $e_k$ for which $PS(e_i|E') \subset PS(e_k|E')$ otherwise $e_k$ would have been hardened instead. Hence there exist no other entity that protect other entities in $P$ including $PS(e_i| E') \cap P$. So Algorithm \ref{alg:alg1} selects the minimum number of entities to harden that protects all entities in $P$.
\end{proof}

\subsubsection{Restricted Case II: IDRs having arbitrary number of minterm of size $1$}
For instance created in Theorem \ref{th:thmG} the IDRs were logical disjunctions of minterms with size $1$. We consider this restriction to design an approximation algorithm for the TEH problem and is shown in Theorem \ref{th:thmApp}.
\begin{theorem}
	\label{th:thmApp}
	The Targeted Entity Hardening Problem is $\mathcal{O}(log(|P|)$ approximate when IDRs are logical disjunctions of minterms with size $1$.
\end{theorem}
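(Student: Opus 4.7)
The plan is to reduce the restricted TEH problem to the classical Set Cover problem, and then invoke the well-known greedy algorithm that achieves an $H_n = \mathcal{O}(\log n)$ approximation, where $n$ is the size of the universe.

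First I would establish the union property for protection sets in this restricted setting, exactly analogous to Theorem \ref{CaseIIIappx1}: for any subset $H \subseteq E$ of hardened entities,
\[
PS(H \mid E') \;=\; \bigcup_{e_i \in H} PS(e_i \mid E').
\]
The argument is the same as in Theorem \ref{CaseIIIappx1}: because every minterm in every IDR has size $1$, an entity is protected as soon as \emph{any} single element of its IDR is (recursively) protected, so no entity is rescued by the joint hardening of two entities that would not already have been rescued by hardening one of them alone. This decoupling is exactly what makes the set-cover viewpoint work.

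Next I would perform the reduction. Given an instance of restricted TEH with system $\mathcal{I}(E,\mathcal{F}(E))$, initial-failure set $E'$, target set $P \subseteq KillSet(E')$, I set up an instance of Set Cover with universe $U := P$ and collection $\mathcal{S} := \{\, S_{e_i} : e_i \in KillSet(E') \,\}$, where $S_{e_i} := PS(e_i \mid E') \cap P$. Each of the protection sets $PS(e_i\mid E')$ is computable in polynomial time (the same machinery used in Algorithm \ref{alg:alg1}), so the reduction runs in polynomial time. By the union property, a collection $H \subseteq E$ protects all of $P$ if and only if $\bigcup_{e_i \in H} S_{e_i} = P$, i.e.\ the corresponding subsets cover $U$. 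Hence minimum hardening sets are in one-to-one correspondence (up to restriction to $KillSet(E')$, which is w.l.o.g.\ since entities outside this set contribute nothing to any protection set) with minimum set covers of $U$.

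Finally I would apply the standard greedy Set Cover algorithm to $(U,\mathcal{S})$: at each iteration, pick the set covering the most currently-uncovered elements of $U$, translate this choice back into hardening the corresponding entity $e_i$, and iterate until $P$ is emptied (which is structurally the same loop as Algorithm \ref{alg:alg1}). Chvátal's analysis of greedy Set Cover yields an $H_{|U|} = H_{|P|} = \mathcal{O}(\log |P|)$ approximation factor, which transfers directly to TEH via the reduction. I expect the main (and really only) delicate point to be the justification of the union property: one must be careful that the cascade of protection in the restricted IDRs indeed decouples across hardened entities, since without this property the set-cover instance would under-count what joint hardenings achieve and the reduction could be only a one-sided bound rather than an equivalence. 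Once that observation is in hand, the rest is a straightforward application of a textbook approximation.
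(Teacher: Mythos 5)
Your proposal follows essentially the same route as the paper: prune each protection set down to $P$, set up a Set Cover instance with universe $U=P$ and the pruned protection sets as the collection, and invoke the standard $\mathcal{O}(\log|U|)$ greedy guarantee. The only difference is that you explicitly flag and justify the union property $PS(H\mid E')=\bigcup_{e_i\in H}PS(e_i\mid E')$, which the paper leaves implicit (it is the content of Theorem \ref{CaseIIIappx1} for the same IDR restriction); making that step explicit is a small improvement in rigor, not a different argument.
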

\begin{proof}
	We first compute the protection set $PS(e_i |E')$ for all entities $e_i \in E$. Each protection set is pruned by removing entities that are not in set $P$. Now the Targeted Entity Hardening Problem can be directly transformed into Minimum Set Cover problem by setting $U=P$ and $\mathcal{S}=\{PS(e_1|E'), PS(e_2 | E'),..., PS(x_|E| | E')\}$. Selecting the corresponding entities of the protection sets that solve the Minimum Set Cover problem would also solve the Targeted Entity Hardening problem. There exists an approximation ratio of order $\mathcal{O}(log(n))$ (where $n$ is the number of elements in set $U$) for the Set Cover problem. Therefore using the approximation algorithm that solves the Set Cover problem, the same ratio holds for the Targeted Entity Hardening problem with $n=|P|$. Hence proved.
\end{proof} 
\section{Optimal and Heuristic Solution to the Problems}
\label{Solutions}
Owing to the problems being NP-complete, we provide optimal solutions to them by formulating Integer Linear Program (ILP). For both the problems we also provide sub optimal heuristic that runs in polynomial time.
\subsection{Solutions to the Entity Hardening Problem}
\subsubsection{Optimal Solution using Integer Linear Programming}
We propose an Integer Linear Program (ILP) that solves the ENH problem optimally. For a system $\mathcal{I}(E, \mathcal{F}(E))$ let $G=\{g_1,g_2,...,g_n\}$ be variables denoting entities in set $E$. Given an integer $K$, $G$ is a array of $K$ $1$'s and $n-K$ $0$'s where $g_i=1$ if the entity $e_i \in E$ fails at $t=0$ and $g_i = 0$ if the the entity is operational at $t=0$. Thus the array $G$ gives the set of $K$ entities failing initially. Additionally for each entity $e_j \in E$ a set of variables $x_{jd}$ with $0 \le d \le n-1$ and $d \in I^+ \cup \{0\}$ are created. The value of $x_{jd} = 1$ denotes that the entity $x_j$ is in failed state at $t=d$ and $x_{jd} = 0$ denotes it is operational. As noted earlier for $|E| =n $ the cascade can proceed till $n-1$ so the range of $d$ is $[0,n-1]$. Using these definitions the objective of the ENH problem is as follows --- \\
\noindent
\begin{equation}\label{eqn:ilpobj2}
min \Big(\overset{n}{\underset{i=1}{\sum}}x_{i(n-1)}\Big)
\end{equation}

\noindent
The constraints guiding the problem are as follows:

\noindent
{\em Constraint Set 1}: $\sum\limits_{i=1}^{n} q_{x_i} \le k$ , with $q_{x_i} \in [0,1]$. If an entity $x_i$ is hardened then $q_{x_i} = 1$ and $0$ otherwise. \\ \\
\noindent
{\em Constraint Set 2:} $x_{i0} \ge g_{i} - q_{x_i}$. This constraint implies that only if an entity is not defended and $g_i=1$ then the entity will fail at the initial time step.\\
\noindent \\
{\em Constraint Set 3}: $x_{id} \geq x_{i(d-1)}, \forall d, 1 \leq d \leq n-1$, in order to ensure that for an entity which fails in a particular time step would remain in failed state at all subsequent time steps.\\
\noindent \\
{\em Constraint Set 4}: Modeling of constraints to capture the cascade propagation for IIM is similar to the constraints established in \cite{sen2014identification} with modifications to capture the hardening process. A brief overview of this constraint is provided here. Consider an IDR ${e_i} \leftarrow {e_j}{e_p}{e_l} + {e_m}{e_n} + {e_q}$. The following steps are enumerated to depict the cascade propagation with respect to this constraint:

\vspace{0.02in}
\noindent
{\em Step 1:} Replace all minterms of size greater than one with a variable. In the example provided we have the transformed minterm as ${e_i} \leftarrow c_1 + c_2 + e_q$ with $c_1 \leftarrow {e_j}{e_p}{e_l}$ and $c_2 \leftarrow {e_m}{e_n}$ ($c_1,c_2 \in \{0,1\}$) as the new IDRs. 

\vspace{0.02in}
\noindent
{\em Step 2:} For each variable $c$, a constraint is added to capture the cascade propagation. Let $N$ be the number of entities in the minterm on which $c$ is dependent. In the example for the variable $c_1$ with IDR $c_1 \leftarrow {e_j}{e_p}{e_l}$, constraints $c_{1d} \geq \frac{x_{j(d-1)} + x_{p(d-1)}+x_{l(d-1)}}{N} \forall d \in [1,n-1]$ are introduced ($N=3$ in this case). If IDR of an entity is already in form of a single minterm of arbitrary size, i.e.,$e_i \leftarrow {e_j}{e_p}{e_l}$ (say) then constraints $x_{id} \geq \frac{x_{j(d-1)}+x_{p(d-1)}+ x_{l(d-1)}}{N} - q_{x_i}$ and $x_{id} \le x_{j(d-1)} + x_{p(d-1)}+x_{l(d-1)}  \forall d \in [1,n-1]$ are introduced (with $N=3$). These constraints satisfies that if the entity $e_i$ is hardened initially then it is not dead at any time step. 

\vspace{0.02in}
\noindent
{\em Step 3:} Let $M$ be the number of minterms in the transformed IDR as described in Step 1. In the given example with IDR ${e_i} \leftarrow c_1 + c_2 + e_q$  constraints of form $x_{id} \geq c_{1(d-1)} + c_{2(d-1)} + x_{q(d-1)}-(M-1)-q_{x_i}$ and $x_{id} \leq \frac{c_{1(d-1)} + c_{2(d-1)} + x_{q(d-1)}}{M} \forall d \in [0,1]$ are introduced. These constraints ensures that even if all the minterms of $e_i$ has at least one entity in dead state then it will be alive if the entity is hardened initially. \\

\noindent
With objective (\ref{eqn:ilpobj2}) along with the constraints minimize the number of entities failed at the end of the cascading failure with a hardening budget of $k$ and $K$ entities failing initially. The ILP gives an optimal solution to the ENH problem, however its run time is non-polynomial.
 
\subsubsection{Heuristic Solution}
In this subsection we provide a greedy heuristic solution to the Entity Hardening problem. For a given system $\mathcal{I}(E,\mathcal{F}(E))$ with set of entities $E' (|E'| = K)$ failing initially and hardening budget $k$, a heuristic is developed based on the following two metrics --- (a) \emph{Protection Set} as defined in Section \ref{CompAna}, (b) \emph{Cumulative Fractional Minterm Hit Value (CFMHV)} (Definition \ref{CFM}). 

\begin{defn}
	\label{FM}
	The Fractional Minterm Hit Value of an entity $e_j \in E$ in a system $\mathcal{I}(E,\mathcal{F}(E))$ is denoted as $FMHV(e_j, X)$. It is calculated as $FMHV(e_j, X) = \sum_{i = 1}^{m} \frac{1}{|s_i|}$. In the formulation $m$ are the minterms in which $e_j$ appears over all IDRs except for the IDRs of entities in set $X$. The parameter $s_i$ denotes $i^{th}$ such minterm. If entity $e_j$ is hardened (or protected from failure) then the computed value provides an estimate of the future impact on protection of other non operational entities.  
\end{defn} 

\begin{defn}
	\label{CFM}
	The Cumulative Fractional Minterm Hit Value of an entity $e_j \in E$ is denoted as $CFMHV(e_j)$. It is computed as $CFMHV(e_j) = \sum_{\forall x_i \in PS(e_j|E')} FMHV(x_i, PS(x_i|E'))$. This gives a measure of the future impact on protecting non functional entities when the entity $e_j$ is hardened and entities $PS(e_j|E')$ are protected from failure. 
\end{defn} 

Using these definitions a heuristic is formulated in Algorithm \ref{algHeu}. For each iteration of the while loop in the algorithm, the entity having highest cardinality of the set $PS(x_i| A' \cup B') \cap P$ is hardened. This ensures that at each step the number of entities protected is maximized. In case of a tie, the entity having highest Cumulative Fractional Minterm Hit Value among the set of tied entities is selected. This causes the selection of an entity that has the potential to protect maximum number of entities in subsequent iterations. Thus, the heuristic greedily maximizes the number of entities protected when an entity is hardened at the current iteration with metric to measure its impact of protecting other non operational entities in future iterations. Algorithm \ref{algHeu} runs in polynomial time, more specifically the time complexity is $\mathcal{O}(|P| k (n+m)^2)$ (where $n=|E|$ and $m=$ Number of minterms in $\mathcal{F}(E)$). 

\begin{algorithm}
\small	
	\KwData{A system $\mathcal{I}(E,\mathcal{F}(E))$, set of entities $E'$ failing initially with $|E'|= K$ and hardening budget $k$. 		
		}
	\KwResult{Set of hardened entities $\mathcal{H}$.
		}
	\Begin{	
			Initialize $\mathcal{H} \leftarrow \emptyset$\ and $\mathcal{D} \leftarrow \emptyset$\;
			Update $\mathcal{F}(E)$ as follows --- (a) let $Q$ be the set of entities that does not fail on failing $K$ entities, (b) remove IDRs corresponding to entities in set $Q$, (c) update the minterm of remaining IDRs by removing entities in set $Q$\;
			Update $E \leftarrow E \setminus Q$ \;
			\While {($|\mathcal{H}|$ is not equal to $k$)} {
			For each entity $e_i \in E \backslash \mathcal{D}$ compute the Protection Sets $PS(e_i| E')$ \;
			For each entity $e_i \in E \backslash \mathcal{D}$ compute $CFMHV(e_i)$\;
			\If{There exists multiple entities having same value of highest cardinality of the set $PS(e_i| E')$}{
				Let $e_p$ be an entity having highest $CFMHV(e_p)$ among all $e_p$'s in the set of entities having highest cardinality of the set $PS(e_i| A' \cup B')$\;
				If there is a tie choose arbitrarily\;
				Update $\mathcal{H} \leftarrow \mathcal{H} \cup \{e_p\}$ \;
				Update $\mathcal{D} \leftarrow \mathcal{D} \cup PS(e_p| E')$\;
				Update $\mathcal{F}(E)$ by removing entities in $PS(e_p| E')$ both in the left and right side of the IDRs \;
			}
			\Else{
				Let $e_i$ be an entity having highest cardinality of the set $PS(e_i| E')$\;
				Update $\mathcal{H} \leftarrow \mathcal{H} \cup \{e_i\}$ \;
				Update $\mathcal{D} \leftarrow \mathcal{D} \cup PS(e_i| E')$\;
				Update $\mathcal{F}(E)$ by removing entities in $PS(e_i| E')$ both in the left and right side of the IDRs \;		
			}
			}	
	}
	\textbf{return} $\mathcal{H}$ \;		
\caption{Heuristic Solution to the ENH Problem}
\label{algHeu}
\end{algorithm}

\subsection{Solutions to the Targeted Entity Hardening Problem}
\subsubsection{Optimal solution using Integer Linear Program}
The ILP formulation of the TEH problem is similar to that of ENH problem. The only difference being there is no hardening budget in TEH problem and additionally there is a set $P \subset E$ of entities that should be protected from failure. We use the same notations as of the ILP that solves the ENH problem. Using this the objective of the TEH problem is formulated as follows: \\
\noindent
\begin{equation}\label{eqn:ilpobj1}
min \Big(\overset{n}{\underset{i=1}{\sum}}q_{x_i} \Big)
\end{equation}

\noindent
The constraint sets 2,3, and 4 of the ENH problem is employed in the TEH problem as well along with an additional constraint set as described below: \\

\noindent
{\em Additional Constraint Set:} For all entities $e_i, \in P$, $x_{i(n-1)} = 0$. This ensures that all the entities in set $P$ are protected from failure at the final time step. 

With these constraints, the objective in (\ref{eqn:ilpobj1}) minimizes the number of hardened entities that results in protection of all entities in set $P$. 

\subsubsection{Heuristic Solution}
In this subsection we provide a greedy heuristic solution to the TEH problem. For a given system $\mathcal{I}(E,\mathcal{F}(E))$ with set of entities as $E' (|E'| = K)$ failing initially and set of entities to protet being $P$, a heuristic is developed based on the following two metrics --- (a) \emph{Protection Set} as defined in Section \ref{CompAna}, (b) \emph{Prioritized Cumulative Fractional Minterm Hit Value (PCFMHV)} (Definition \ref{CFMT}). 

\begin{defn}
	\label{FMT}
	The Prioritized Fractional Minterm Hit Value of an entity $e_j \in E$ in an interdependent network $\mathcal{I}(E,\mathcal{F}(E))$ is denoted as $FMHV(e_j, X)$. It is calculated as $PFMHV(e_j, P) = \sum_{i = 1}^{m} \frac{1}{|s_i|}$. In the formulation $m$ are the minterms in which $e_j$ appears over IDRs in non operational entities in set $P$. The parameter $s_i$ denotes $i^{th}$ such minterm. If the $e_j$ is hardened (or protected from failure) the value computed provides an estimate future impact on protection of other non operational entities in set $P$.  
\end{defn} 

\begin{defn}
	\label{CFMT}
	The Prioritized Cumulative Fractional Minterm Hit Value of an entity $e_j \in E$ is denoted as $PCFMHV(e_j)$. It is computed as $PCFMHV(e_j) = \sum_{\forall x_i \in PS(e_j|E')} PFMHV(x_i, PS(x_i|E'))$. This gives a measure of future impact on protecting non functional entities in $P$ when the entity $e_j$ is hardened and entities $PS(e_j|E')$ are protected from failure. 
\end{defn} 

\begin{algorithm}[ht!]
	\small
	\KwData{A system $\mathcal{I}(E,\mathcal{F}(E))$, set of $K$ vulnerable entities and the set $P$ of entities to be protected from failure. 
	}
	\KwResult{A set of entities $H$ to be hardened.
	}
	\Begin{			
		Initialize $\mathcal{D}=\emptyset$ and $H=\emptyset$ \;
		Update $\mathcal{F}(E)$ as follows --- (a) let $Q$ be the set of entities that does not fail on failing $K$ entities, (b) remove IDRs corresponding to entities in set $Q$, (c) update the minterm of remaining IDRs by removing entities in set $Q$\;
		\While{$P \ne \emptyset$}{
			For each entity $e_i \in E \backslash \mathcal{D}$ compute the Protection Sets $PS(e_i| E')$ \;
			For each entity $e_i \in E \backslash \mathcal{D}$ compute $PCFMHV(e_i)$\;
			\If{There exists multiple entities having same value of highest cardinality of the set $PS(e_i| E') \cap P$}{
				Let $e_p$ be an entity having highest $CFMHV(e_p)$ among all $e_p$'s in the set of entities having highest cardinality of the set $PS(e_i| A' \cup B')$\;
				If there is a tie choose arbitrarily\;
				Update $H \leftarrow H \cup \{e_p\}$ \;
				Update $\mathcal{D} \leftarrow \mathcal{D} \cup PS(e_p| E')$\;
				Update $P \leftarrow P \backslash PS(e_p| E')$\;
				Update $\mathcal{F}(E)$ by removing entities in $PS(e_p| E')$ both in the left and right side of the IDRs \;
			}
			\Else{
				Let $e_i$ be an entity having highest cardinality of the set $PS(e_i| E') \cap P$\;
				Update $H \leftarrow H \cup \{e_p\}$ \;
				Update $\mathcal{D} \leftarrow \mathcal{D} \cup PS(e_i| E')$\;
				Update $P \leftarrow P \backslash PS(e_i| E')$\;
				Update $\mathcal{F}(E)$ by removing entities in $PS(e_i| E')$ both in the left and right side of the IDRs \;		
			}
		}
		\Return{$H$} \;
	}		
	\caption{Heuristic solution to the TEH problem}
	\label{alg:alg2}
\end{algorithm} 

Using these definitions, the heuristic is formulated in Algorithm \ref{alg:alg2}. For each iteration of the while loop in the algorithm, the entity having highest cardinality of the set $PS(x_i| A' \cup B') \cap P$ is hardened. This ensures that at each step the number of entities protected in set $P$ is maximized. In case of a tie, the entity having highest Prioritized Cumulative Fractional Minterm Hit Value among the set of tied entities is selected. This causes the selection of the entity that has the potential to protect maximum number of entities in updated set $P$ in subsequent iterations. Thus, the heuristic greedily minimizes the set of entities hardened which would cause protection of all entities in $P$. The heuristic overestimates the cardinality of $H$ from the optimal solution. Algorithm \ref{alg:alg2} runs in polynomial time, more specifically the time complexity is $\mathcal{O}(|P| n (n+m)^2)$ (where $n=|E|$ and $m=$ Number of minterms in $\mathcal{F}(E)$). 

It is to be noted if Algorithm \ref{algHeu} and \ref{alg:alg2} returns $H$ (or $\mathcal{H}$ as in ENH problem) with $|H| \ge K$ then we harden entities belonging to the set of $K$ initially failing entities. This is because hardening these $K$ initially failing entities would protect all entities in the interdependent network from failure.

\section{Generating IDRs and Experimental Results}
\label{ExpRes}
\subsection{Generating Dependency Equations for Power Network}
In this subsection, we describe a strategy to generate dependency equations of an intra-dependent power network. We restrict to load bus, generator bus, neutral bus and transmission line as the entities in the power network. For a given power network, AC power equations are solved to determine the direction of flow in the transmission lines. We use the power flow solver available in MatPower software for different bus systems \cite{zimmerman2011matpower}. For a given set of buses and transmission lines, the MatPower software uses load demand of the bus, impedance of the transmission lines etc. to solve the power flow and outputs the voltage of each bus in the system. We restrict to real power flow analysis. For a given solution, the real part of generation is taken as the power generated by a generator bus. Similarly, the real part of the load demand is taken as demand value of a load bus. For two buses $e_1$ and $e_2$ connected by a transmission line $e_{12}$ the power flowing through the transmission line is calculated as $P_{12} = Real(V_1 * (\frac{V_1 - V-2}{I_{12}})^*)$, where $V_1$ is the voltage at bus $e_1$, $V_2$ is the voltage at bus $e_2$, $I_{12}$ is the impedance of the transmission line $e_{12}$ and $(\frac{V_1 - V-2}{I_{12}})^*$ denotes the complex conjugate of $(\frac{V_1 - V_2}{I_{12}})$. $P_{12}$ is the real component of the power flowing in the transmission line $e_{12}$. Power flows from bus $e_1$ to $e_2$ if $P_{12}$ is positive and from bus $e_2$ to $e_1$ otherwise.  

The generation of the dependency equation is explained through a nine bus system shown in Figure \ref{fig:dependentPower}. The figure represents a system $\mathcal{I}(E,\mathcal{F}(E))$ with set $E$ consisting of generator buses from $G_1$ through $G_3$, load buses $L_1$ through $L_4$, neutral buses $\{N_1,N_2\}$ and transmission lines $T_1$ through $T_9$. The values in the red blocks denote the amount of power a generator is generating, the green block being the load requirements and blue neutral (value of $0$). The value in the grey blocks correspond to power flow in the transmission lines. The transmission lines don't have any IDR. The IDRs for a bus $b_1$ is constructed by the following --- (a) let $b_2$, $b_3$ be the buses and $b_{12}$ (between $b_1$ and $b_2$) and $b_{13}$ between ($b_1$ and $b_3$) be the transmission lines for which power flows from these buses to $b_1$, (b) the dependency equation for the bus $b_1$ is constructed as conjunction of minterms of size $2$ (consisting of the bus from which power is flowing and the transmission line) with each conjunction corresponding to bus that has power flowing to it. For this example the dependency equation $b_1 \leftarrow b_{12} b_2 + b_{13} b_{3}$ is created. Using this definition the dependency equations for the buses in Figure \ref{fig:dependentPower} are created and is shown in Table \ref{tb:depenPower}. 

The following points are to be noted regarding the generation rule --- (a) The transmission lines can only fail initially due to a man made attack or natural disaster. Hence it entails the underlying assumption that the transmission lines would have enough capacity to transmit any power that is required to flow in it, (b) The generator bus is also only susceptible to initial failure and is assumed to have a generation capacity that is enough to supply the power demanded by a instance of power flow, (c) Neutral and Load buses are prone to both initial and induced failure. For example consider the failure of transmission lines $T_9$ and $T_1$ at $t=0$. Owing to this the load bus $L_1$ and neutral bus $N_2$ fails at $t=2$. At $t=3$ load bus $L_2$ fails due to the failure of buses $L_1, N_2$. It is to be noted that load bus $L_3$ does not fails as it still receives power from $N_1$ as transmission line $T_4$ is expected to have a capacity that can support a power flow equal to the demand of $L_3$.

\begin{table}[ht]
\begin{minipage}[b]{0.35\linewidth}
\centering
\begin{tabular}{ | l | r | r | r |} \hline
   {\bf Dependency Equations} \\ \hline
			$L_1 \leftarrow T_1 G_1$  \\ \hline
			$L_2 \leftarrow T_2 L_1 + T_7 N_2$  \\ \hline
			$L_3 \leftarrow T_3 L_1 + T_4 N_1$   \\ \hline
			$L_4 \leftarrow T_6 N_1 + T_8 N_2$ \\ \hline
			$N_1 \leftarrow T_5 G_3$ \\ \hline
			$N_2 \leftarrow T_9 G_2$ \\ \hline
		\end{tabular}
		\caption{IDRs of the buses in Figure \ref{fig:dependentPower}}
		\protect\label{tb:depenPower}
\end{minipage}\hfill
\begin{minipage}[b]{0.7\linewidth}
\centering
\includegraphics[width=0.8\textwidth]{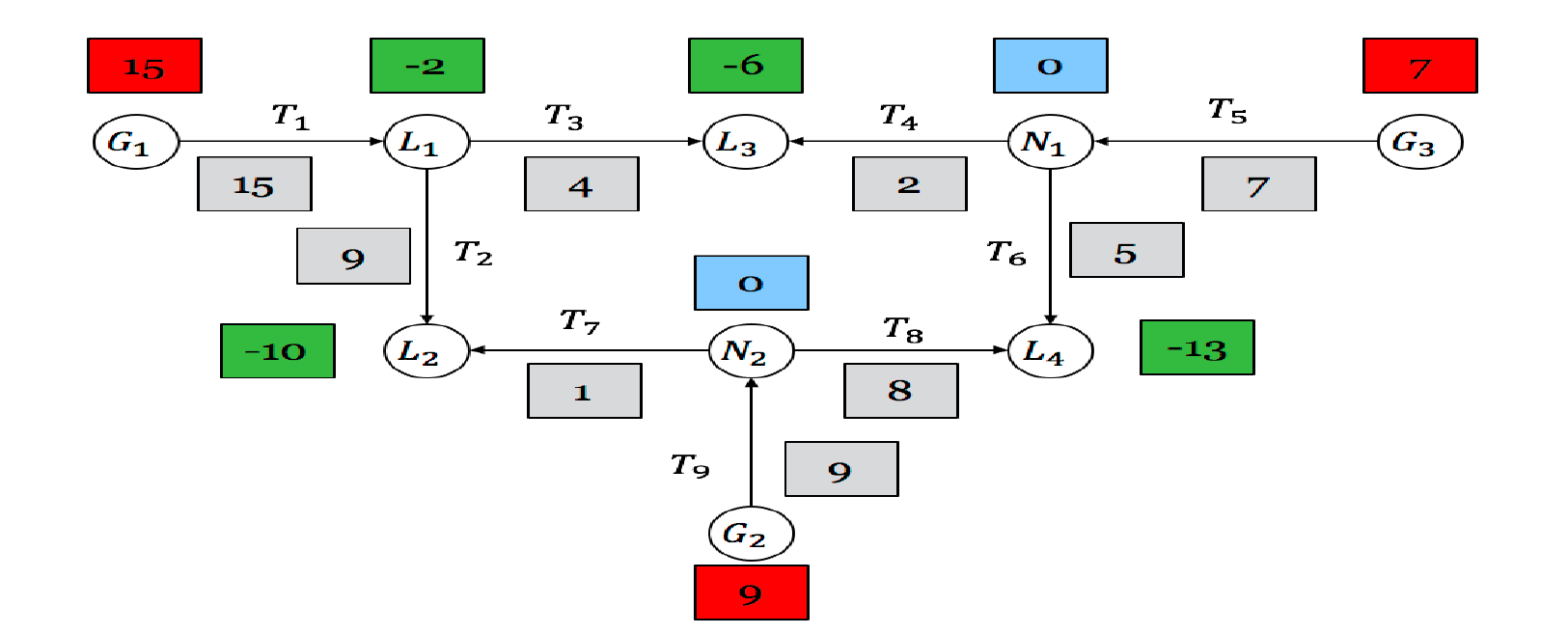}
    \captionof{figure}{Example of Power Network Dependency}
    \label{fig:dependentPower}
\end{minipage}
\end{table}

Owing to the underlying assumptions in the the creation of dependency equations, there is a limitation to its applicability to real world problems. However, with respect to power network, creating dependency equations like the one discussed is a preliminary step. Further research is required to be done to have a more accurate abstract representation of the dependency equations that can have widespread applicability to real world problems. The purpose of this subsection is --- (1) presenting a preliminary way the dependency equations can be generated for power network, (2) larger data sets that can be used to measure the performance of the optimal solution to the heuristic. 

\subsection{Generating Dependency Equations for Interdependent Power-Communication Network}
In this subsection, we describe rules to generate dependency relations for interdependent power and communication network infrastructure as used in \cite{sen2014identification}. Real world data of Maricopa County, Arizona , USA was taken. This county is one of the most densest populated region of Arizona with approximately $60 \%$ residents. Specifically, we wanted to measure the amount of resource required to protect entities in particular regions of the county when these regions have a set of entities failing initially. The data for power network was obtained from Platts (http://www.platts.com/) that contains $70$ generator buses (including solar homes that generate minuscule unit of power) and $470$ transmission lines. The communication network data was obtained from GeoTel (http://www.geo-tel.com/) consisting of  $2,690$ cell towers, $7,100$ fiber-lit buildings and  $42,723$ fiber links. Figures \ref{fig:powerNet} and \ref{fig:comNet} displays the snapshot of power network and communication network for a particular region of Maricopa county. In Figure \ref{fig:powerNet} the orange dots represent the generator buses and continuous yellow lines represent the transmission lines. In Figure \ref{fig:comNet} fiber-lit buildings are represented by pink dots, cell towers by orange dots and fiber links by continuous green lines. 

The \emph{load} of the power network are assumed to be cell towers and fiber-lit buildings. There exist other entities that draws electrical power. Since it is not relevant for the comparative analysis of the heuristic and the ILP such entities are ignored. The interdependent power-communication system is represented mathematically as $\mathcal{I}(E,\mathcal{F}(E))$ with $E = A \cup B$. $A$ and $B$ consist of the entities in the power network and communication network respectively. With respect to this data the power network consist of three type of entities --- generating stations, load (which are cell towers and fiber-lit buildings) and transmission lines (denoted by $a_1,a_2,a_3$ respectively). The communication network comprises of the following type of entities --- cell towers, fiber-lit buildings and fiber links (denoted by $b_1,b_2,b_3$ respectively). It is to be noted that the fiber-lit buildings and cell towers are considered as both power network entities as well as communication network entities. From the raw data the dependency equations are constructed using the following rules.

\vspace{0.05in}
\noindent
{\bf Rules:} We take into consideration that an entity in the power network is dependent on a set of entities in the communication network for either being operational and vice-versa. To keep things uncomplicated, we consider the dependency equations with at most two minterms. For the same reason we consider the size of each minterm is at most two. 

\vspace{0.05in}
\noindent
{\em Generators ($a_{1, i}, 1 \leq i \leq p$, where $p$ is the total number of generators):} We assume that every generator ($a_{1.i}$) is, i) dependent on the closest Cell Tower ($b_{1,j}$), or, ii) closest Fiber-lit building ($b_{2,k}$) and the corresponding Fiber link ($b_{3,l}$) connecting $b_{2,k}$ and $a_{1,i}$. Hence, we have $a_{1,i} \leftarrow b_{1,j}+b_{2,k} \times b_{3,l}$.

\vspace{0.05in}
\noindent
{\em Load ($a_{2,i}, 1 \leq i \leq q$, where $q$ is the total number of loads):} The power network loads do not depend on any entities in communication network

\vspace{0.05in}
\noindent
{\em Transmission Lines ($a_{3,i}, 1 \leq i \leq r$, where $r$ is the total number of transmission lines):} The transmission lines in the power network do not depend on any entities in communication network.

\vspace{0.05in}
\noindent
{\em Cell Towers ($b_{1,i}, 1 \leq i \leq s$, where $s$ is the total number of cell towers):} The Cell Towers depend on two components, i) the closest pair of generators, and, ii) corresponding transmission line, connecting the generator to the cell tower. Thus we have $b_{1,i} \leftarrow a_{1,j} \times a_{3,k}+a_{1,j\rq{}} \times a_{3,k\rq{}}$.

\vspace{0.05in}
\noindent
{\em Fiber-lit Buildings ($b_{2,i}, 1 \leq i \leq t$, where $t$ is the total number of fiber-lit buildings): } The Fiber-lit Buildings depend on two components, i) the closest pair of generators, and, ii) corresponding transmission line, connecting the generator to the fiber-lit buildings. Thus we have $b_{2,i} \leftarrow a_{1,j} \times a_{3,k}+a_{1,j\rq{}} \times a_{3,k\rq{}}$.

\vspace{0.05in}
\noindent
{\em Fiber Links ($b_{3,i}, 1 \leq i \leq u$, where $u$ is the total number of fiber links):} The Fiber Links aren't dependent on any power network entity. These links require power only for the amplifiers connected to them. The amplifiers are required if the length of the fiber link is above a certain threshold. We consider only those fiber links which are 'quite long', need power. The fiber links depend on the closest pair of generators and the transmission lines connecting the generators to the fiber link under consideration. Thus we have $b_{3,i} \leftarrow a_{1,j} \times a_{3,k}+a_{1,j\rq{}} \times a_{3,k\rq{}}$. We do not consider that these fiber links need any power as we cannot determine the length of the fiber links or the exact threshold value due to the lack of data.

\begin{figure}[ht]
\centering
\begin{subfigure}{.45\textwidth}
  \centering
  \includegraphics[width=7cm,height=7cm,keepaspectratio]{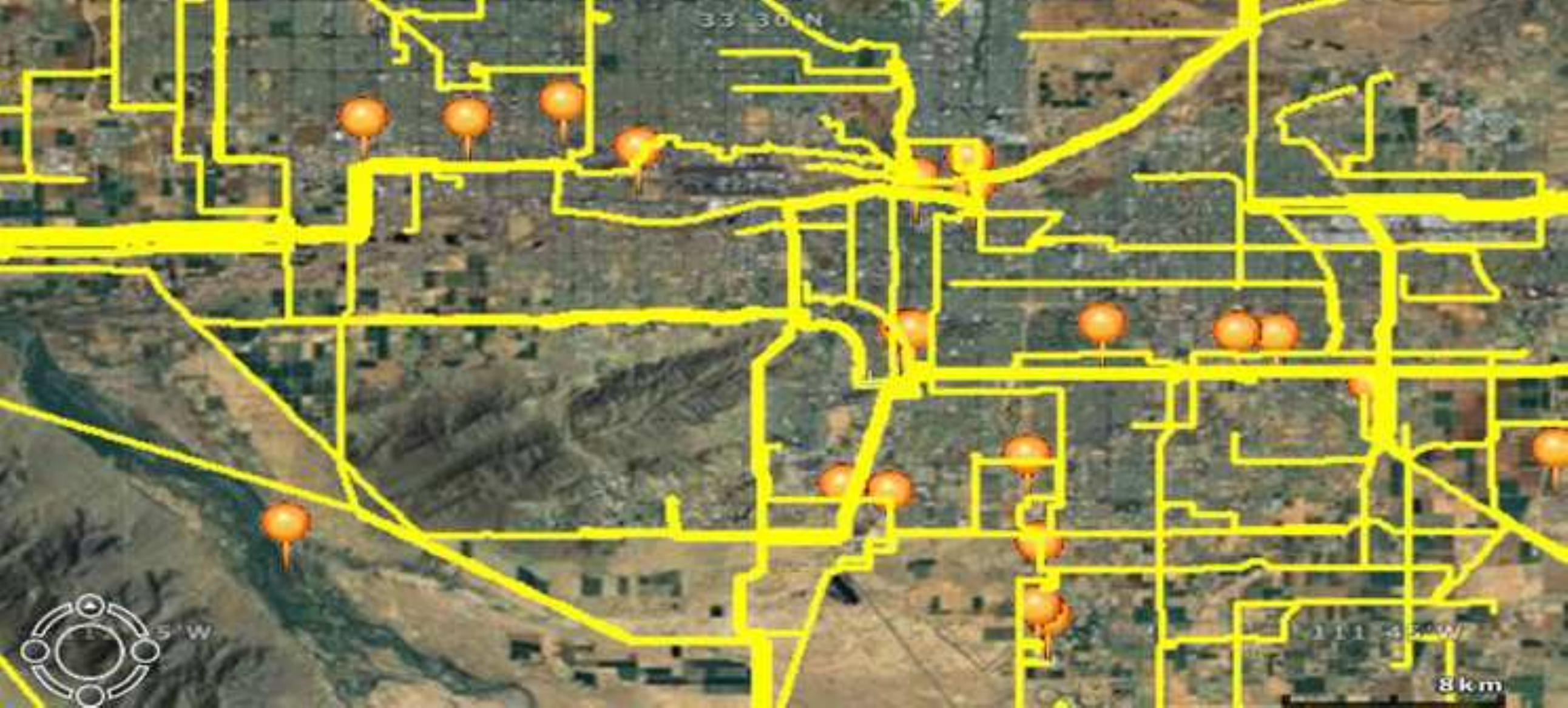}
  \caption{Snapshot of Power Network}
  \label{fig:powerNet}
\end{subfigure}
\begin{subfigure}{.45\textwidth}
  \centering
  \includegraphics[width=7cm,height=7cm,keepaspectratio]{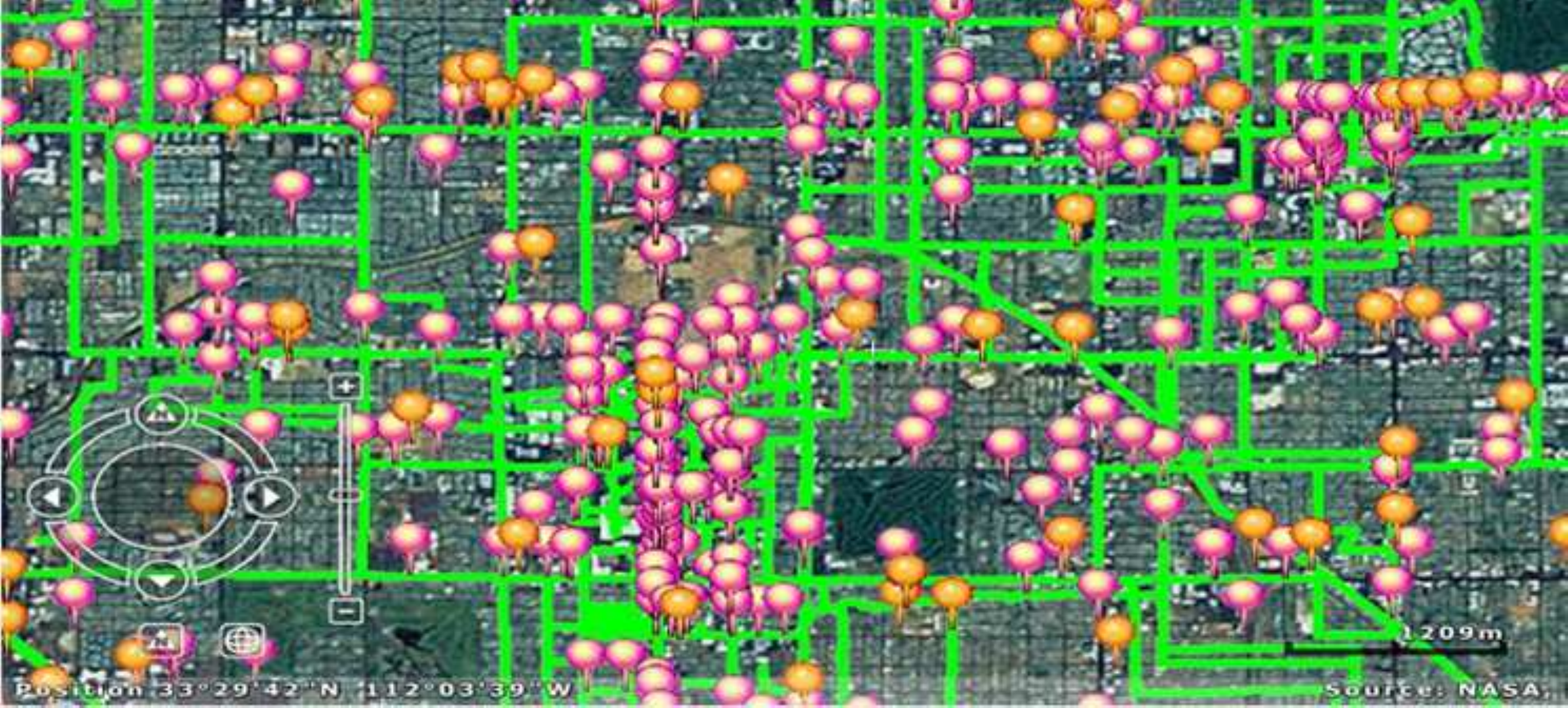}
  \caption{Snapshot of Communication Network}
  \label{fig:comNet}
\end{subfigure}
\caption{Snapshots of the real world data corresponding to power and communication network}
\label{fig:data}
\end{figure}

\subsection{Comparative Study of the ILP and Heuristic for the Problems} 
A comparative study of the ILP and heuristic solution for both the problems is done in this subsection. A machine with intel i5 processor and 8 GB of RAM was used to execute the solutions. The coding was done in \emph{java} and a student licensed \emph{IBM CPLEX} external library file is used to execute the ILP. $8$ different bus systems available from MatPower with number buses $24,30,39,57,89,118,145,300$ were used to generate the dependency equations for power network (using the rules described in Section 7.1). The time to generate the dependency equations were less than $2 ms$. Within the Maricopa county $4$ disjoint regions were considered labeled as Region 1 through 4. Dependency equations for the interdependent power-communication network were generated for these regions using the rules described in Section 7.2. The java codes along with data files of the generated dependency equations are open sourced and is available in the following url \emph{https://github.com/jbanerje1989/HardeningProblem}.

The number of entities in each of the $12$ data sets are enumerated in Table \ref{tbl:Sets}. To determine the initially failing entities we used the ILP solution of $\cal{K}$ most vulnerable entities in \cite{sen2014identification}. The $\cal{K}$ most vulnerable entities problem finds a set of $\cal{K}$ entities in a system $\mathcal{I}(E,\mathcal{F}(E))$ which when failed at $t=0$ causes the maximum number of entities to fail. For a given data set representing a system $\mathcal{I}(E,\mathcal{F}(E))$, for both the problems the initially failing entities was taken as a set $E'$ ($|E'| = K$) such that --- (a) The set $E'$ constitutes the $K$ most vulnerable entities in the system, (b) Failing the entities in set $E'$ would cause failure of at least $|E| / 2$ entities in total. The cardinality of the set $E'$ along with the total number of entities failed are enumerated in Table \ref{tbl:Sets}.

In comparing the ILP and heuristic solution of the ENH problem we considered $5$ distinct hardening budgets for each data set. With $K$ being the number of initially failing entities in a data set the hardening budgets were chosen between $[1, K-1]$ (with value of $K$ obtained from Table \ref{tbl:Sets}). It is also ensured that the hardening budgets chosen had a high variance. Figures \ref{fig:case24} - \ref{fig:D4} shows the total number of entities protected from failure for each data set using the ILP and heuristic solution. The run-time performance of the solutions are provided in Table \ref{tbl:RuntimeH} (in the table 'Heu' refers to the heuristic solution and $Hi$ refers to the hardening budget corresponding to the $i^{th}$ budget from left used in the bar graph plots). From Figures \ref{fig:case24} - \ref{fig:D4} it can be seen that the heuristic performs almost similar to that of the ILP solution in terms of quality. The maximum percent difference of the total number of entities protected in the ILP when compared to the heuristic solution occurs for a hardening budget of $39$ in the $145$ bus system (Figure \ref{fig:case145}) with the percent difference being $3.1 \%$. In terms of run-time, heuristic outperforms the ILP with the heuristic computing solutions nearly $200$ times faster in larger systems (as seen for the $300$ bus system in Table \ref{tbl:RuntimeH}). Hence it can be reasonably argued that the heuristic produces fast near optimal solutions for the ENH problem.

A similar kind of experimental analysis is performed for the TEH problem. $5$ distinct protection sets $P$ were considered for each data set. Let $F$ denote the set entities failed in total when $K$ entities fail initially. The cardinality of set $F$ and the value of $K$ was taken from Table \ref{tbl:Sets} for each data set. The cardinality of the protection set for a given data set was chosen between $[1,|F| - 1]$ ensuring that the chosen values have high variance. For a given cardinality $C$ the protection set $P$ was constructed by choosing $C$ entities from the set $F$ corresponding to a particular data set. Figures \ref{fig:case24T} - \ref{fig:D4T} shows the comparison of the Heuristic solution with the ILP in terms of total number of entities hardened for a given cardinality of protection budget. The run-time comparison of the solutions are provided in Table \ref{tbl:RuntimeT}. A maximum percent difference of $25 \%$ (ILP compared with Heuristic) in the number of entities hardened can be seen in Region 2 for a $|P|$ value of $13$ (Figure \ref{fig:D2T}). However, for most of the cases the heuristic produces near optimal or optimal solution. The heuristic also compute the solutions nearly $200$ times faster than the ILP for larger systems as seen in Table \ref{tbl:RuntimeT}. Hence it can be claimed that the heuristic solution to the TEH problem produces near optimal solution at a much faster time compared to the ILP solution.

\begin{table}[ht]
    \centering
		\begin{tabular}{|c|c|c|c|}  \hline
			\multicolumn{1}{|c}{\bf{DataSet}} & \multicolumn{1}{|c|}{\bf{Num. Of Entities}} & \multicolumn{1}{|c|}{\bf{$K$}} & \multicolumn{1}{|c|}{\bf{Num. of Entities Killed}} \\ \cline{1-4}
			\bf{24 bus} & $58$ & $8$ & $29$ \\ \hline
			\bf{30 bus} & $71$ & $13$ & $36$ \\ \hline
			\bf{39 bus} & $84$ & $17$ & $42$ \\ \hline
			\bf{57 bus} & $135$ & $26$ & $68$ \\ \hline
			\bf{89 bus} & $295$ & $78$ & $147$ \\ \hline
			\bf{118 bus} & $297$ & $89$ & $149$ \\ \hline
			\bf{145 bus} & $567$ & $191$ & $284$ \\ \hline
			\bf{300 bus} & $709$ & $145$ & $355$ \\ \hline
			\bf{Region 1} & $48$ & $6$ & $26$ \\ \hline
			\bf{Region 2} & $46$ & $8$ & $23$ \\ \hline
			\bf{Region 3} & $48$ & $6$ & $24$ \\ \hline
			\bf{Region 4} & $53$ & $8$ & $27$ \\ \hline
		\end{tabular}
		\captionsetup{justification=centering}
		\caption{Number of entities, $K$ value chosen and number of entities failed when the $K$ vulnerable entities are failed initially for different data sets}
		\protect\label{tbl:Sets}
\end{table}

\noindent\begin{minipage}{\textwidth}
\begin{minipage}[b]{0.3\linewidth}
\centering
\includegraphics[width=5cm,height=5cm,keepaspectratio]{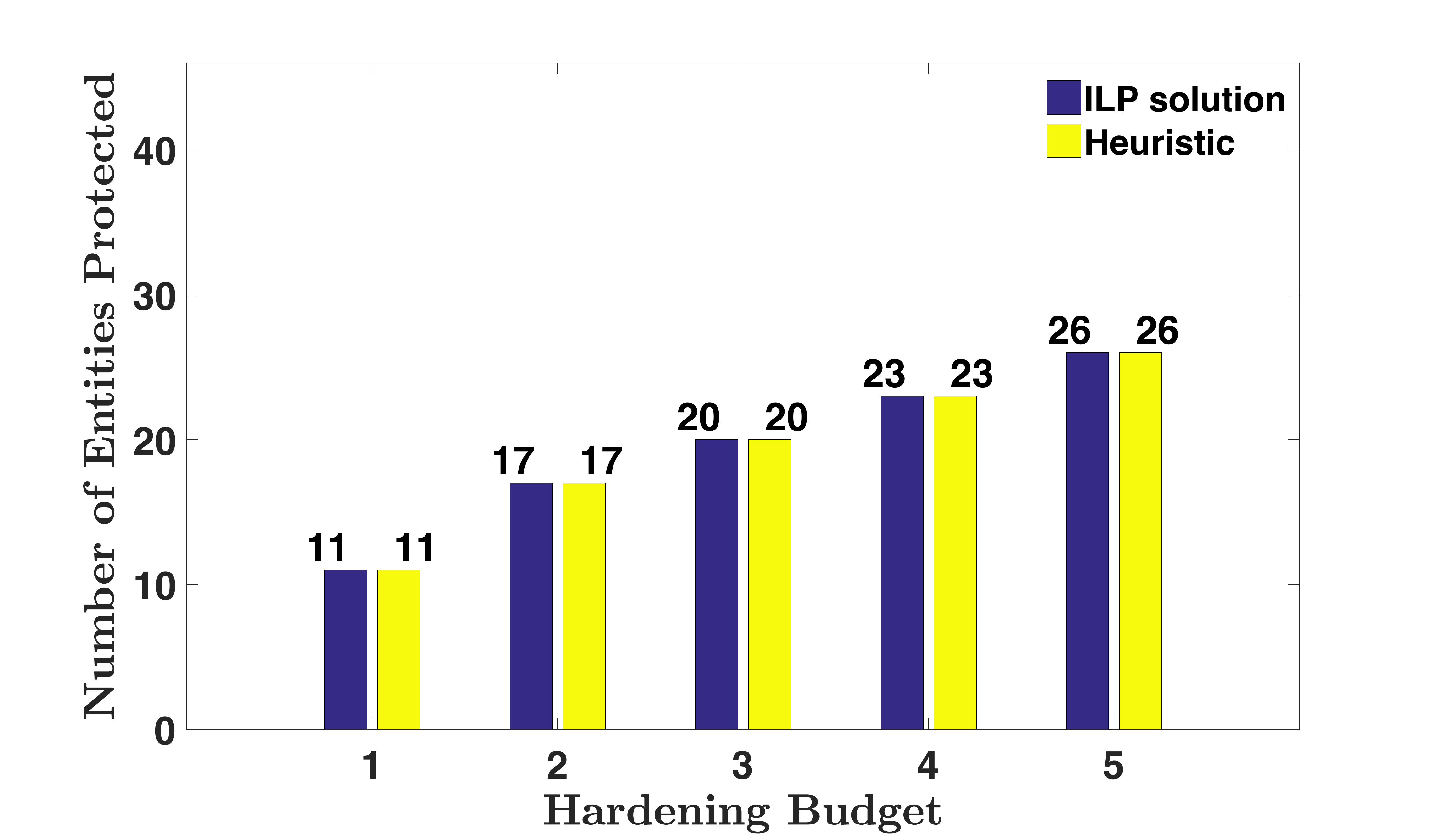}
  \captionsetup{justification=centering}
  \captionof{figure}{Comparison of ILP solution with Heuristic for 24 bus system (ENH)}
  \label{fig:case24}
\end{minipage} \hfil{}
\begin{minipage}[b]{0.3\linewidth}
\centering
\includegraphics[width=5cm,height=5cm,keepaspectratio]{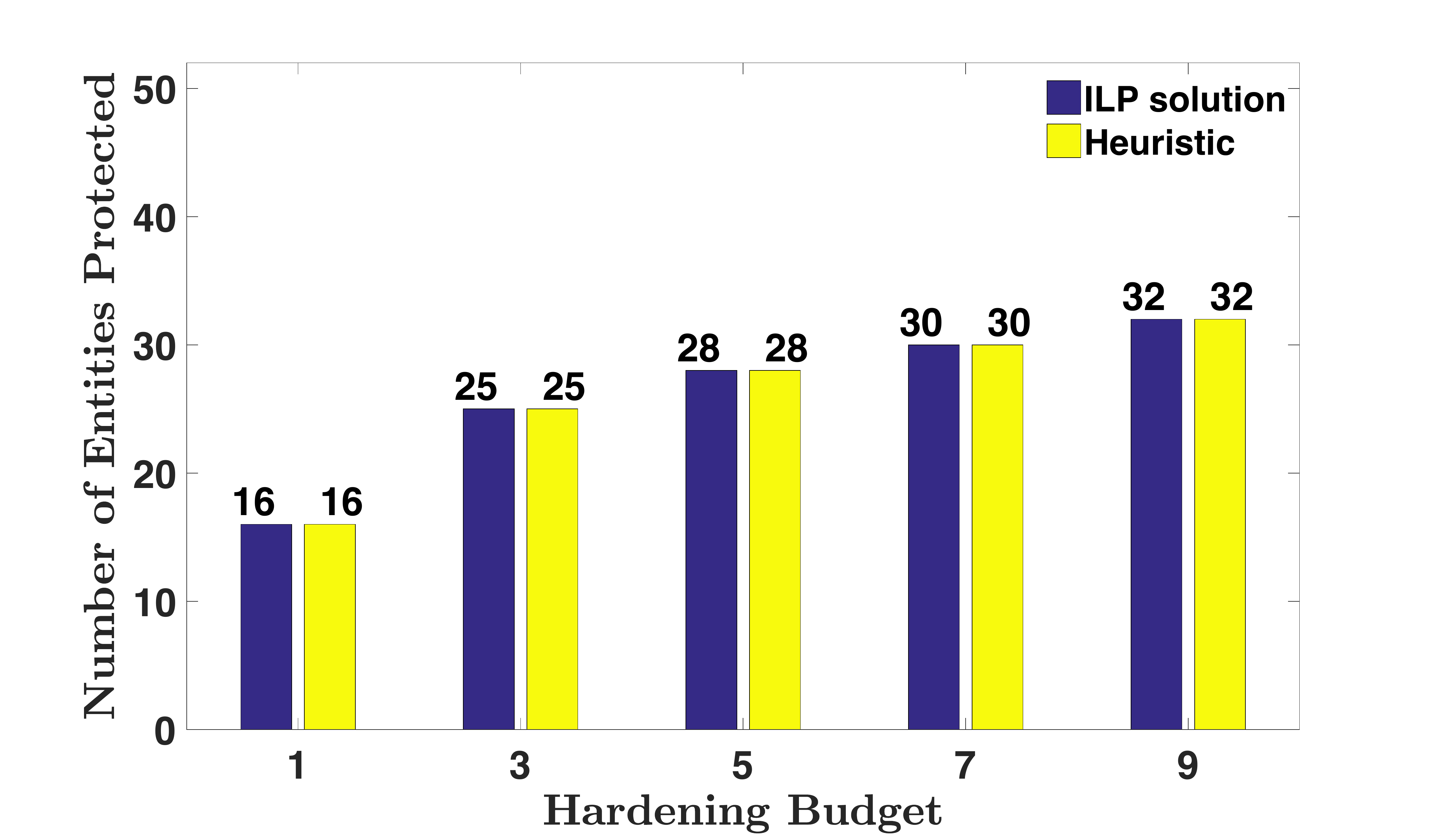}
  \captionsetup{justification=centering}
  \captionof{figure}{Comparison of ILP solution with Heuristic for 30 bus system (ENH)}
  \label{fig:case30}
\end{minipage}
\hfil{}
\begin{minipage}[b]{0.3\linewidth}
\centering
\includegraphics[width=5cm,height=5cm,keepaspectratio]{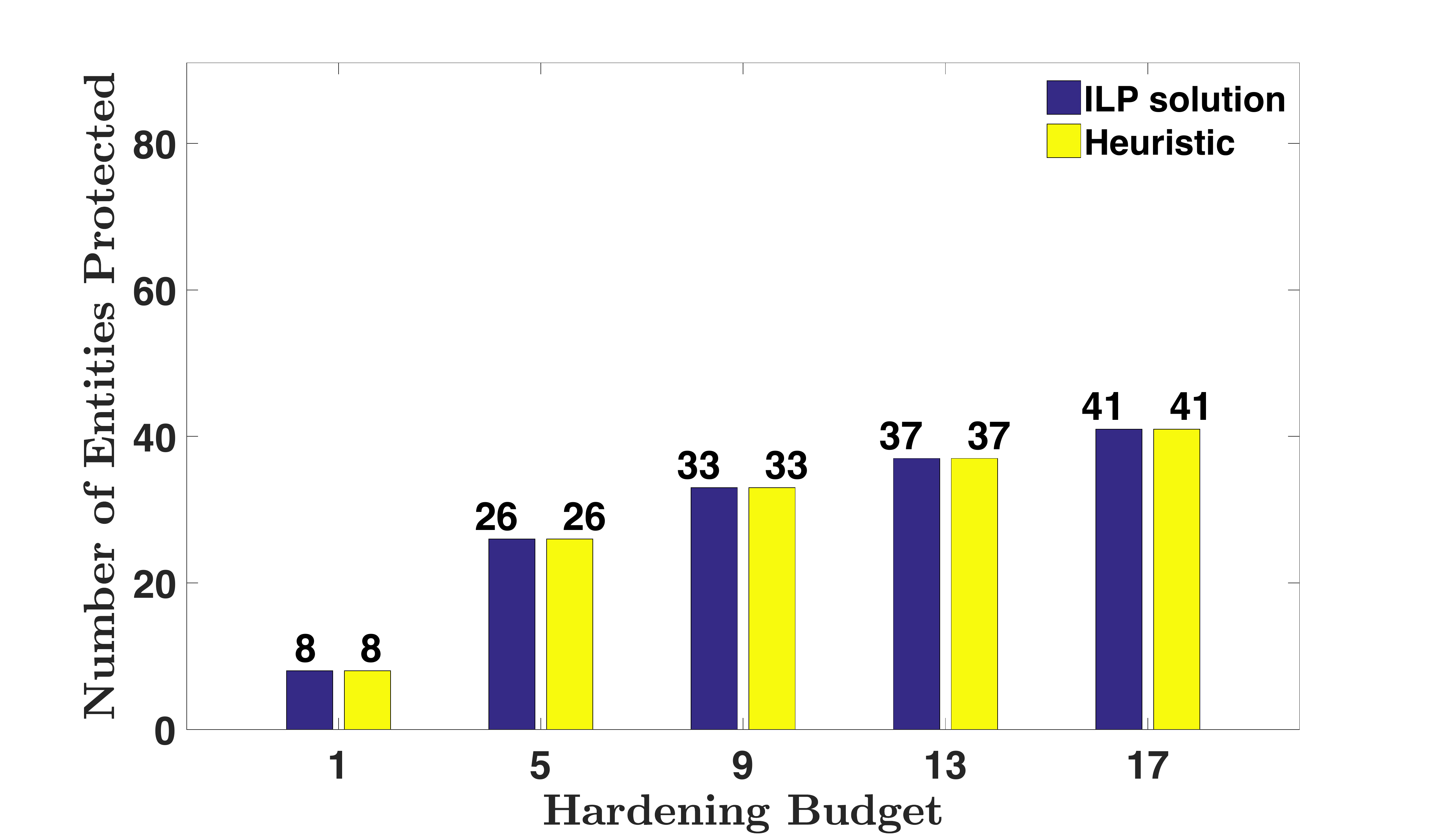}
  \captionsetup{justification=centering}
  \captionof{figure}{Comparison of ILP solution with Heuristic for 39 bus system (ENH)}
  \label{fig:case39}
\end{minipage}
\end{minipage}

\noindent\begin{minipage}{\textwidth}
\begin{minipage}[b]{0.3\linewidth}
\centering
\includegraphics[width=5cm,height=5cm,keepaspectratio]{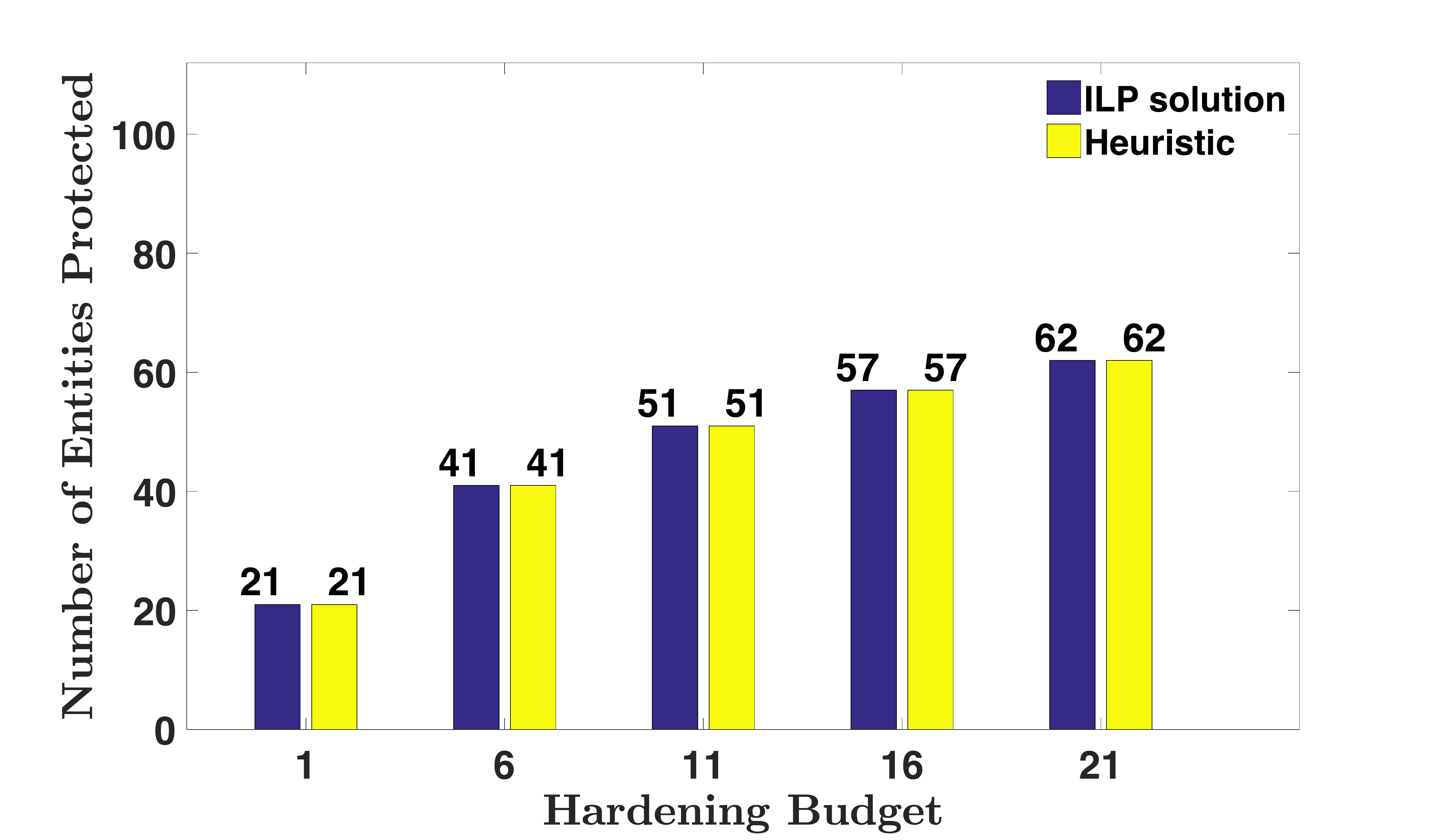}
  \captionsetup{justification=centering}
  \captionof{figure}{Comparison of ILP solution with Heuristic for 57 bus system (ENH)}
  \label{fig:case57}
\end{minipage} \hfil{}
\begin{minipage}[b]{0.3\linewidth}
\centering
\includegraphics[width=5cm,height=5cm,keepaspectratio]{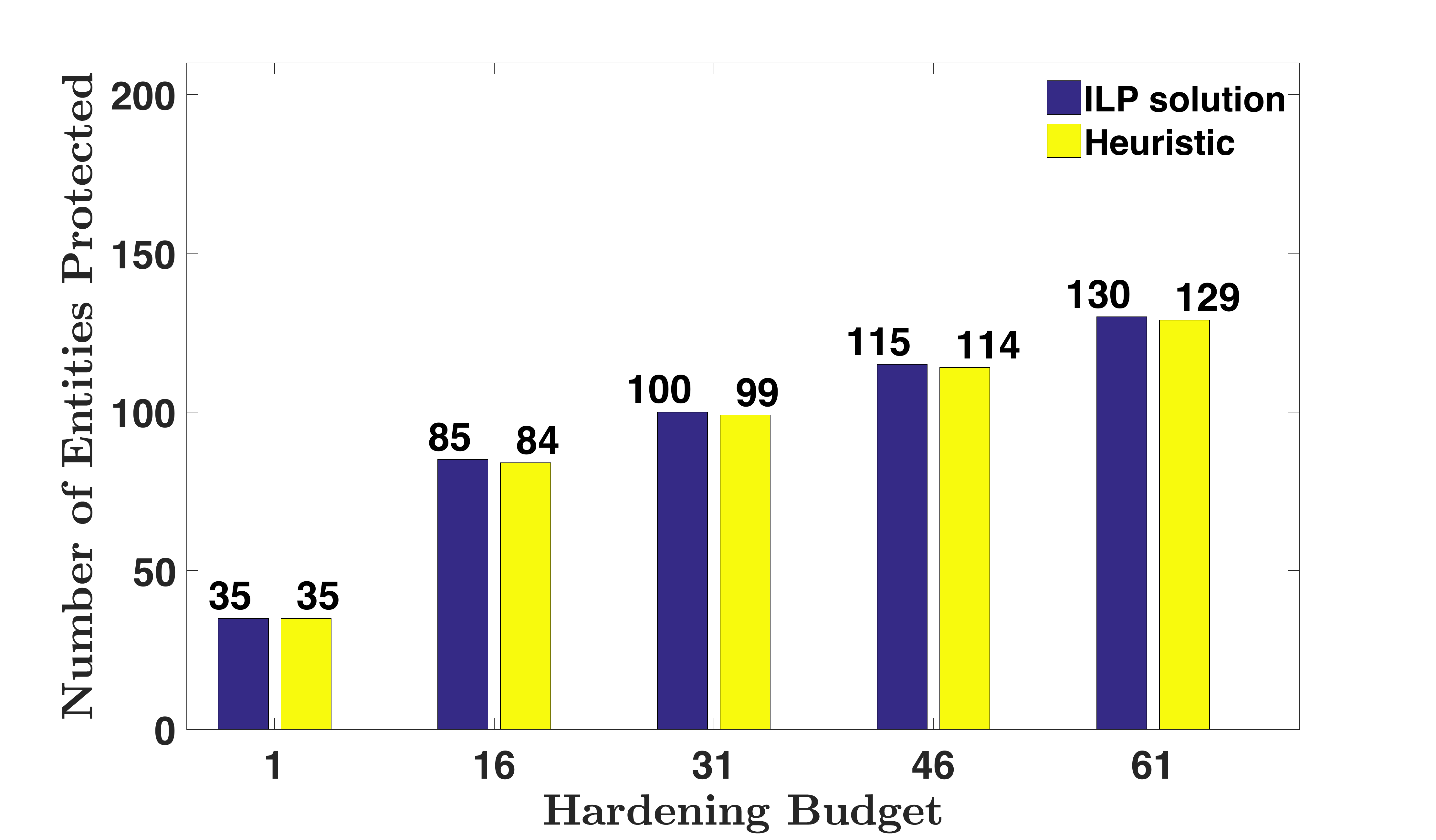}
  \captionsetup{justification=centering}
  \captionof{figure}{Comparison of ILP solution with Heuristic for 89 bus system (ENH)}
  \label{fig:case89}
\end{minipage}
\hfil{}
\begin{minipage}[b]{0.3\linewidth}
\centering
\includegraphics[width=5cm,height=5cm,keepaspectratio]{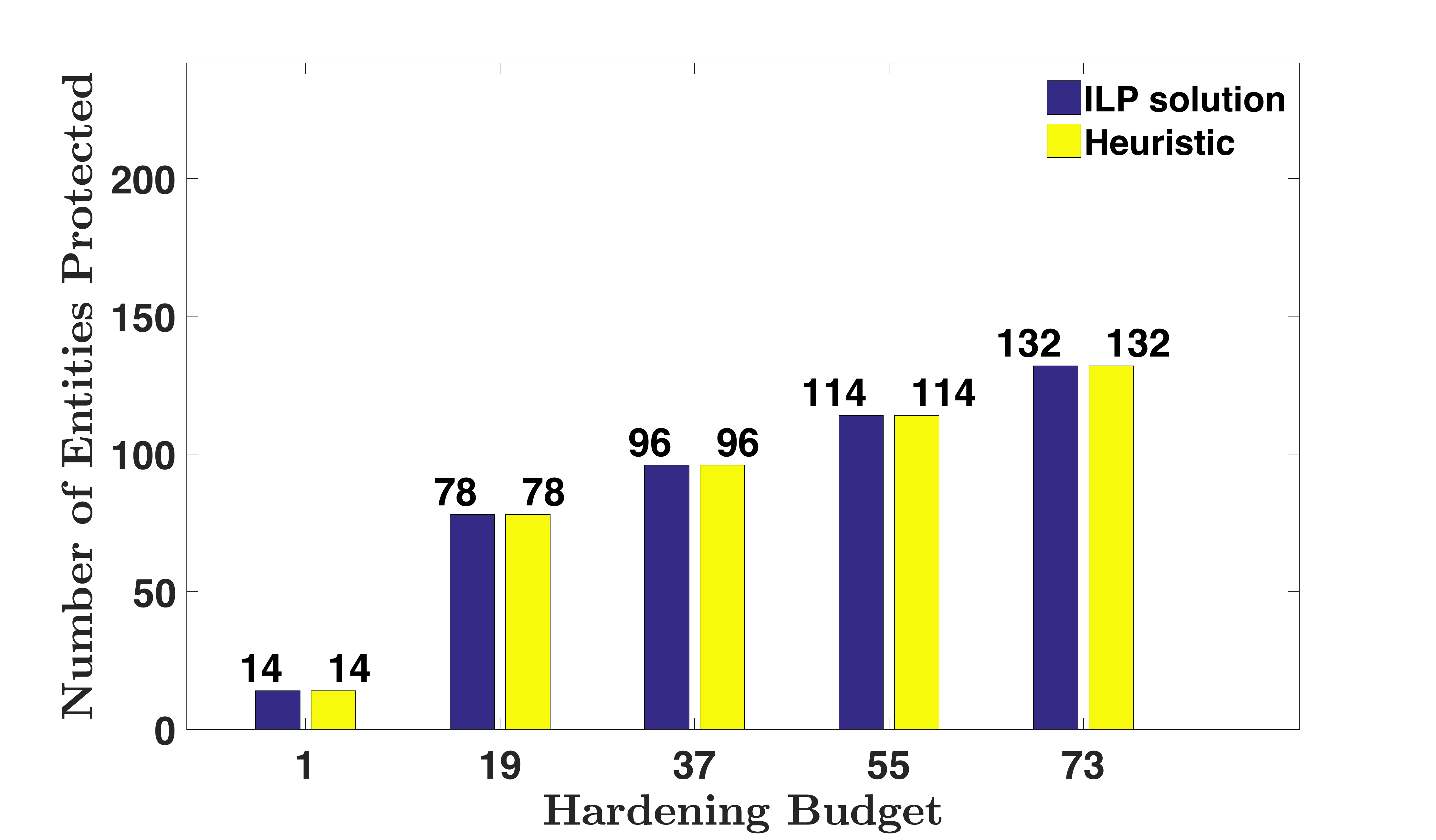}
  \captionsetup{justification=centering}
  \captionof{figure}{Comparison of ILP solution with Heuristic for 118 bus system (ENH)}
  \label{fig:case118}
\end{minipage}
\end{minipage}

\noindent\begin{minipage}{\textwidth}
\begin{minipage}[b]{0.3\linewidth}
\centering
\includegraphics[width=5cm,height=5cm,keepaspectratio]{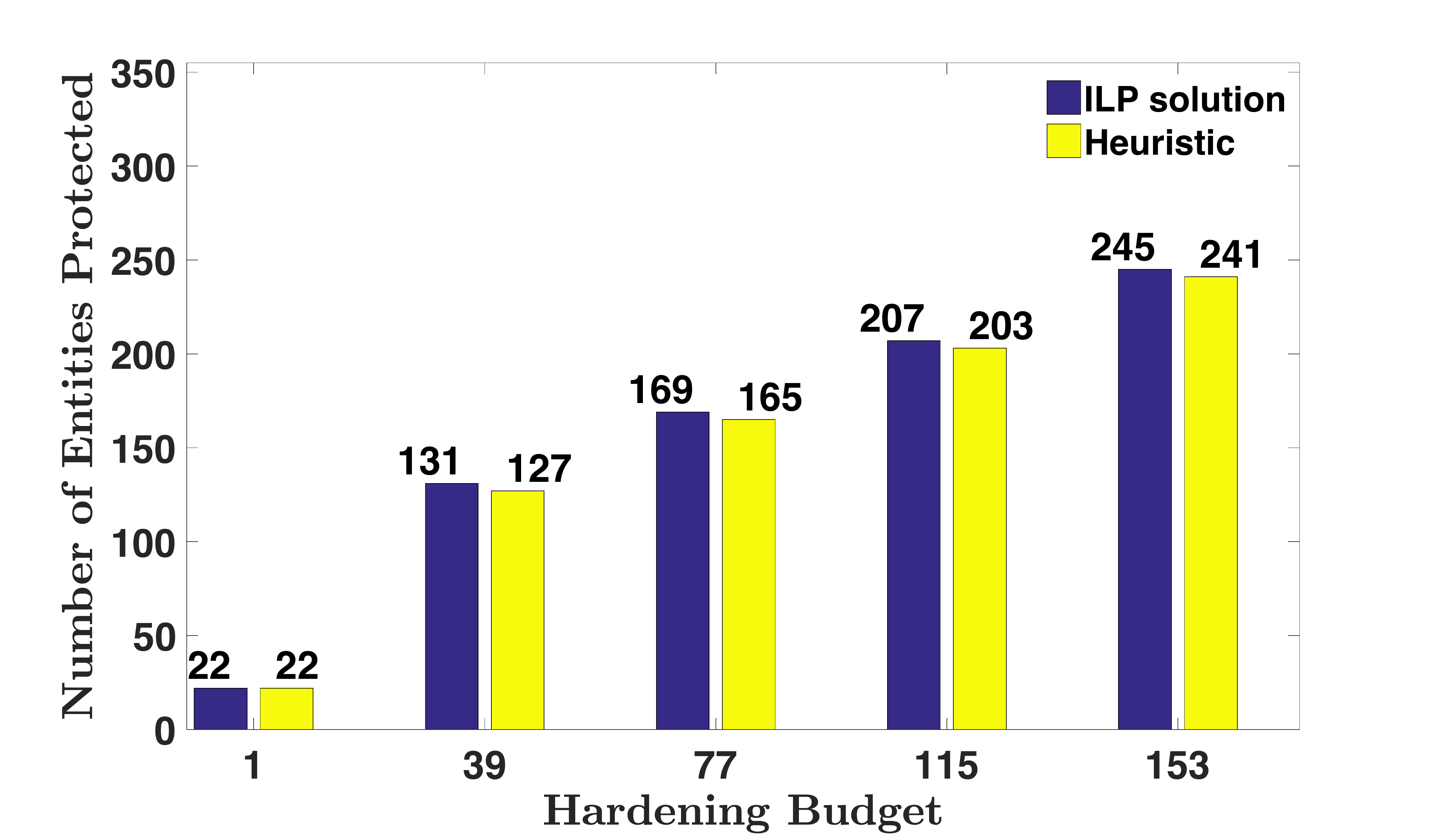}
  \captionsetup{justification=centering}
  \captionof{figure}{Comparison of ILP solution with Heuristic for 145 bus system (ENH)}
  \label{fig:case145}
\end{minipage} \hfil{}
\begin{minipage}[b]{0.3\linewidth}
\centering
\includegraphics[width=5cm,height=5cm,keepaspectratio]{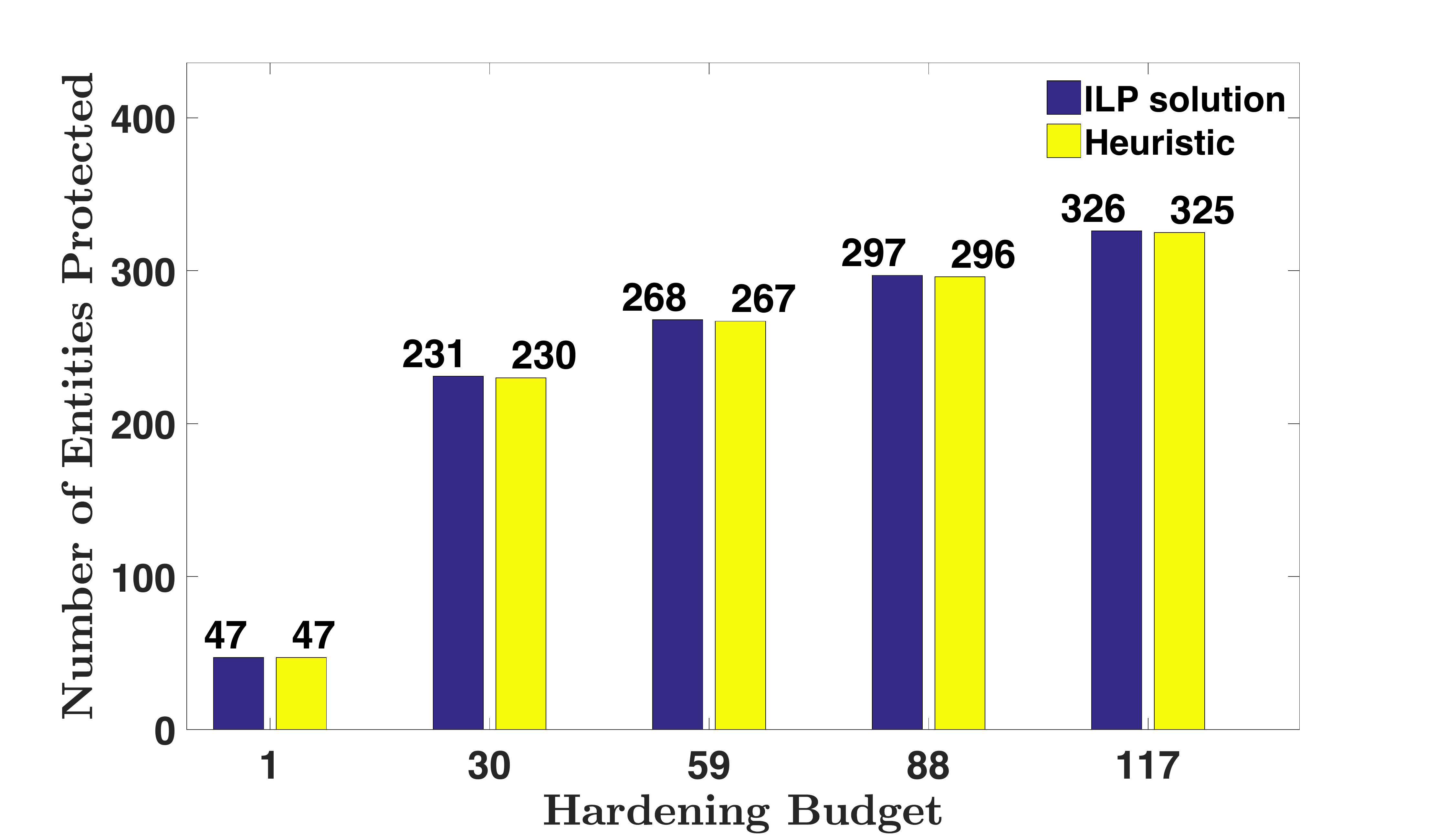}
  \captionsetup{justification=centering}
  \captionof{figure}{Comparison of ILP solution with Heuristic for 300 bus system (ENH)}
  \label{fig:case300}
\end{minipage}
\hfil{}
\begin{minipage}[b]{0.3\linewidth}
\centering
\includegraphics[width=5cm,height=5cm,keepaspectratio]{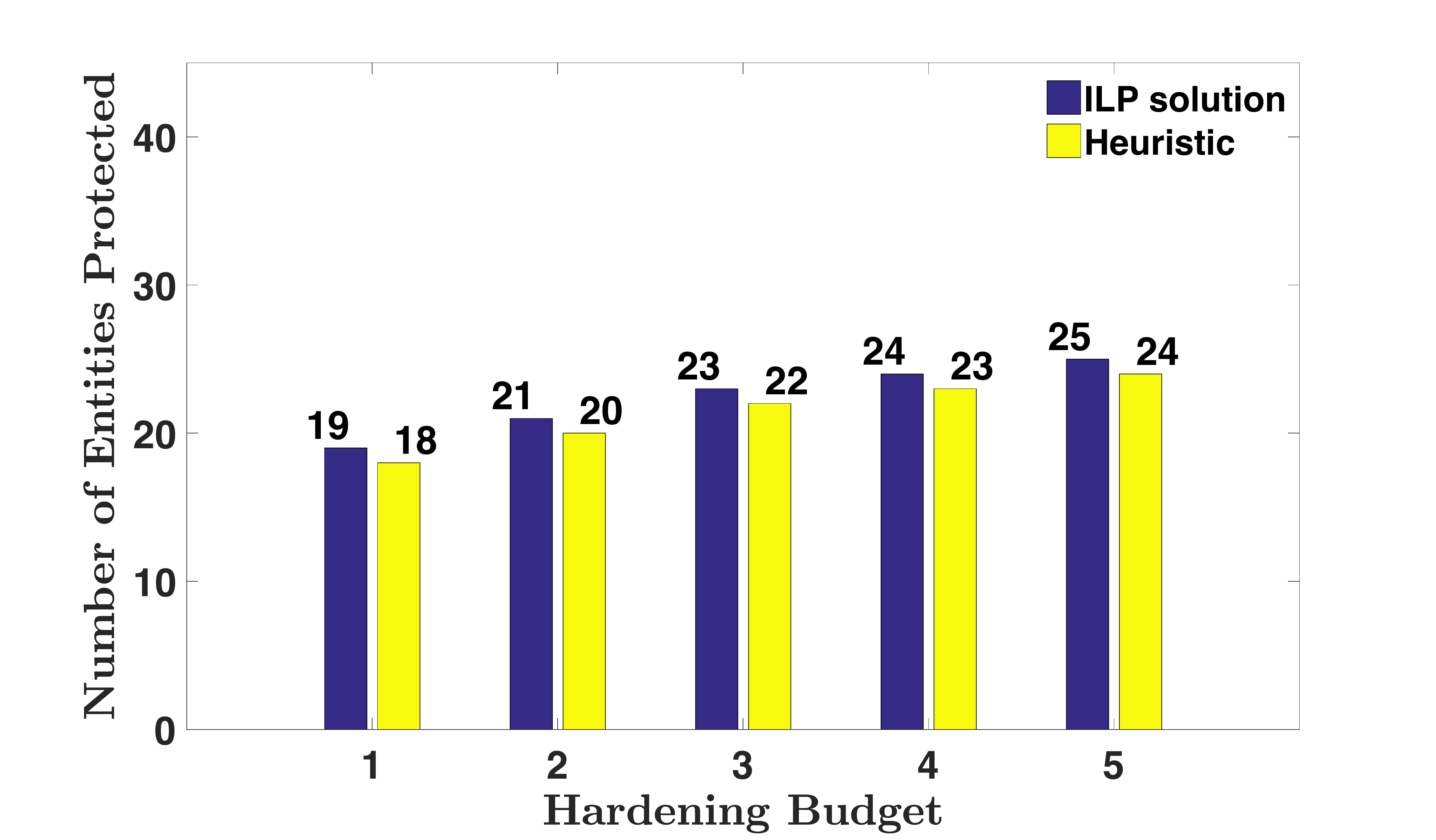}
  \captionsetup{justification=centering}
  \captionof{figure}{Comparison of ILP solution with Heuristic for Region 1 (ENH)}
  \label{fig:D1}
\end{minipage}
\end{minipage}

\noindent\begin{minipage}{\textwidth}
\begin{minipage}[b]{0.3\linewidth}
\centering
\includegraphics[width=5cm,height=5cm,keepaspectratio]{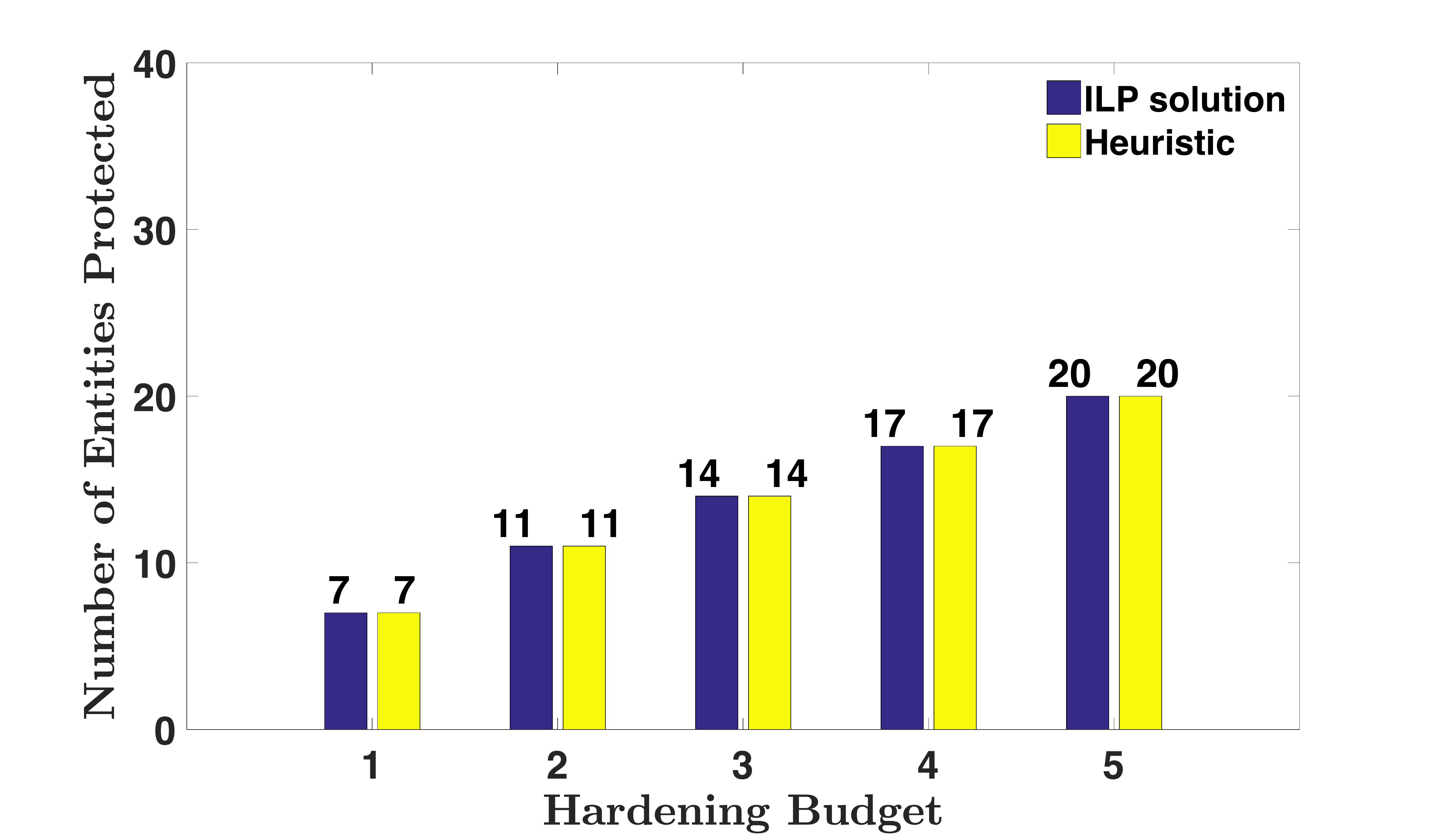}
  \captionsetup{justification=centering}
  \captionof{figure}{Comparison of ILP solution with Heuristic for Region 2 (ENH)}
  \label{fig:D2}
\end{minipage} \hfil{}
\begin{minipage}[b]{0.3\linewidth}
\centering
\includegraphics[width=5cm,height=5cm,keepaspectratio]{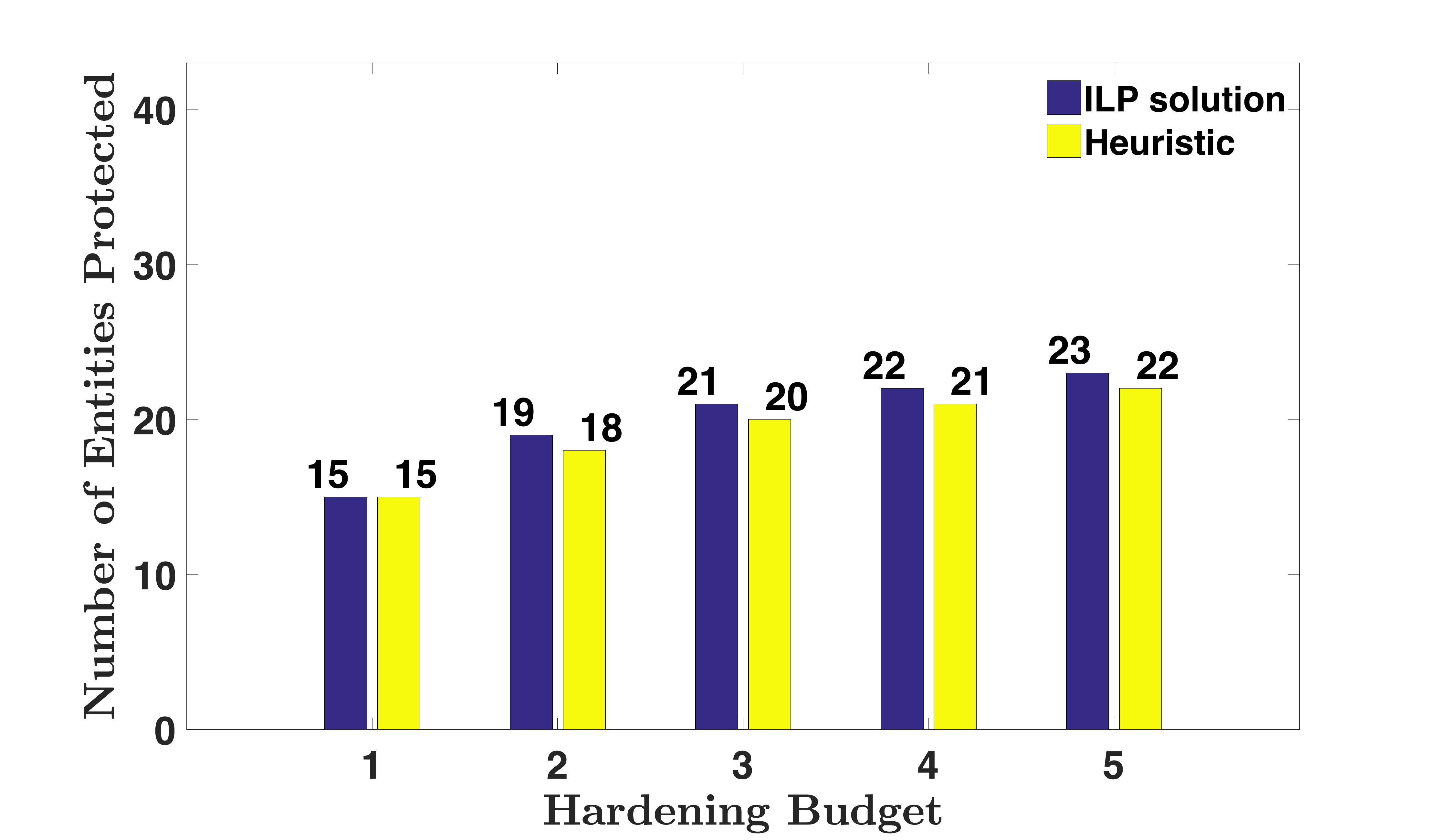}
  \captionsetup{justification=centering}
  \captionof{figure}{Comparison of ILP solution with Heuristic for Region 3 (ENH)}
  \label{fig:D3}
\end{minipage}
\hfil{}
\begin{minipage}[b]{0.3\linewidth}
\centering
\includegraphics[width=5cm,height=5cm,keepaspectratio]{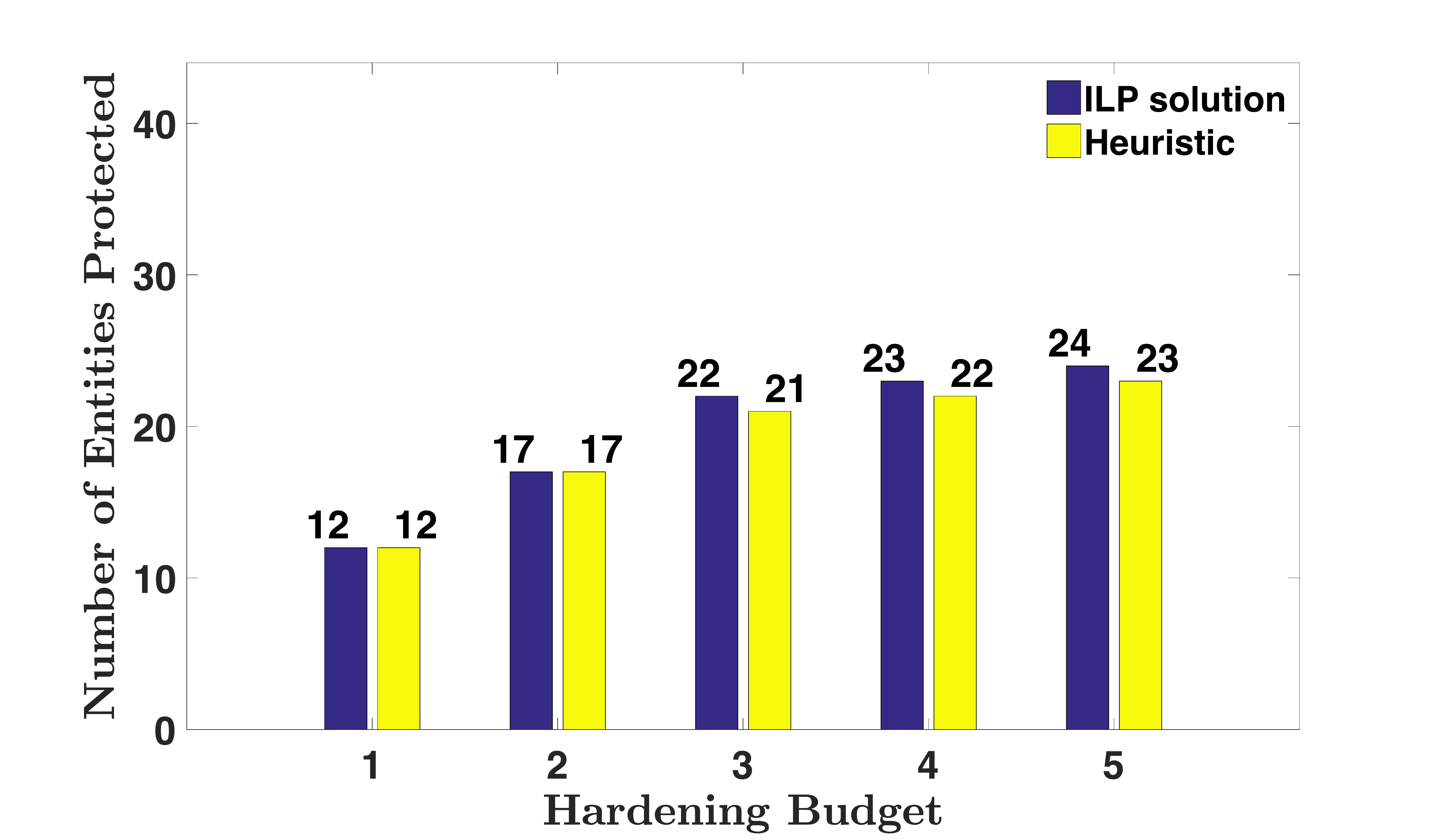}
  \captionsetup{justification=centering}
  \captionof{figure}{Comparison of ILP solution with Heuristic for Region 4 (ENH)}
  \label{fig:D4}
\end{minipage}
\end{minipage}

\begin{table}[ht]
    \centering
		\begin{tabular}{|c|c|c||c|c||c|c||c|c||c|c|}  \hline
			\multicolumn{1}{|c|}{} & \multicolumn{10}{c|}{\bf{Running time (in sec)}}\\ \cline{2-11}
			\multicolumn{1}{|c}{\bf{DataSet}} & \multicolumn{2}{|c||}{\bf{H1}} & \multicolumn{2}{|c||}{\bf{H2}} & \multicolumn{2}{|c||}{\bf{H3}} & \multicolumn{2}{|c||}{\bf{H4}} & \multicolumn{2}{|c|}{\bf{H5}}\\ \cline{2-11}
			\cline{2-11} & \bf{ILP} & \bf{Heu} & \bf{ILP} & \bf{Heu} & \bf{ILP} & \bf{Heu} & \bf{ILP} & \bf{Heu} & \bf{ILP} & \bf{Heu} \\ \hline
			\bf{24 bus} & $0.45$ & $0.01$ & $0.25$ & $0.01$ & $0.72$ & $0.01$ & $0.23$ & $0.01$ & $0.21$ & $0.01$\\ \hline
			\bf{30 bus} & $2.44$ & $0.01$ & $0.38$ & $0.01$ & $0.38$ & $0.01$ & $0.35$ & $0.01$ & $0.34$ & $0.01$\\ \hline
		    \bf{39 bus} & $0.80$ & $0.01$ & $0.50$ & $0.01$ & $0.49$ & $0.01$ & $0.48$ & $0.01$ & $0.47$ & $0.01$\\ \hline
		    \bf{57 bus} & $2.67$ & $0.03$ & $1.73$ & $0.01$ & $2.21$ & $0.01$ & $2.27$ & $0.01$ & $1.68$ & $0.01$\\ \hline
			\bf{89 bus} & $23.2$ & $0.05$ & $14.6$ & $0.03$ & $14.6$ & $0.03$ & $14.5$ & $0.03$ & $14.7$ & $0.75$\\ \hline	
			\bf{118 bus} & $20.9$ & $0.04$ & $16.2$ & $0.06$ & $17.2$ & $0.09$ & $17.1$ & $0.02$ & $17.1$ & $0.02$\\ \hline
			\bf{145 bus} & $85.2$ & $0.05$ & $71.0$ & $0.10$ & $71.3$ & $0.18$ & $68.4$ & $0.06$ & $78.3$ & $0.07$\\ \hline
			\bf{300 bus} & $282$ & $0.15$ & $222$ & $1.56$ & $217$ & $0.85$ & $253$ & $0.39$ & $264$ & $0.40$\\ \hline
			\bf{Region 1} & $0.53$ & $0.01$ & $0.36$ & $0.01$ & $0.34$ & $0.01$ & $0.36$ & $0.01$ & $0.36$ & $0.01$\\ \hline
			\bf{Region 2} & $13.8$ & $0.01$ & $12.9$ & $0.01$ & $12.8$ & $0.01$ & $13.1$ & $0.01$ & $13.2$ & $0.01$\\ \hline
			\bf{Region 3} & $1.92$ & $0.01$ & $1.36$ & $0.01$ & $1.29$ & $0.01$ & $1.31$ & $0.01$ & $1.44$ & $0.01$\\ \hline
			\bf{Region 4} & $1.48$ & $0.01$ & $1.43$ & $0.01$ & $1.10$ & $0.01$ & $1.06$ & $0.01$ & $1.05$ & $0.01$\\ \hline
		\end{tabular}
		\captionsetup{justification=centering}
		\caption{Run time comparison of Integer Linear Program and Heuristic for different Data Sets (ENH)}
		\protect\label{tbl:RuntimeH}
\end{table}

\noindent\begin{minipage}{\textwidth}
\begin{minipage}[b]{0.3\linewidth}
\centering
\includegraphics[width=5cm,height=5cm,keepaspectratio]{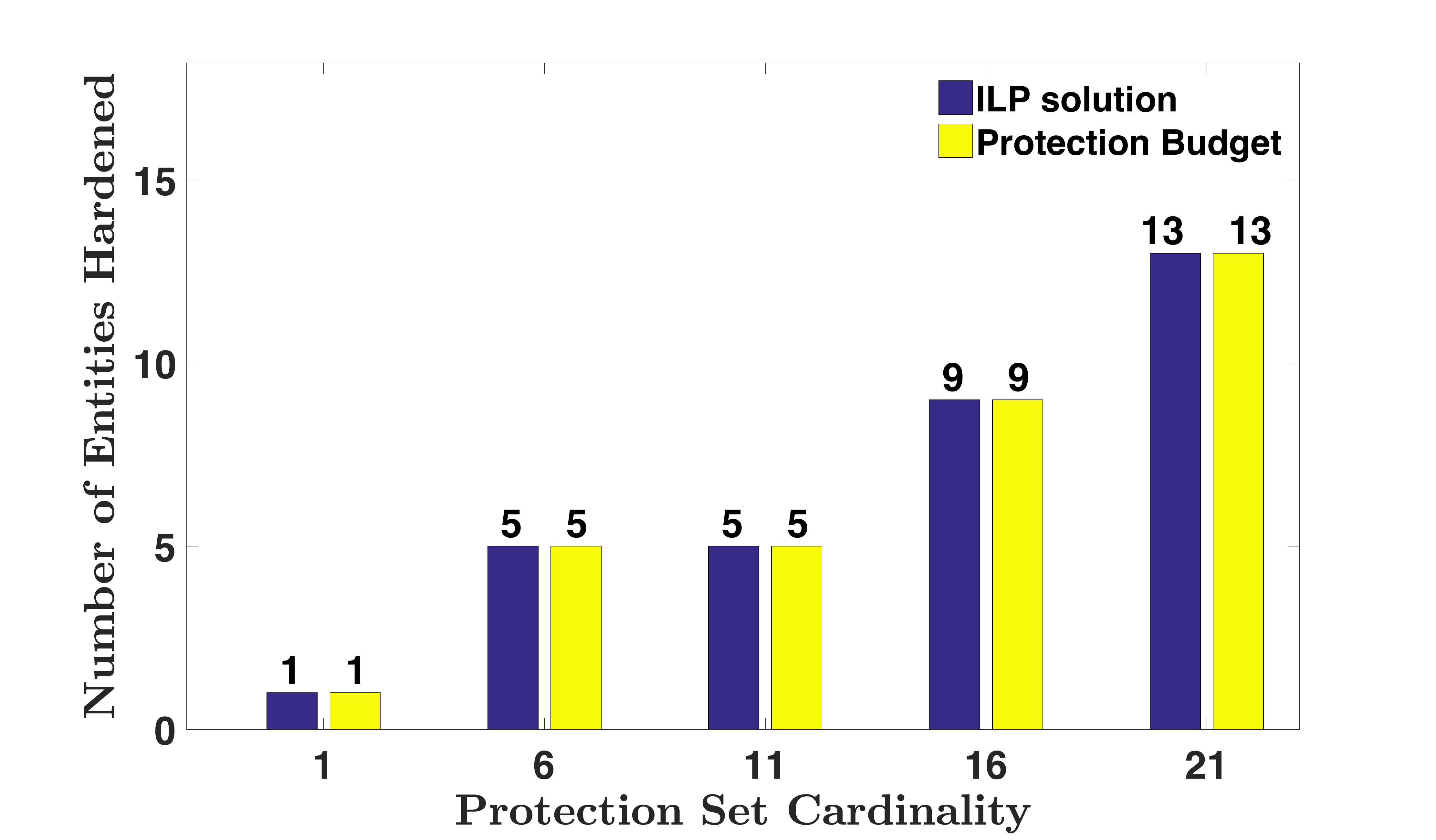}
  \captionsetup{justification=centering}
  \captionof{figure}{Comparison of ILP solution with Heuristic for 24 bus system (TEH)}
  \label{fig:case24T}
\end{minipage} \hfil{}
\begin{minipage}[b]{0.3\linewidth}
\centering
\includegraphics[width=5cm,height=5cm,keepaspectratio]{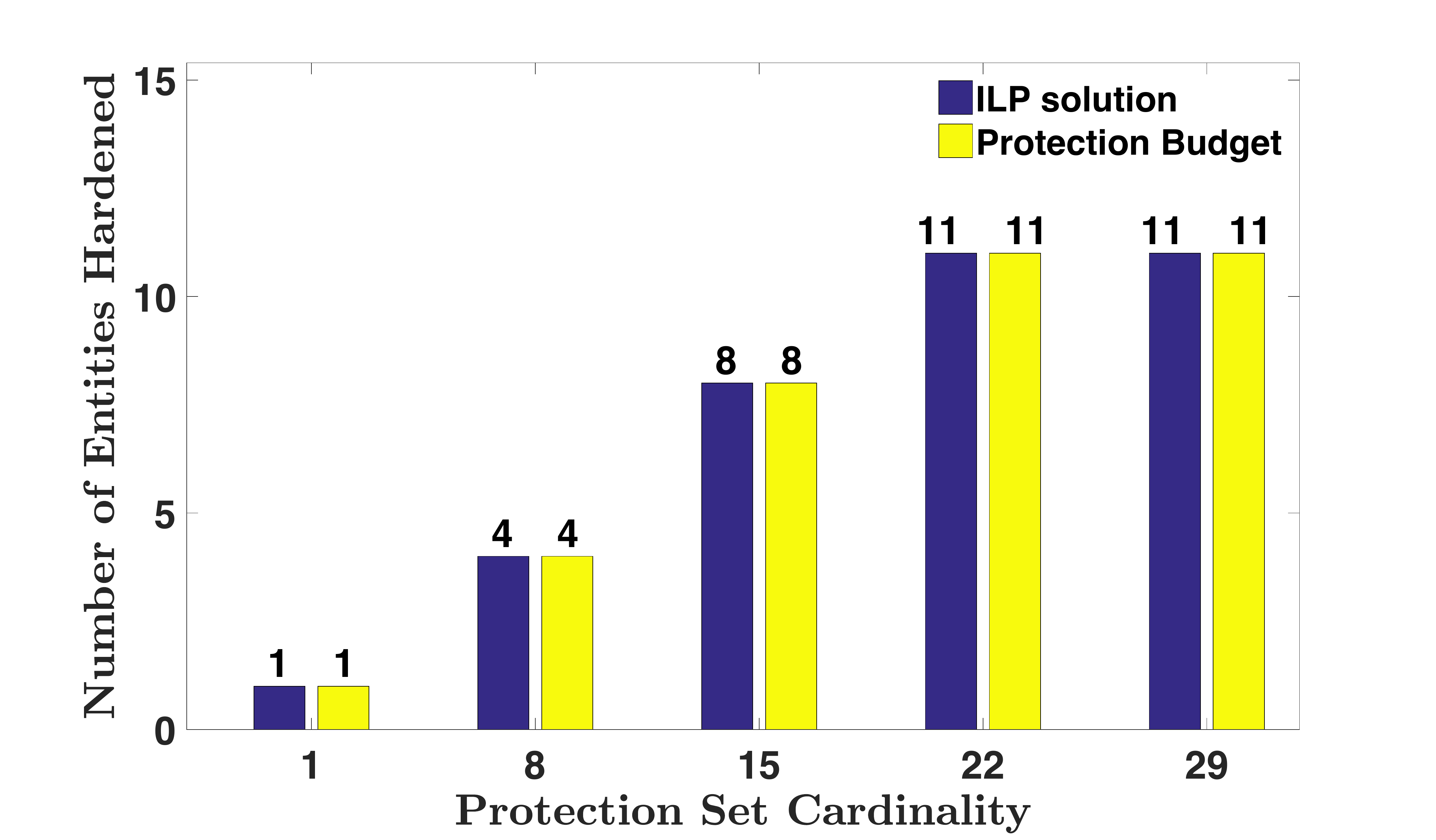}
  \captionsetup{justification=centering}
  \captionof{figure}{Comparison of ILP solution with Heuristic for 30 bus system (TEH)}
  \label{fig:case30T}
\end{minipage}
\hfil{}
\begin{minipage}[b]{0.3\linewidth}
\centering
\includegraphics[width=5cm,height=5cm,keepaspectratio]{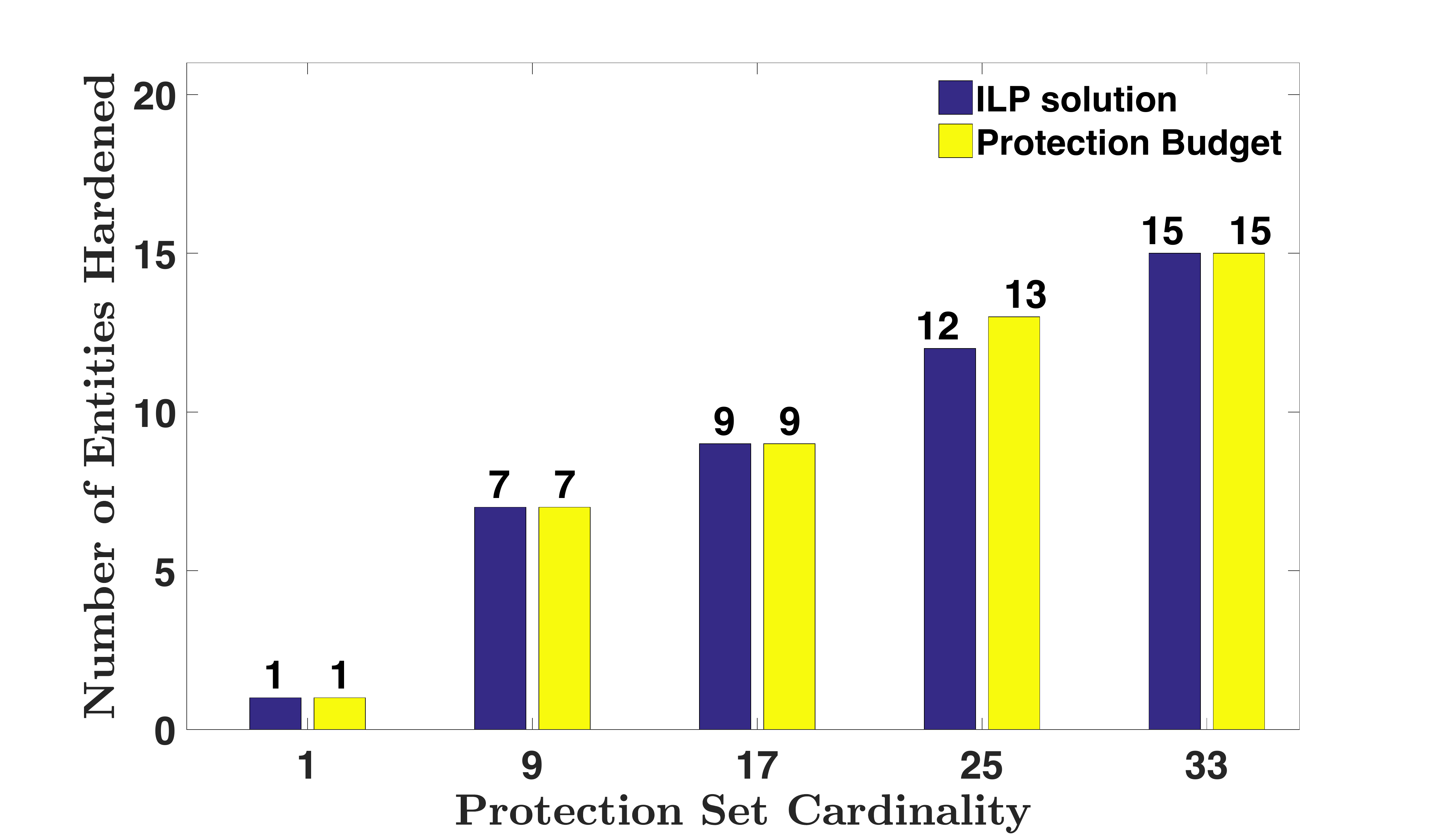}
  \captionsetup{justification=centering}
  \captionof{figure}{Comparison of ILP solution with Heuristic for 39 bus system (TEH)}
  \label{fig:case39T}
\end{minipage}
\end{minipage}

\noindent\begin{minipage}{\textwidth}
\begin{minipage}[b]{0.3\linewidth}
\centering
\includegraphics[width=5cm,height=5cm,keepaspectratio]{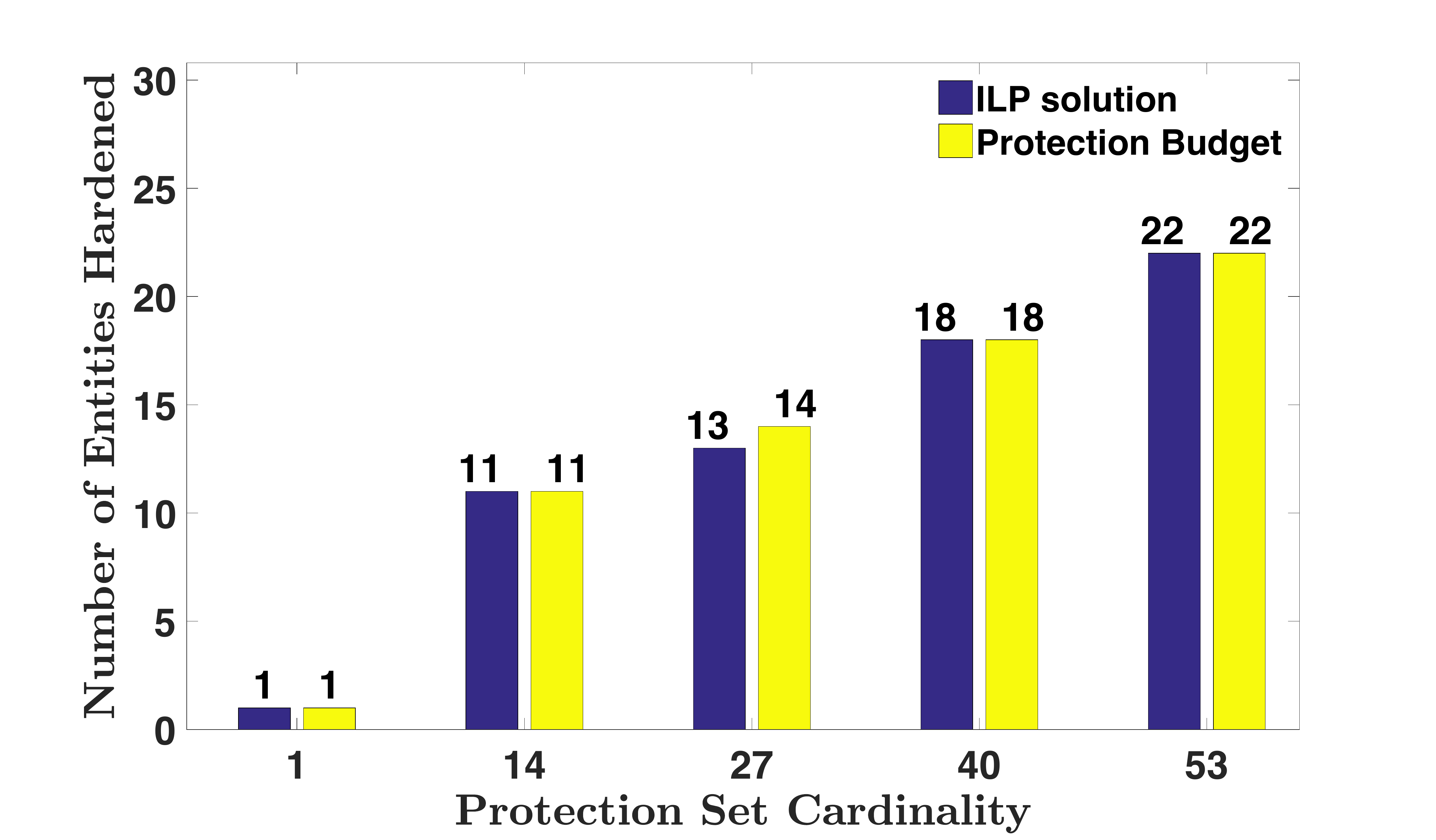}
  \captionsetup{justification=centering}
  \captionof{figure}{Comparison of ILP solution with Heuristic for 57 bus system (TEH)}
  \label{fig:case57T}
\end{minipage} \hfil{}
\begin{minipage}[b]{0.3\linewidth}
\centering
\includegraphics[width=5cm,height=5cm,keepaspectratio]{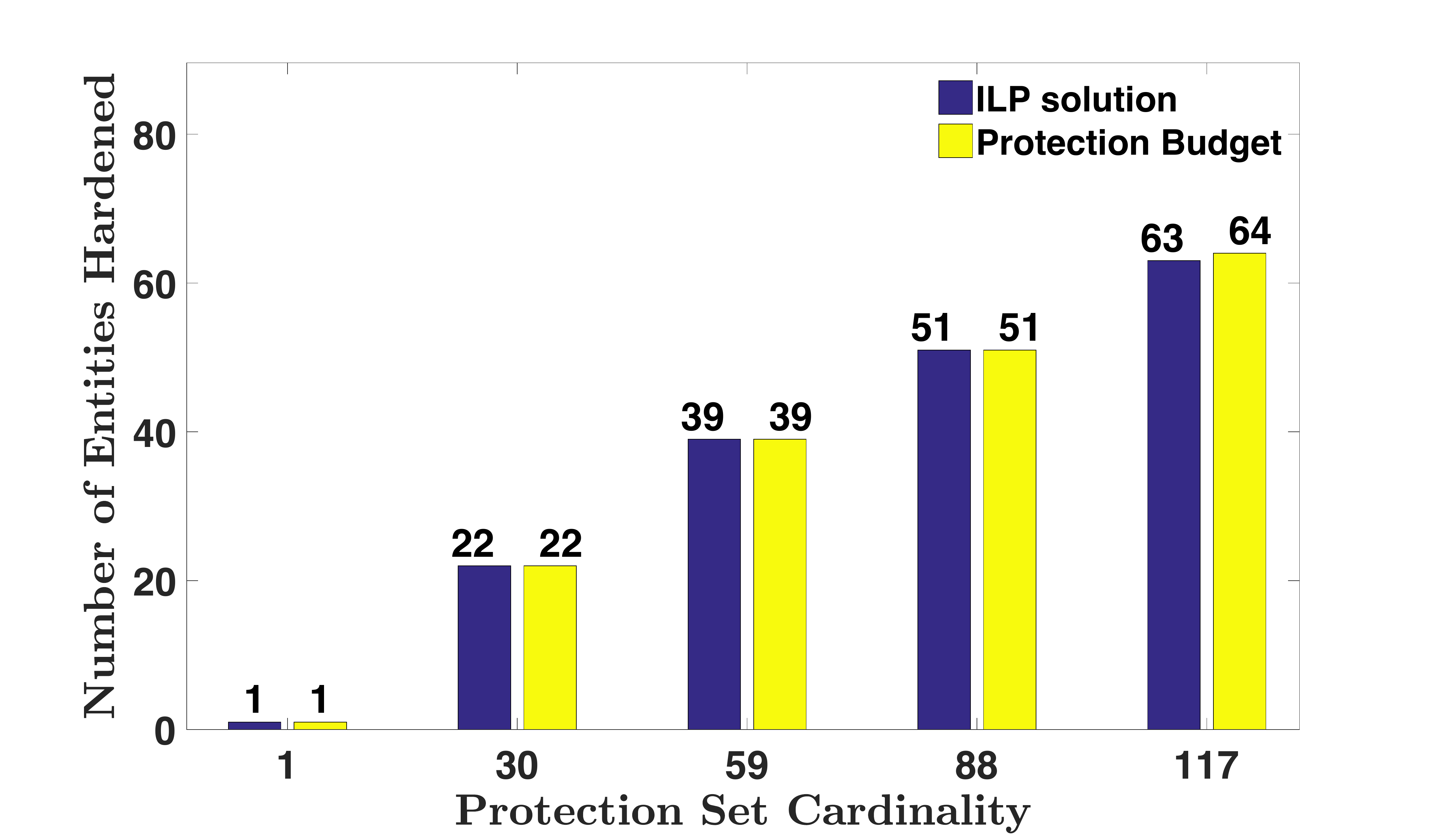}
  \captionsetup{justification=centering}
  \captionof{figure}{Comparison of ILP solution with Heuristic for 89 bus system (TEH)}
  \label{fig:case89T}
\end{minipage}
\hfil{}
\begin{minipage}[b]{0.3\linewidth}
\centering
\includegraphics[width=5cm,height=5cm,keepaspectratio]{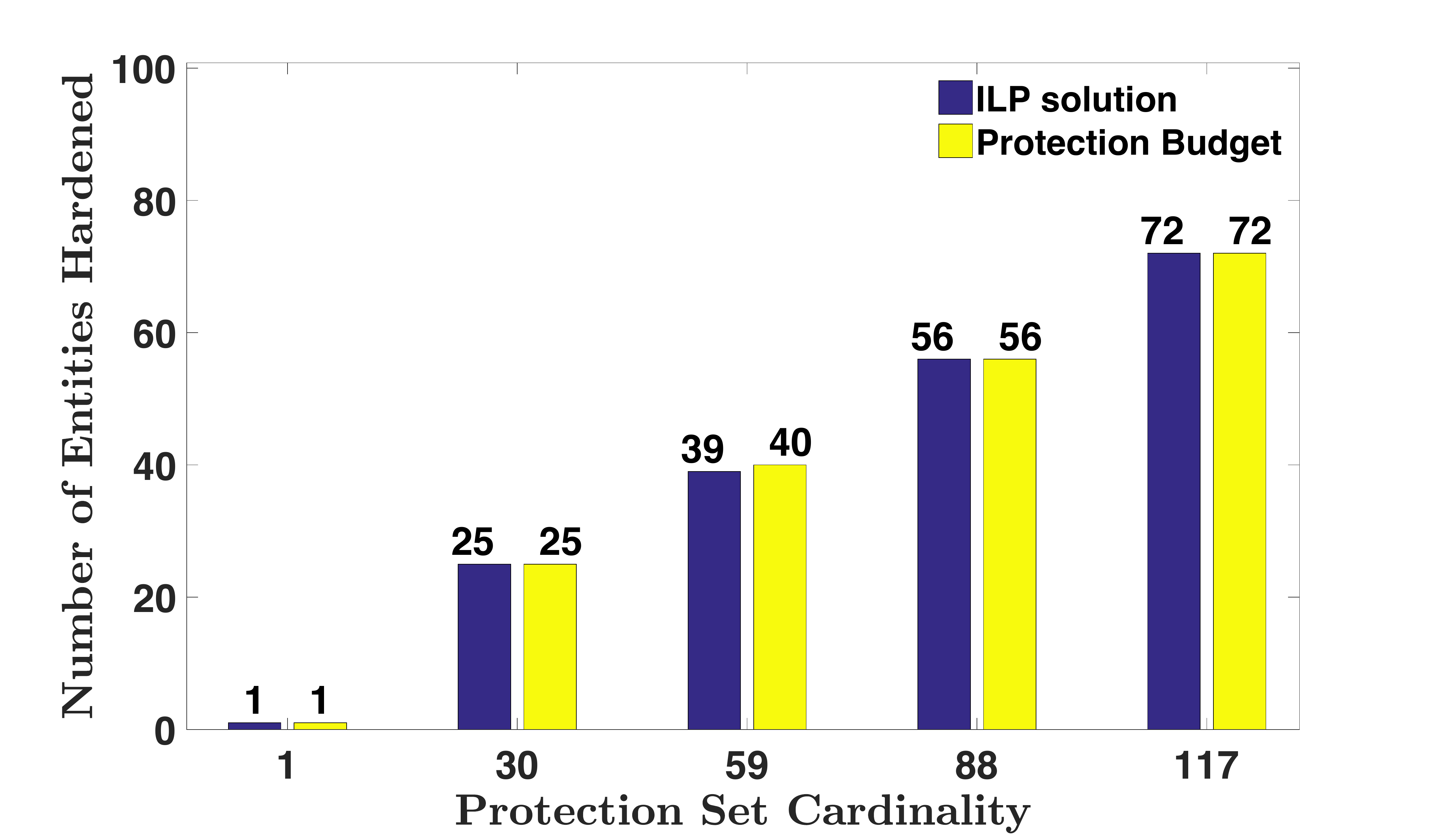}
  \captionsetup{justification=centering}
  \captionof{figure}{Comparison of ILP solution with Heuristic for 118 bus system (TEH)}
  \label{fig:case118T}
\end{minipage}
\end{minipage}

\noindent\begin{minipage}{\textwidth}
\begin{minipage}[b]{0.3\linewidth}
\centering
\includegraphics[width=5cm,height=5cm,keepaspectratio]{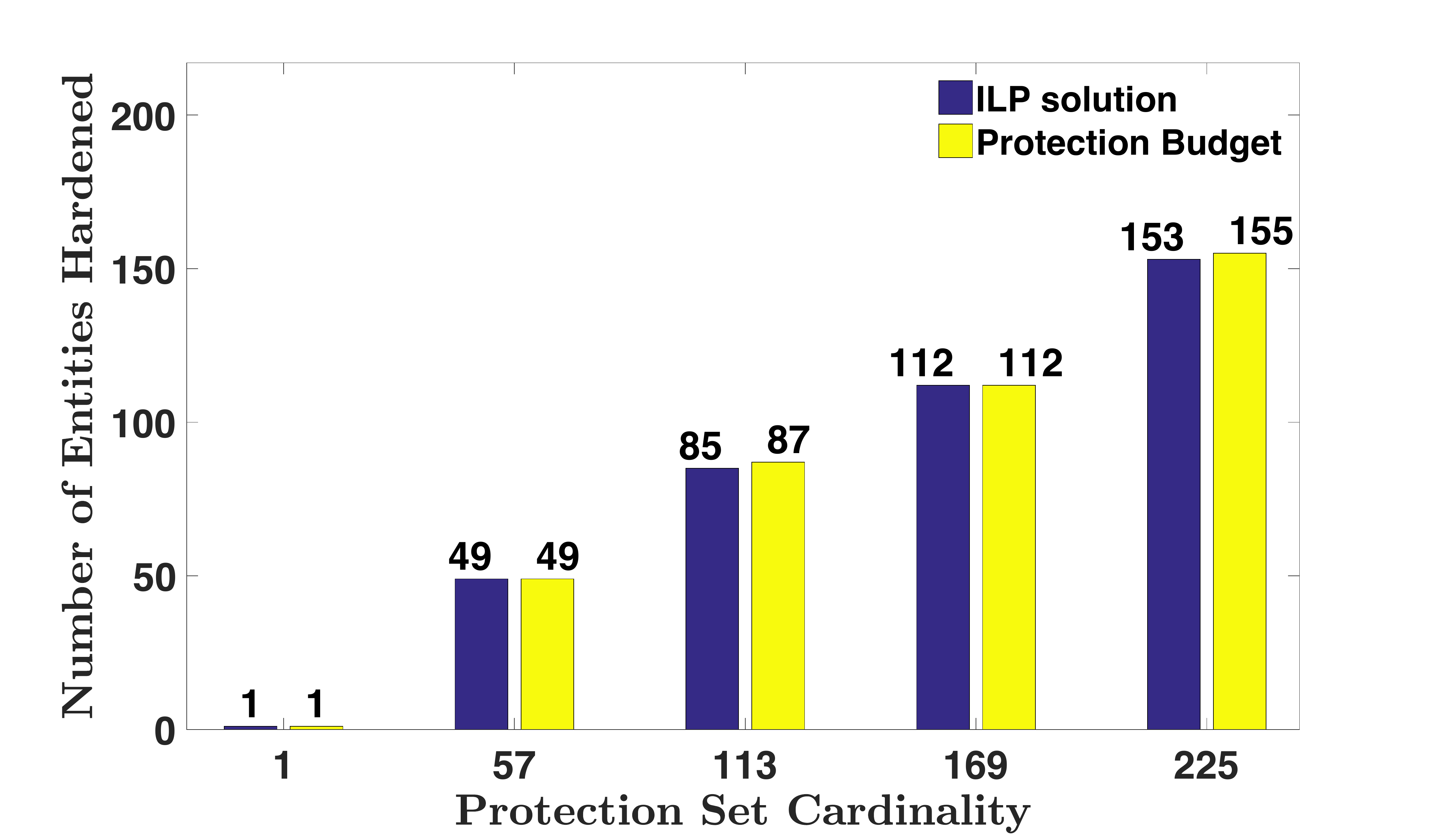}
  \captionsetup{justification=centering}
  \captionof{figure}{Comparison of ILP solution with Heuristic for 145 bus system (TEH)}
  \label{fig:case145T}
\end{minipage} \hfil{}
\begin{minipage}[b]{0.3\linewidth}
\centering
\includegraphics[width=5cm,height=5cm,keepaspectratio]{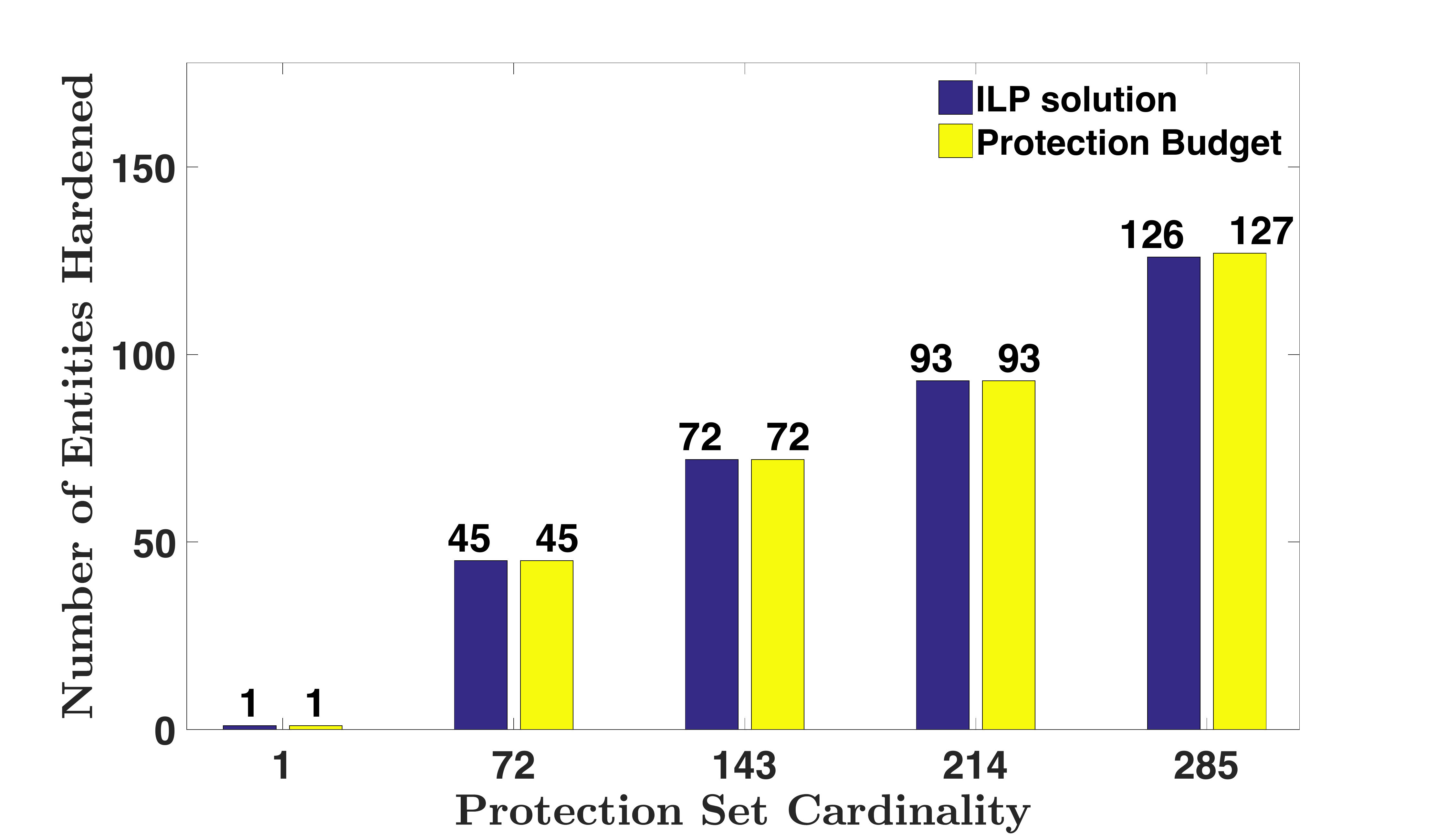}
  \captionsetup{justification=centering}
  \captionof{figure}{Comparison of ILP solution with Heuristic for 300 bus system (TEH)}
  \label{fig:case300T}
\end{minipage}
\hfil{}
\begin{minipage}[b]{0.3\linewidth}
\centering
\includegraphics[width=5cm,height=5cm,keepaspectratio]{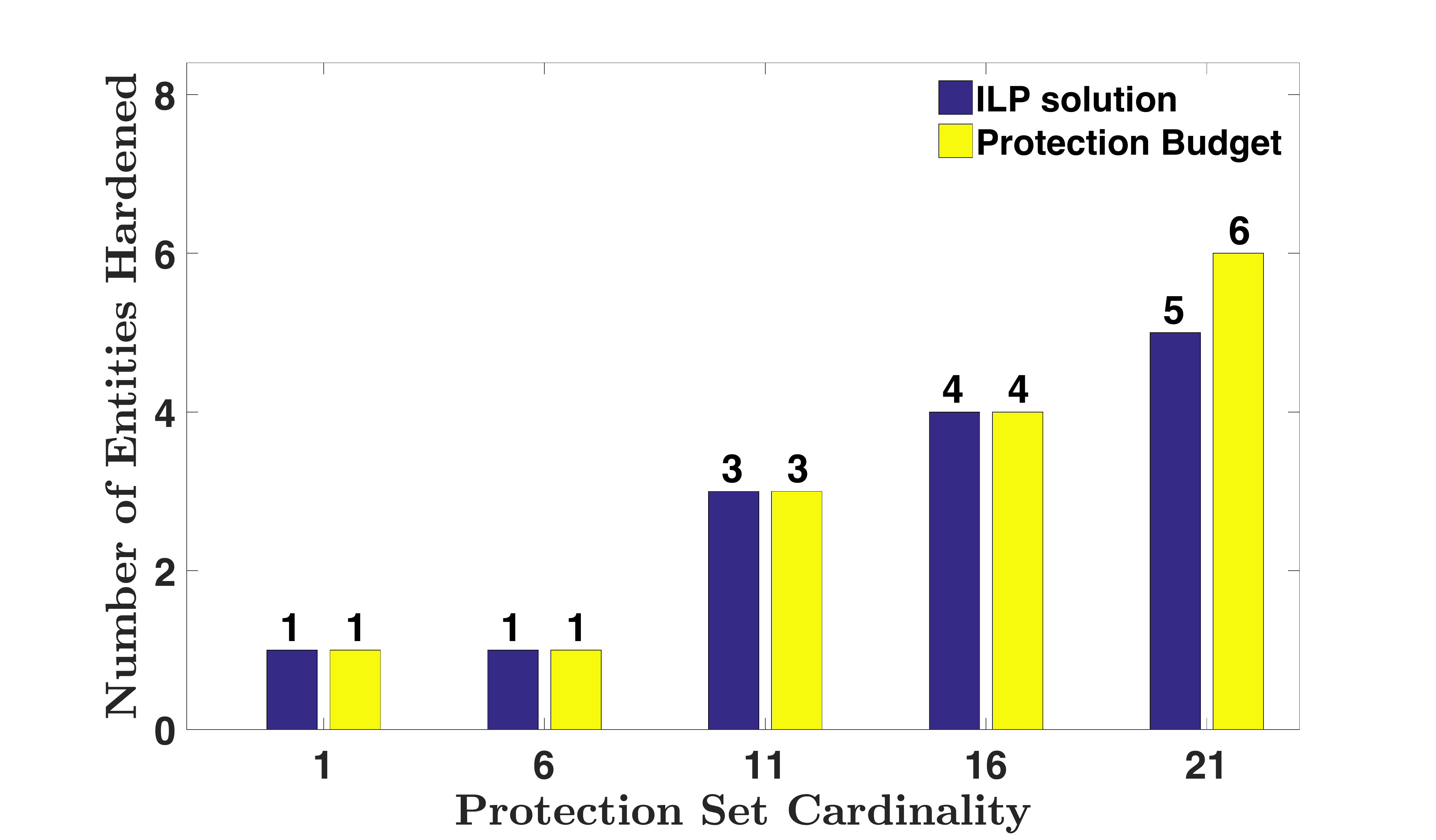}
  \captionsetup{justification=centering}
  \captionof{figure}{Comparison of ILP solution with Heuristic for Region 1 (TEH)}
  \label{fig:D1T}
\end{minipage}
\end{minipage}

\noindent\begin{minipage}{\textwidth}
\begin{minipage}[b]{0.3\linewidth}
\centering
\includegraphics[width=5cm,height=5cm,keepaspectratio]{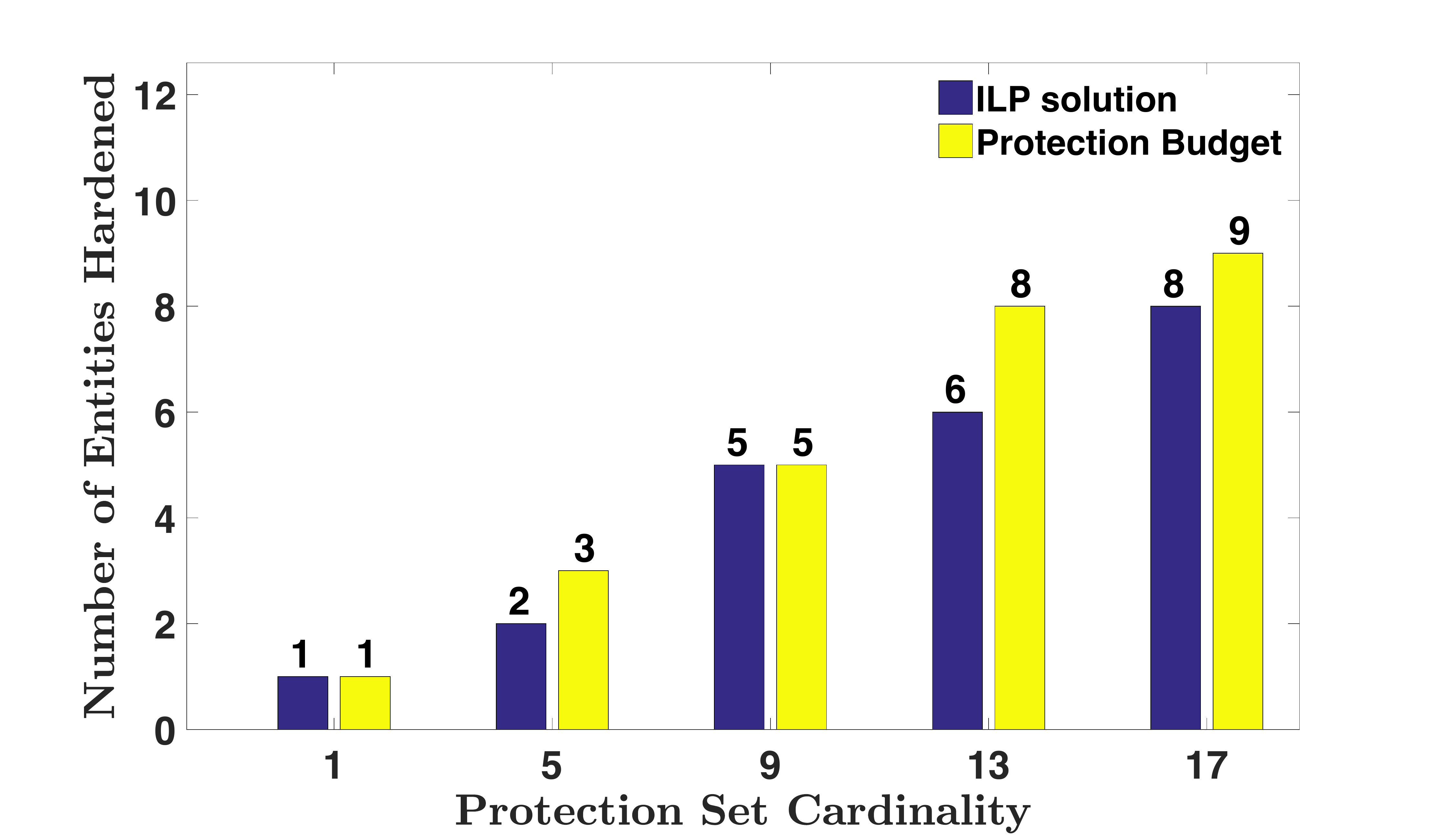}
  \captionsetup{justification=centering}
  \captionof{figure}{Comparison of ILP solution with Heuristic for Region 2 (TEH)}
  \label{fig:D2T}
\end{minipage} \hfil{}
\begin{minipage}[b]{0.3\linewidth}
\centering
\includegraphics[width=5cm,height=5cm,keepaspectratio]{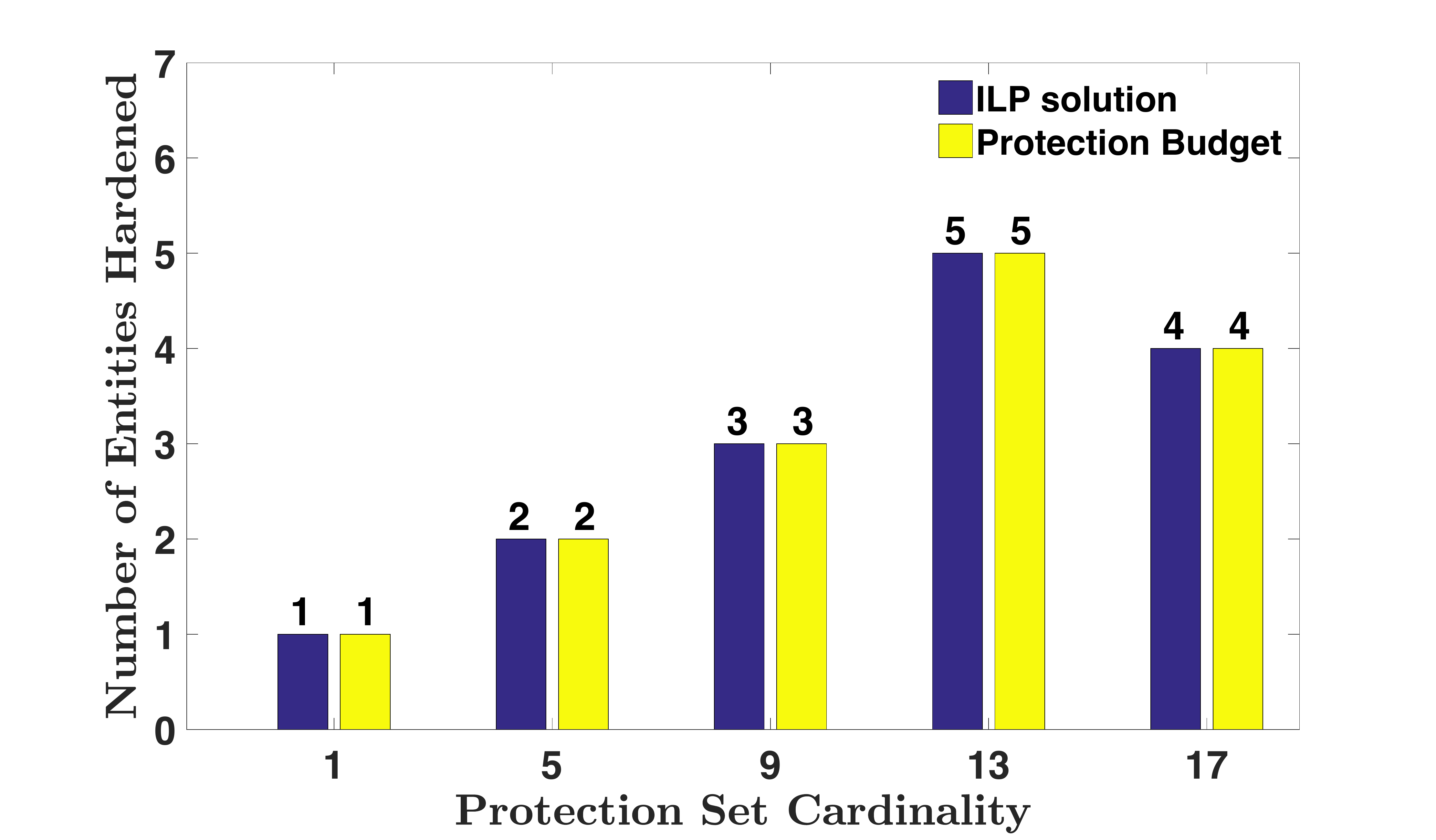}
  \captionsetup{justification=centering}
  \captionof{figure}{Comparison of ILP solution with Heuristic for Region 3 (TEH)}
  \label{fig:D3T}
\end{minipage}
\hfil{}
\begin{minipage}[b]{0.3\linewidth}
\centering
\includegraphics[width=5cm,height=5cm,keepaspectratio]{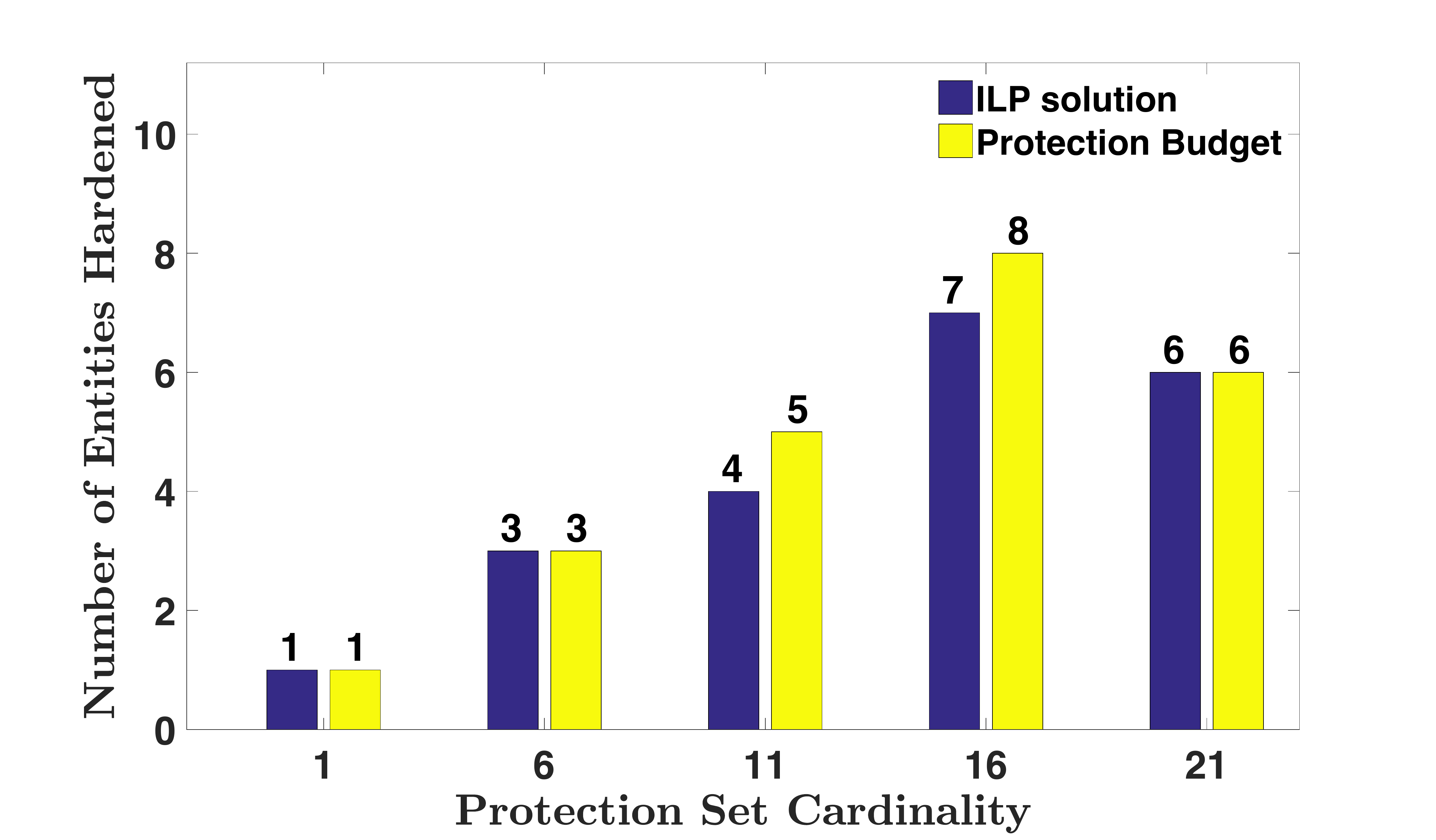}
  \captionsetup{justification=centering}
  \captionof{figure}{Comparison of ILP solution with Heuristic for Region 4 (TEH)}
  \label{fig:D4T}
\end{minipage}
\end{minipage}

\begin{table}[ht]
    \centering
		\begin{tabular}{|c|c|c||c|c||c|c||c|c||c|c|}  \hline
			\multicolumn{1}{|c|}{} & \multicolumn{10}{c|}{\bf{Running time (in sec)}}\\ \cline{2-11}
			\multicolumn{1}{|c}{\bf{DataSet}} & \multicolumn{2}{|c||}{\bf{P1}} & \multicolumn{2}{|c||}{\bf{P2}} & \multicolumn{2}{|c||}{\bf{P3}} & \multicolumn{2}{|c||}{\bf{P4}} & \multicolumn{2}{|c|}{\bf{P5}}\\ \cline{2-11}
			\cline{2-11} & \bf{ILP} & \bf{Heu} & \bf{ILP} & \bf{Heu} & \bf{ILP} & \bf{Heu} & \bf{ILP} & \bf{Heu} & \bf{ILP} & \bf{Heu} \\ \hline
			\bf{24 bus} & $0.42$ & $0.01$ & $0.22$ & $0.01$ & $0.20$ & $0.01$ & $0.19$ & $0.01$ & $0.19$ & $0.01$\\ \hline
			\bf{30 bus} & $0.57$ & $0.01$ & $0.37$ & $0.01$ & $0.34$ & $0.01$ & $0.34$ & $0.01$ & $0.31$ & $0.01$\\ \hline
		    \bf{39 bus} & $0.82$ & $0.01$ & $0.49$ & $0.01$ & $0.49$ & $0.01$ & $0.46$ & $0.01$ & $0.47$ & $0.01$\\ \hline
		    \bf{57 bus} & $2.23$ & $0.02$ & $1.69$ & $0.02$ & $2.00$ & $0.02$ & $2.07$ & $0.02$ & $1.91$ & $0.01$\\ \hline
			\bf{89 bus} & $17.3$ & $0.05$ & $14.4$ & $0.08$ & $14.4$ & $0.16$ & $14.0$ & $0.11$ & $14.1$ & $0.07$\\ \hline	
			\bf{118 bus} & $17.6$ & $0.13$ & $17.3$ & $0.15$ & $16.9$ & $0.10$ & $16.3$ & $0.08$ & $17.0$ & $0.07$\\ \hline
			\bf{145 bus} & $79.0$ & $0.06$ & $76.6$ & $0.26$ & $77.2$ & $0.27$ & $75.7$ & $0.24$ & $74.9$ & $0.25$\\ \hline
			\bf{300 bus} & $302$ & $0.18$ & $241$ & $1.01$ & $234$ & $1.41$ & $229$ & $1.03$ & $230$ & $1.20$\\ \hline
			\bf{Region 1} & $0.55$ & $0.01$ & $0.40$ & $0.01$ & $0.43$ & $0.01$ & $0.35$ & $0.01$ & $0.34$ & $0.01$\\ \hline
			\bf{Region 2} & $14.5$ & $0.01$ & $13.5$ & $0.01$ & $13.4$ & $0.01$ & $13.4$ & $0.01$ & $13.3$ & $0.01$\\ \hline
			\bf{Region 3} & $1.58$ & $0.01$ & $1.40$ & $0.01$ & $1.29$ & $0.01$ & $1.29$ & $0.01$ & $1.29$ & $0.01$\\ \hline
			\bf{Region 4} & $1.36$ & $0.01$ & $1.12$ & $0.01$ & $1.21$ & $0.01$ & $1.09$ & $0.01$ & $1.03$ & $0.01$\\ \hline
		\end{tabular}
		\captionsetup{justification=centering}
		\caption{Run time comparison of Integer Linear Program and Heuristic for different Data Sets (TEH)}
		\protect\label{tbl:RuntimeT}
\end{table}

\section{Conclusion}
\label{Conclusion}
In this paper, we have studied the Entity Hardening problem and the Targeted Entity Hardening problem in Critical Infrastructure network(s). We have used the IIM model to capture the interdependencies  and dependencies that exist between power-communication network and power network in isolation respectively. Using such a model, we have formulated the Entity Hardening and the Targeted Entity Hardening problems. Both problems are proved to be NP-Complete. For both the problems, the optimal solution, obtained from the ILP, is compared with the developed heuristic solution using dependency equations generated from power-communication network data of the Maricopa County, Arizona  and power network data of different bus systems obtained from MatPower. As noted in the Experimental Analysis for both the problems the performance of the heuristics are comparable to that of ILP and solutions are produced in much lesser time.

\end{document}